\newtheorem{theorem}{Theorem}
\newtheorem{definition}[theorem]{Definition}
\newtheorem{corollary}[theorem]{Corollary}
\newtheorem{observation}[theorem]{Observation}
\newtheorem{lemma}[theorem]{Lemma}
\newcommand{\expec}[1]{\mathbb E\left [ #1 \right ]}
\newcommand{\var}[1]{\mathbb V\left [ #1 \right ]}
\newcommand{\prob}[1]{\mathbb P \left [ #1 \right ]}
\newcommand{\sou}{|U|}
\DeclareMathOperator{\polylog}{polylog}
\DeclareMathOperator{\poly}{poly}
\DeclareMathOperator{\matching}{match}
\DeclareMathOperator{\match}{match}
\DeclareMathOperator{\sample}{\mathsf{Sample}}
\DeclareMathOperator{\opt}{opt}
\newcommand{\nofrac}[2]{\frac{#1}{#2}}
\newcommand{\eat}[1]{}
\DeclareMathOperator{\vc}{vc}
\DeclareMathOperator{\hs}{hs}
\newcommand{\F}{{\cal F}}
\newcommand{\calP}{{\cal P}}
\newcommand{\calF}{{\cal F}}
\newcommand*\samethanks[1][\value{footnote}]{\footnotemark[#1]}
\renewcommand{\paragraph}[1]{\vspace{0.05in}\noindent{\bf \boldmath #1}}
\title{Kernelization via Sampling \\
with Applications to Dynamic Graph Streams}
\author{\normalsize
Rajesh Chitnis\thanks{
The Weizmann Institute of Science, Rehovot, Israel. Supported by a postdoctoral fellowship from I-CORE ALGO. Email: \texttt{rajesh.chitnis@weizmann.ac.il} }
\and \normalsize
Graham Cormode\thanks{Department of Computer Science, University of
  Warwick, UK. Supported in part by the Yahoo Faculty Research and Engagement Program
and a Royal Society Wolfson Research Merit Award. Email: \texttt{g.cormode@warwick.ac.uk.}
}
\and  \normalsize
Hossein Esfandiari\thanks{Department of Computer Science, University of Maryland. Supported in part by NSF CAREER award 1053605, NSF Grant CCF-1161626, ONR YIP award N000141110662,
DARPA/AFOSR grant FA9550-12-1-0423, and a Google Faculty Research award. Email: \texttt{\{hossein, hajiagha\}@cs.umd.edu}}
    \and \normalsize
 MohammadTaghi Hajiaghayi\samethanks[3]
    \and \normalsize
Andrew McGregor\thanks{University of Massachusetts Amherst. Supported by NSF CAREER Award CCF-0953754 and CCF-1320719 and a Google Faculty Research Award. Email: \texttt{\{mcgregor,svorotni\}@cs.umass.edu}}
    \and \normalsize
Morteza Monemizadeh\thanks{Computer Science Institute of Charles University,
Faculty of Mathematics and Physics, Prague, Czech Republic. Partially supported by the project 14-10003S of GA \v{C}R. Part of this work was done when the author was at Department of Computer Science, Goethe-Universit\"{a}t Frankfurt, Germany and supported in part by MO 2200/1-1. Email: \texttt{monemi@iuuk.mff.cuni.cz}}
    \and \normalsize
Sofya Vorotnikova\samethanks[4]
}
\date{}
\begin{document}
\maketitle
\thispagestyle{empty}

\begin{abstract}
In this paper we present a simple but powerful subgraph sampling primitive that is applicable in a variety of computational models including dynamic graph streams (where the input graph is defined by a sequence of edge/hyperedge insertions and deletions) and distributed systems such as MapReduce. In the case of dynamic graph streams, we use this primitive to prove the following results:
\begin{itemize}
\item {\em Matching:} First, there exists an $\tilde{O}(k^2)$ space algorithm that returns an exact maximum matching on the assumption the cardinality is at most $k$. The best previous algorithm used $\tilde{O}(kn)$ space where $n$ is the number of vertices in the graph
and we prove our result is optimal up to logarithmic factors. Our algorithm has $\tilde{O}(1)$ update time. 
Second, there exists an $\tilde{O}(n^2/\alpha^3)$ space algorithm that returns an $\alpha$-approximation for matchings of arbitrary size. (Assadi et al.~\cite{AssadiKLY15} showed that this was optimal and independently and concurrently established the same upper bound.) We  generalize both results for weighted matching. Third, there exists an $\tilde{O}(n^{4/5})$ space algorithm that returns a constant approximation in graphs with bounded arboricity. 
While there has been a substantial amount of work on approximate matching in insert-only graph streams, these are the first non-trivial results  in the dynamic setting. 

\item {\em Vertex Cover and Hitting Set:} There exists an $\tilde{O}(k^d)$ space algorithm that solves the minimum hitting set problem where $d$ is the cardinality of the input sets and $k$ is an upper bound on the size of the minimum hitting set. We prove this is  optimal up to logarithmic factors. Our algorithm has $\tilde{O}(1)$ update time. The case $d=2$ corresponds to minimum vertex cover.
\end{itemize}

Finally, we consider a larger family of parameterized problems (including $b$-matching, disjoint paths, vertex coloring among others) for which our subgraph sampling primitive yields fast, small-space dynamic graph stream algorithms. We then show lower bounds for natural problems outside this family. 
\end{abstract}

\clearpage
\setcounter{page}{1}

\section{Introduction}
Over the last decade, a growing body of work has considered solving graph problems in the data stream model. Most of the early work considered the insert-only variant of the model where the stream consists of edges being added to the graph and the goal is to compute properties of the graph using limited memory. Recently, however, there has been a significant amount of interest in being able to process dynamic graph streams where edges are both added and deleted from the graph\cite{AhnGM12a,AhnGM12b,AhnGM13,KapralovLMMS14,KapralovW14,GoelKP12,KutzkovP14a,BhattacharyaHNT15,AhnCGMW15,GuhaMT15}. These algorithms are all based on the surprising efficacy of using random linear projections, aka linear sketching, for solving combinatorial problems. Results include testing edge connectivity \cite{AhnGM12b} and vertex connectivity \cite{GuhaMT15}, constructing spectral sparsifiers \cite{KapralovLMMS14}, approximating the densest subgraph \cite{BhattacharyaHNT15}, correlation clustering \cite{AhnCGMW15}, and estimating the number of triangles \cite{KutzkovP14a}. For a recent survey of the area, see \cite{McGregor14}.

The concept of \emph{parameterized stream algorithms} was
explored by Chitnis et al.~\cite{ChitnisCHM15} and Fafianie and
Kratsch~\cite{kratsch}. Their work investigated a natural connection
between data streams and parameterized complexity. In parameterized
complexity, the time cost of a problem is analyzed in terms of not
only the input size but also other parameters of the input.
For example, while the classic vertex cover problem is NP complete, it can be solved via a simple branching algorithm
 in time $2^{k}\cdot \poly(n)$ where $k$ is the size of the optimal vertex cover.
An important concept in parameterized complexity is \emph{kernelization} in which the goal is to efficiently transform an instance of a problem into a smaller instance such that the smaller instance is a ``yes'' instance (e.g., has a solution of at least a certain size) iff the original instance was also a ``yes'' instance.
For more background on parameterized complexity and kernelization, see~\cite{DF99, FG06}.
Parameterizing the \emph{space} complexity of a problem in terms of the size of the output is a particularly appealing notion in the context of data stream computation. In particular, the space used by any algorithm that returns an actual solution (as opposed to an estimate of the size of the solution) is necessarily at least the size of the solution.

\paragraph{Our Results and Related Work.}
In this paper we present a simple but powerful subgraph sampling primitive that is applicable in a variety of computational models including dynamic graph streams (where the input graph is defined by a sequence of edge/hyperedge insertions and deletions) and distributed systems such as MapReduce. This primitive will be useful for both parameterized problems whose output has bounded size and for solving problems  where the optimal solution need not be bounded. In the case where the output has bounded size, our results can be  thought of as \emph{kernelization via sampling}, i.e., we sample a relatively small set of edges according to a simple (but not uniform) sampling procedure and can show that the resulting graph has a solution of size at most $k$ iff the original graph has an optimal solution of size at most $k$. We present the subgraph sampling primitive and implementation details in Section \ref{sec:basic:sample}.

\paragraph{Graph Matchings.}
Finding a large matching is the most well-studied graph problem in the data stream model
\cite{AhnG11BM,EpsteinLMS09,AG11,mcgregor2005b,zelke,CrouchS14,mcgregor2005,KapralovKS14,Kapralov13,GoelKK12,KonradMM12,KonradR13}. However, all of the existing single-pass stream algorithms are restricted to the insert-only case, i.e., edges may be inserted but will never be deleted. This restriction is significant: for example, the simple greedy algorithm using $\tilde{O}(n)$ space returns a $2$-approximation if there are no deletions. In contrast, prior to this paper no $o(n)$-approximation was known in the dynamic case when there are both insertions and deletions. Finding an algorithm for the dynamic case of this fundamental graph problem was posed as an open problem in the Bertinoro Data Streams Open Problem List \cite{sublinear_open_64}.

In Section \ref{sec:matching}, we prove the following results for computing matching in the dynamic model. Our first result is an $\tilde{O}(k^2)$ space algorithm that returns a maximum matching on the assumption that its cardinality is at most $k$.   Our algorithm has $\tilde{O}(1)$ update time. The best previous algorithm~\cite{ChitnisCHM15} was the folklore algorithm that collects $\max(\deg(u),2k)$ edges incident to each vertex $u$ and finds the optimal matching amongst these edges. This algorithm can be implemented in $\tilde{O}(kn)$ space where $n$ is the number of vertices in the graph. Indeed obtaining an algorithm with $f(k)$ space, for any function $f$, in the dynamic graph stream case remained as an important open problem~\cite{ChitnisCHM15}.
We can also extend our approach to maximum weighted matching.
Our second result is an  $\tilde{O}(n^2/\alpha^3)$ space algorithm that returns an $\alpha$-approximation for matchings of arbitrary size. For example, this implies an $n^{1/3}$ approximation using $\tilde{O}(n)$ space, commonly known as the \emph{semi-streaming} space restriction \cite{muthukrishnan2003,mcgregor2005}.
Our third result is an $\tilde{O}(n^{4/5})$ space algorithm that returns a constant approximation in graphs with bounded arboricity (such as planar graphs). This result builds upon an approach taken by Esfandiari et al.~\cite{EsfandiariHLMO15} 
for the problem on insert-only graph streams.

\paragraph{Vertex Cover and Hitting Set.}
We next consider the  problem of finding the \emph{minimum vertex cover} and its generalization, \emph{minimum hitting set}. The hitting set problem can be defined in terms of hypergraphs: given a set of hyperedges,  select the minimum set of vertices such that every hyperedge contains at least one of the selected vertices.
If all hyperedges have cardinality two, this is the vertex cover problem.

There is a growing body of work analyzing hypergraphs in the data stream model \cite{GuhaMT15,SahaG09,EmekR14,RadhakrishnanS11,Sun13,KoganK14}. For example, Emek and Ros{\'{e}}n\cite{EmekR14} studied the following set-cover  problem which is closely related to the hitting set problem: given a stream of hyperedges (without deletions), find the minimum subset of these hyperedges such that every vertex is included in at least one of the hyperedges. They present an $O(\sqrt{n})$ approximation streaming algorithm using $\tilde{O}(n)$ space along with results for covering all but a small fraction of the vertices. Another related problem is independent set since the minimum vertex cover is the complement of the maximum independent set. Halld{\'{o}}rsson et al.~\cite{HalldorssonHLS10} presented streaming algorithms for finding large independent sets but these do not imply a result for vertex cover in either the insert-only or dynamic setting.

In Section~\ref{sec:hittingset}, we present a $\tilde{O}(k^d)$ space
algorithm that finds the minimum hitting set  where $d$ is the 
cardinality of the input sets and $k$ is an upper bound on the
cardinality of the minimum hitting set. We prove the space use is
optimal and matches the space used by previous algorithms in the
insert-only model \cite{ChitnisCHM15,kratsch}. Our algorithms can be
implemented with $\tilde{O}(1)$ update time. The only previous results
in the dynamic model were by Chitnis et al.~\cite{ChitnisCHM15} and
included a $\tilde{O}(kn)$ space algorithm and a $\tilde{O}(k^2)$
space algorithm under a much stronger ``promise'' that the vertex
cover of the graph defined by any prefix of the stream may never
exceed $k$. Relaxing this promise remained as the main open problem of Chitnis et al.~\cite{ChitnisCHM15}.
In Section~\ref{sec:hittingset}, we also generalize our exact matching result to hypergraphs. In Section~\ref{sec:lowerbounds}, we show our result is also optimal.

\paragraph{General Family of Results.}
Finally, we consider a larger family of parameterized problems for which
our subgraph sampling primitive yields fast, small-space dynamic graph
stream algorithms. This result is presented in
Section~\ref{sec:general}, while lower bounds for various problems outside this family
are proved in Section~\ref{sec:lowerbounds}.

\subsubsection{Recent Work on Approximate Matching} Two other groups have independently and concurrently made progress on the problem of designing algorithms that \emph{approximate} the size of the maximum matching in the dynamic graph stream model \cite{Konrad15,AssadiKLY15}. These are just relevant to our second result on matching (Section~\ref{largematchings}). Specifically, Assadi et al.~\cite{AssadiKLY15} showed that it was possible to $\alpha$-approximate the maximum matching using $\tilde{O}(n^2/\alpha^3)$ space; this matches our result. Furthermore, they also showed that this was near-optimal. Konrad \cite{Konrad15} proved slightly weaker bounds.

\section{Basic Subgraph Sampling Technique}
\label{sec:basic:sample}

\paragraph{Basic Approach and Intuition.}
The inspiration for our subgraph sampling primitive is the following simple procedure for \emph{edge} sampling. Given a graph $G=(V,E)$ and probability $p\in [0,1]$, let $\mu_{G,p}$ be the distribution $E\cup \{\bot\}$ defined by the following process:
\begin{enumerate}
\item Sample each vertex independently with probability $p$ and let $V'$ denote the set of sampled vertices.
\item Return an edge chosen uniformly at random from the edges in the induced graph on $V'$. If no such edge exists, return $\bot$.
\end{enumerate}

The distribution $\mu_{G,p}$ has some surprisingly useful
properties. For example, suppose that the optimal matching in a graph
$G$ has size at most $k$. It is possible to show that this matching has the same size as the optimal matching in the graph formed by taking $O(k^2)$ independent samples from $\mu_{G,1/k}$. It is not hard to show that such a result would not hold if the edges were sampled uniformly at random.\footnote{To see this, consider a layered graph on vertices $L_1\cup L_2\cup L_3\cup L_4$ with edges forming a complete bipartite graph on $L_1\times L_2$, a complete bipartite matching on $L_2\times L_3$, and a perfect matching on $L_3\times L_4$. If $|L_1|=n\gg k$ and $|L_2|=|L_3|=|L_4|=k/2$ then the maximum matching has size $k$ and every matching includes all edges in the perfect matching on $L_3\times L_4$. Since there are $\Omega(nk)$ edges in this graph we would need $\Omega(nk)$ edges sampled uniformly before we find  the matching on $L_3\times L_4$.
} The intuition is that when we sample from $\mu_{G,p}$ we are less likely to sample an edge incident to a high degree vertex then if we sampled  uniformly at random from the edge set. For a large family of problems including matching, it will be advantageous to avoid bias towards edges whose endpoints have high degree.

Our subgraph sampling primitive essentially parallelizes the process of sampling from $\mu_{G,p}$. This will lead to more efficient algorithms in the dynamic graph stream model. The basic idea is rather than select a subset of vertices $V'$, we randomly partition $V$ into $V_1\cup V_2 \cup \ldots \cup V_{1/p}$. Selecting a random edge from the induced graph on any $V_i$ results in an edge distributed as in $\mu_{G,p}$. Sampling an edge on each $V_i$ results in $1/p$ samples from $\mu_{G,p}$ although note that the samples are no longer independent. This lack of independence will not be an issue and will sometimes be to our advantage. In many applications it will make sense to parallelize the sampling further and select a random edge between each pair, $V_i$ and $V_j$, of vertex subsets. For applications involving hypergraphs we select random edges between larger subsets of $\{V_1, V_2, \ldots, V_{1/p}\}$.

\paragraph{Sampling Data Structure.}
We now present the subgraph sampling primitive formally. Given an unweighted graph $G=(V,E)$, consider a ``coloring'' defined by a function $c:V\rightarrow [b]$. It will be convenient to introduce the notation:
\[V_c = \{v\in V: c(v)=c\}\]
and we will say that every vertex in $V_c$ has color $c$.
For $S\subset [b]$, we say an edge or hyper-edge $e$ of $G$
is \emph{$S$-colored} if $c(e)=S$ where $c(e)=\{c(u):u\in e\}$ is the set of colors used to color vertices in $e$. Given this coloring and a constant $d\geq 1$, let $G'=(V,E')$ be a random subgraph where
\[E'={\textstyle \bigcup_{S\subseteq [b] : |S|\leq d}} E_{S}\]
and $E_{S}$ contains a single edge chosen uniformly from the set of
$S$-colored edges (or $E_{S}=\emptyset$ if there are none). In the case of a weighted graph, for each distinct weight $w$, $E_S$ contains a single edge chosen uniformly from the set of $S$-colored edges with weight $w$. 

\begin{definition}
We define $\sample_{b,d,1}$ to be the distribution over subgraphs
generated as above
where $c$ is chosen uniformly at random from a family of pairwise
independent hash functions.
$\sample_{b,d,r}$ is the distribution over graphs formed by taking the union of $r$ independent graphs sampled from $\sample_{b,d,1}$.
\end{definition}

\paragraph{Motivating Application.}
As a first application to motivate the subgraph sampling primitive we again consider the problem of estimating matchings. We will use the following simple lemma that will also be useful in subsequent sections (the proof, along with other omitted proofs, can be found in Appendix \ref{sec:omitted}).
 
\begin{lemma}
\label{lem:markov:ineq}
Let $U\subseteq V$ be an arbitrary subset of $|U|=r$ vertices and let
$c:V\rightarrow [4r\epsilon^{-1}]$ be a pairwise independent hash function. Then with probability at least $3/4$, at least
$(1-\epsilon)r$ of the vertices in $U$ are hashed to distinct
values. Setting $\epsilon<1/r$ ensures all vertices are hashed to
distinct values with this probability.
\end{lemma}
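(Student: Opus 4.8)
The plan is to control the number of \emph{colliding pairs} inside $U$ with a first-moment (Markov) argument, and then convert a bound on colliding pairs into a bound on the number of vertices that fail to receive a color unique within $U$.

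First I would fix the range $b=4r\epsilon^{-1}$ and, for every unordered pair $\{u,v\}\subseteq U$, introduce the indicator $X_{uv}$ of the event $c(u)=c(v)$. Pairwise independence of $c$ gives $\prob{X_{uv}=1}=1/b=\epsilon/(4r)$, so for $X=\sum_{\{u,v\}\subseteq U}X_{uv}$, linearity of expectation yields $\expec{X}=\binom{r}{2}\cdot\epsilon/(4r)\le \epsilon r/8$. Markov's inequality applied at the threshold $\epsilon r/2$ then gives $\prob{X\ge \epsilon r/2}\le \expec{X}/(\epsilon r/2)\le 1/4$, so with probability at least $3/4$ there are strictly fewer than $\epsilon r/2$ colliding pairs among the vertices of $U$.

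Next I would translate ``few colliding pairs'' into ``many distinctly-colored vertices''. Say $v\in U$ is \emph{spoiled} if $c(u)=c(v)$ for some other $u\in U$; the spoiled vertices are exactly those lying in a color class (intersected with $U$) of size at least $2$, and on the complement of the spoiled set $c$ is injective. A color class of size $m\ge 2$ contributes $\binom{m}{2}\ge m/2$ to $X$, so summing over such classes shows the number of spoiled vertices is at most $2X$. On the event $X<\epsilon r/2$ this is fewer than $\epsilon r$, leaving at least $(1-\epsilon)r$ vertices of $U$ with pairwise distinct colors, which proves the first assertion. For the second, take $\epsilon<1/r$: then $\expec{X}\le \epsilon r/8<1/8$, so $\prob{X\ge 1}<1/8<1/4$ by Markov, and with probability at least $3/4$ (in fact more than $7/8$) we have $X=0$, i.e.\ all $r$ vertices of $U$ are hashed to distinct values. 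Equivalently, $(1-\epsilon)r>r-1$ forces the integer count of distinctly-colored vertices to equal $r$.

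There is no serious obstacle here; the only point requiring a little care is the factor of $2$ lost in passing from colliding pairs to spoiled vertices, which is exactly why the range is taken to be $[4r\epsilon^{-1}]$ rather than $[2r\epsilon^{-1}]$ — everything else is a routine computation.
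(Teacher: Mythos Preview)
Your proof is correct and follows essentially the same first-moment/Markov approach as the paper. The paper's version is marginally more direct: it defines the indicator $I_u$ that vertex $u$ collides with \emph{some} other vertex of $U$, bounds $\prob{I_u=1}\le (r-1)/b<\epsilon/4$ by a union bound, and applies Markov to $I=\sum_u I_u$ straight away, thereby avoiding your pair-to-spoiled-vertex conversion step (and the associated factor of $2$), but the two arguments are otherwise identical.
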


Suppose $G$ is a graph with a matching $M=\{e_1, \ldots, e_k\}$ of size $k$. Let $G'\sim \sample_{b,2,1}$. By the above lemma, there exists $b=O(k^2)$, such that all the $2k$ endpoints of edges in $M$ are colored differently with constant probability. Suppose the endpoints of edge $e_i$ received the colors $a_{i}$ and $b_i$. Then $G'$ contains an edge in $E_{\{a_i,b_i\}}$ for each $i\in [k]$. Assuming all endpoints receive different colors, no edge in $E_{\{a_i,b_i\}}$ shares an endpoint with an edge in $E_{\{a_j,b_j\}}$ for $j\neq i$. Hence, we can conclude that $G'$ also has a matching of size $k$. In Section~\ref{sec:general}, we show that a similar approach can be generalized to a range of problems.
Using a similar argument there exists $b=O(k)$ such
that $G'$ contains a constant approximation to the optimum matching.
However, in Section~\ref{sec:matching}, we show that there exists
$b=O(k)$ such that with high probability graphs sampled from 
$\sample_{b,2,O(\log k)}$  preserve the size of the optimal matching exactly.

\subsection{Application to Dynamic Data Streams and MapReduce}

We now describe how the subgraph sampling primitive can be implemented in various computational models.

\paragraph{Dynamic Graph Streams.}
Let $S$ be a stream of insertions and deletions of edges of an underlying graph
$G(V,E)$. We assume that vertex set $V=\{1,2,\ldots, n\}$. We assume that the length of stream is polynomially related to $n$ and hence $O(\log|S|)=O(\log n)$.
We denote an undirected edge in $E$ with two endpoints $u,v\in V$ by
$uv$. For weighted graphs, we assume that the weight of an edge is specified when the edge is inserted and deleted and that the weight never changes.
The following theorem establishes that the sampling primitive can be efficiently implemented in dynamic graph streams.

\begin{theorem}
\label{thm:sample:distribution}
Suppose $G$ is a graph with $w_0$ distinct weights. It is possible to sample from $\sample_{b,d,r}$ with probability at least $1-\delta$ in the dynamic graph stream model using $\tilde{O}(b^d rw_0)$ space and $\tilde{O}(r)$ update time.
\end{theorem}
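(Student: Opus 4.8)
The plan is to implement the sampling primitive by reducing it to a collection of $\ell_0$-sampling subproblems, one for each color class $S \subseteq [b]$ with $|S| \le d$ (and, in the weighted case, one for each such $S$ and each distinct weight $w$). Recall that an $\ell_0$-sampler is a linear sketch that, given a stream of insertions and deletions to a vector, returns an (approximately) uniformly random nonzero coordinate, succeeding with probability $1-\delta'$ using $\tilde O(\log(1/\delta'))$ space and supporting updates in $\tilde O(1)$ time; since such sketches are linear, they are naturally maintainable under a fully dynamic stream. First I would fix a single copy of $\sample_{b,d,1}$: draw a pairwise independent hash function $c : V \to [b]$ (storable in $\tilde O(1)$ space via a standard construction over a field of size $\poly(n)$), and for each $S$ with $|S| \le d$ allocate one $\ell_0$-sampler over the coordinate space $\binom{V}{2}$ (or $\binom{V}{d}$ for hyperedges). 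When an edge $e$ is inserted or deleted, compute $c(e) = \{c(u) : u \in e\}$ in $\tilde O(1)$ time and forward the update only to the sampler indexed by $S = c(e)$ — exactly one sampler, since each edge has a unique color set. At the end of the stream, query each sampler; the returned edge is uniform among the $S$-colored edges, which is precisely the definition of $E_S$, and the union over all $S$ with $|S| \le d$ is a sample from $\sample_{b,d,1}$.

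Next I would account for the parameters. The number of color classes $S$ with $|S| \le d$ is $\sum_{i \le d} \binom{b}{i} = O(b^d)$; for $\sample_{b,d,r}$ we run $r$ independent copies of this construction, and in the weighted case we further index by the $w_0$ distinct weights, routing an edge of weight $w$ to the sampler for $(S, w)$. This gives $O(b^d r w_0)$ $\ell_0$-samplers in total, each of size $\tilde O(1)$ after we fix the per-sampler failure probability, for $\tilde O(b^d r w_0)$ space. For the update time: a single edge update of weight $w$ touches, in each of the $r$ copies, exactly the one sampler indexed by $(c(e), w)$, so the per-update cost is $\tilde O(r)$. The hash functions across the $r$ copies are drawn independently, which is consistent with the definition of $\sample_{b,d,r}$ as the union of $r$ independent draws from $\sample_{b,d,1}$.

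The remaining point is the success probability. There are $O(b^d r w_0)$ samplers, and we need all of them to answer correctly (an $\ell_0$-sampler can fail, returning $\bot$ or an error even when its vector is nonzero). Setting each sampler's failure probability to $\delta' = \delta / O(b^d r w_0)$ and union bounding gives overall failure probability at most $\delta$; since $\ell_0$-sampler space depends only logarithmically on $1/\delta'$ and $b, r, w_0 = \poly(n)$, this only inflates each sampler to $\tilde O(1)$ space, preserving the stated bounds. I expect the main obstacle — really the only nonroutine point — to be arguing that the output distribution is exactly $\sample_{b,d,1}$ rather than merely close to it: one must be careful that an $\ell_0$-sampler returns a \emph{uniform} nonzero coordinate (some constructions only guarantee near-uniformity), and that conditioned on no sampler failing, the joint distribution of the returned edges matches the primitive. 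If only $(1\pm\varepsilon)$-uniform $\ell_0$-samplers are available, one either invokes a perfect $\ell_0$-sampler construction or absorbs the $\varepsilon$ into $\delta$; I would state the lemma using a perfect $\ell_0$-sampler so that the sampled subgraph is genuinely distributed as $\sample_{b,d,r}$ conditioned on the $1-\delta$ success event.
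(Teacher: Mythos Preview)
Your proposal is correct and matches the paper's approach almost exactly: the paper also instantiates one $\ell_0$-sampler per color class $S$ (and per weight), routes each edge update to the unique sampler indexed by $c(e)$, and runs $r$ independent copies in parallel. The paper additionally remarks that for the specific applications in Theorems~\ref{thm:exact} and~\ref{thm:contract} the $\ell_0$-sampler can be replaced by a simpler deterministic structure (a counter plus an XOR of edge identifiers, or just a counter), but the core argument is the same as yours.
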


\paragraph{MapReduce and Distributed Models.}
The sampling distribution is naturally parallel, making it
straightforward to implement in a variety of popular models.
In MapReduce, the $r$ hash functions can be shared state
among all machines, allowing Map function to output each edge keyed by
its color under each hash function.
Then, these can be sampled from on the Reduce side to generate the
graph $G'$.
Optimizations can do some data reduction on the Map side, so that only
one edge per color class is emitted, reducing the communication cost.
A similar outline holds for other parallel graph models such as
Pregel.

\section{Matchings and Vertex Cover}
\label{sec:matching}
In this section, we present results on finding the maximum matching and minimum vertex cover of a graph $G$. 
We  use $\matching(G)$ to denote the size of the maximum (weighted or unweighted as appropriate) matching in $G$ and use $\vc(G)$ to denote the size of minimum vertex cover.

\subsection{Finding Small Matchings and Vertex Covers Exactly}
\label{sec:exact}
The main theorem we prove in this section is:

\begin{theorem}[Finding Exact Solutions]\label{thm:exact}
Suppose $\matching(G)\leq k$. 
Then, with probability $1-1/\poly(k)$, 
\[\matching(G')=\matching(G) ~\mbox{ and }
\vc(G')=\vc(G)\ , \]
where $G'=(V,E')\sim \sample_{100k,2,O(\log k)}$.
\end{theorem}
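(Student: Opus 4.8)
The goal is to show that a single sample $G'\sim\sample_{100k,2,1}$ preserves the matching number with constant probability, and then amplify via the $O(\log k)$ independent repetitions. The key structural observation is that $\match(G')\le\match(G)$ always holds (since $G'$ is a subgraph of $G$), so the only thing to prove is the lower bound $\match(G')\ge\match(G)$ with good probability; the same subgraph relation gives $\vc(G')\le\vc(G)$, and since $\match(G)\le\vc(G)$ and by König-type/LP reasoning one has to be a bit careful — actually the cleaner route is to prove both $\match(G')=\match(G)$ and then argue a minimum vertex cover of $G'$ can be ``lifted'' to one of $G$ of the same size. Let me sketch the matching part first, as that is the heart of it.

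\textbf{Step 1: one good sample preserves a fixed maximum matching.}
Fix a maximum matching $M=\{e_1,\dots,e_m\}$ of $G$ with $m=\match(G)\le k$. Apply Lemma~\ref{lem:markov:ineq} with $U$ the set of $2m\le 2k$ endpoints of $M$ and $b=100k\ge 4(2k)\epsilon^{-1}$ for a suitable small constant $\epsilon<1/(2k)$; this guarantees that with probability at least $3/4$ all $2m$ endpoints of $M$ get distinct colors. Condition on this event. Then for each $i$ the pair of colors $S_i=c(e_i)$ is a size-$2$ color class, the sets $S_1,\dots,S_m$ are pairwise disjoint, and the edge set $E_{S_i}$ is nonempty (it contains $e_i$). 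Picking one edge out of each $E_{S_i}$ yields $m$ edges that are pairwise vertex-disjoint (an edge chosen from $E_{S_i}$ uses colors $S_i$ only, and the $S_i$ partition a subset of the colors), hence $\match(G')\ge m$. Combined with $\match(G')\le\match(G)=m$ this gives equality with probability $\ge 3/4$ for one draw.

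\textbf{Step 2: amplification.}
$G'\sim\sample_{100k,2,O(\log k)}$ is the union of $t=O(\log k)$ independent draws $G'_1,\dots,G'_t$ from $\sample_{100k,2,1}$. Since $\match$ is monotone under taking supergraphs, $\match(G')\ge\max_j\match(G'_j)$, so $\match(G')=\match(G)$ fails only if \emph{every} $G'_j$ fails, which by independence happens with probability at most $(1/4)^t=1/\poly(k)$ for an appropriate constant in the $O(\log k)$. (One must make sure the probabilities in Theorem~\ref{thm:sample:distribution} allowing the sampler to fail to sample at all are folded into the $\delta$ budget; that is routine.)

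\textbf{Step 3: vertex cover.}
For vertex cover, $E'\subseteq E$ gives $\vc(G')\le\vc(G)$ immediately. For the reverse I would argue that a minimum vertex cover $C'$ of $G'$ is, with the same high probability, also a vertex cover of $G$ — equivalently, that $G'$ ``captures'' enough edges. The natural way: show that with probability $1-1/\poly(k)$, for \emph{every} vertex subset $A\subseteq V$ that is a potential candidate we either... — this is the step I expect to be the main obstacle, because unlike the matching lower bound (which only needs \emph{one} matching to survive), the vertex cover upper bound on $G$ needs to rule out \emph{all} small covers of $G'$ that are not covers of $G$, and there can be exponentially many. I anticipate the resolution goes through the matching fact plus an LP/König argument: since $\vc(G)\le 2\match(G)\le 2k$, one can take $U$ to be the endpoints of a minimum \emph{fractional} vertex cover support, or more simply observe that a vertex cover of $G$ has size $\ge\match(G)=\match(G')$ and build a cover of $G$ of size $\vc(G')$ directly from $C'$ by adding, for each edge of $G$ not covered by $C'$, one of its endpoints — but that could blow up the size. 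The clean fix is likely to first establish that $\vc(G')=\vc(G)$ follows from a stronger claim: $G'$ and $G$ have the same set of ``essential'' edges in the sense that every inclusion-minimal vertex cover of $G$ of size $\le k$ corresponds to a cover of $G'$ of the same size and vice versa, which again reduces to hashing the (at most $\binom{2k}{2}$-sized) set of relevant edges injectively by color pairs. Concretely I would enlarge $U$ in Lemma~\ref{lem:markov:ineq} to include all vertices touched by a minimum vertex cover together with one private neighbor per cover vertex (size $O(k)$), color them distinctly, and conclude that the cover of $G$ restricted/extended is a cover of $G'$ and conversely — then union-bound over the two events and amplify. The arithmetic that $100k$ is large enough to run Lemma~\ref{lem:markov:ineq} for a set of size $\Theta(k)$ with the $\epsilon<1/\Theta(k)$ needed for full injectivity is the only computation, and it is routine.
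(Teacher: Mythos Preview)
Your Step~1 contains a genuine arithmetic error that breaks the whole argument. You want to apply Lemma~\ref{lem:markov:ineq} with $|U|=2m\le 2k$ and $\epsilon<1/(2k)$ so that \emph{all} $2m$ endpoints get distinct colors; but Lemma~\ref{lem:markov:ineq} then requires $b\ge 4(2k)\epsilon^{-1}>16k^2$, whereas the theorem only gives you $b=100k$. The inequality you wrote, ``$100k\ge 4(2k)\epsilon^{-1}$ for a suitable $\epsilon<1/(2k)$'', is simply false for large $k$. This is not a fixable constant: with $b=\Theta(k)$ the birthday bound makes collisions among the $2k$ endpoints unavoidable, and your color-disjointness argument collapses. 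Indeed, the paper explicitly flags this distinction in the ``Motivating Application'' paragraph: the endpoint-injectivity route needs $b=O(k^2)$ and is \emph{not} the proof of Theorem~\ref{thm:exact}.

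The paper avoids this by never asking for global injectivity. Instead it splits vertices into high-degree ($\deg\ge 10k$) and low-degree, and proves that with constant probability a single sample $H\sim\sample_{100k,2,1}$ satisfies: (i) every edge with both endpoints low-degree survives in $H$, and (ii) every high-degree vertex keeps at least $5k$ incident edges in $H$. For (i), to isolate a fixed low-degree edge $xy$ one only needs $c(x),c(y)$ to avoid the colors of $W\cup\Gamma(x)\cup\Gamma(y)$, a set of size $O(k)$, which succeeds with constant probability when $b=100k$ using only pairwise independence. For (ii), one uses Lemma~\ref{lem:markov:ineq} with a \emph{constant} $\epsilon$ on $10k$ neighbors to get $\Theta(k)$ distinct colors, again fine with $b=100k$. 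A union bound over the $O(k^2)$ low-degree edges and $O(k)$ high-degree vertices, combined with $O(\log k)$ independent repetitions, yields the $1-1/\poly(k)$ guarantee. Crucially, this structural property (Eq.~\eqref{eq:magic}) simultaneously yields $\match(G')=\match(G)$ \emph{and} $\vc(G')=\vc(G)$ via a short combinatorial argument (Lemma~\ref{eq1suffices}), so your Step~3 difficulties evaporate as well: you do not need to union-bound over candidate covers.
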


\paragraph{Intuition and Preliminaries.} 
To argue that $G'$ has a matching of the optimal size, it  suffices to
show that for every edge $uv\in G$ that is not in $G'$, there are a
large number of edges incident to one or both of $u$ and $v$ that is in $G'$. If this is the case, then it will still be possible to match at least one of these vertices in $G'$. 

To make this precise, let $U$ be the subset of  vertices with degree at least $10k$. Let $F$ be the set of edges in the induced subgraph on $V\setminus U$, i.e., the set of edges whose endpoints both have small degree.  
We will prove that with high probability, 
\begin{equation} \label{eq:magic}
\left (F\subseteq E'\right ) \quad \mbox{ and } \quad \left ( \forall u\in U~,~\deg_{G'}(u)\geq 5k \right )\ ,
\end{equation}
where $E'$ is the set of edges in $G'$.
Note that any sampled graph $G'$ that satisfies this equation has the property that for all edges  $uv\in G$ that are not in $G'$ we have  $\deg_{G'}(u)\geq 5k$ or $\deg_{G'}(v)\geq 5k$.

\paragraph{Analysis.}
The first  lemma establishes that it is sufficient to prove that  \eqref{eq:magic} holds with high probability.

\begin{lemma}\label{eq1suffices}
If $\matching(G)\leq k$ then \eqref{eq:magic} implies $\matching(G')=\matching(G)$ and $\vc(G')=\vc(G)$.
\end{lemma}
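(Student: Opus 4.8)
The plan is to separate the trivial direction from the two nontrivial inequalities and drive both of the latter with a single structural consequence of \eqref{eq:magic}. Since $E'\subseteq E$, the graph $G'$ is a subgraph of $G$ on the same vertex set, so immediately $\matching(G')\leq\matching(G)$ and $\vc(G')\leq\vc(G)$; in particular, using the standard fact that the endpoints of a maximum matching form a vertex cover, $\vc(G')\leq\vc(G)\leq 2\matching(G)\leq 2k$. It therefore remains to show $\matching(G')\geq\matching(G)$ and $\vc(G')\geq\vc(G)$. The key observation I would extract from \eqref{eq:magic} is: \emph{every edge $uv\in E\setminus E'$ has an endpoint in $U$, and every vertex of $U$ has at least $5k$ distinct neighbors in $G'$.} Indeed, $uv\notin E'$ forces $uv\notin F$ (because $F\subseteq E'$), and since $F$ contains every edge with both endpoints in $V\setminus U$, at least one endpoint of $uv$ must lie in $U$; the degree bound is exactly the second clause of \eqref{eq:magic}.

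For the matching lower bound I would take a maximum matching $M$ of $G$ with $|M|=\matching(G)\leq k$ and reroute it inside $G'$. Keep every edge of $M$ already in $E'$. For each remaining edge $e_i=u_iv_i\in M\setminus E'$, use the observation to fix an endpoint $u_i\in U$, and then choose a neighbor $w_i$ of $u_i$ in $G'$ that avoids the at most $2k$ endpoints of $M$ and the at most $k-1$ replacement vertices $w_j$ chosen in earlier steps; this is possible since $u_i$ has at least $5k>3k$ neighbors in $G'$. Replace $e_i$ by $u_iw_i$. Each $w_i$ is not an $M$-vertex and is distinct from every other $w_j$, while each $u_i$ is an $M$-vertex, so the surviving edges of $M$ together with the new edges form a matching in $G'$ of size $|M|$; hence $\matching(G')\geq\matching(G)$, and combined with the reverse inequality we get equality.

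For the vertex cover lower bound I would take a minimum vertex cover $C'$ of $G'$, so $|C'|=\vc(G')\leq 2k$. Suppose $C'$ failed to cover some edge $uv$ of $G$; then $uv\notin E'$ (otherwise $C'$ would cover it), so by the observation some endpoint, say $u$, lies in $U$ with $u\notin C'$. Then all of the at least $5k$ neighbors of $u$ in $G'$ must belong to $C'$, forcing $|C'|\geq 5k>2k$, a contradiction. Hence $C'$ is a vertex cover of $G$, so $\vc(G)\leq|C'|=\vc(G')$, and again we get equality. Together these yield $\matching(G')=\matching(G)$ and $\vc(G')=\vc(G)$.

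The only step needing genuine care is the rerouting in the matching argument: one must check that the greedily chosen replacement edges, together with the surviving edges of $M$, really constitute a matching, which reduces to the bookkeeping that the degree guarantee $5k$ strictly exceeds the total number of forbidden vertices (at most $2k$ from $M$ plus at most $k-1$ earlier replacements). Everything else follows directly from the definitions of $U$, $F$, and $\sample$, plus the elementary bound $\vc(G)\leq 2\matching(G)$.
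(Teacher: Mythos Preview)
Your proof is correct. For the matching inequality you reroute a fixed maximum matching of $G$ edge by edge inside $G'$, whereas the paper instead constructs a matching in $G'$ from scratch: it takes a maximum matching on $F$ and then greedily matches every vertex of $U$, arguing that the resulting matching has size $\matching(F)+|U|\geq\matching(G)$. Both constructions rely on the same $5k$ degree slack (against at most $2k$ already-matched endpoints plus fewer than $k$ further vertices); your rerouting is arguably a touch more direct since it sidesteps the separate inequality $\matching(F)+|U|\geq\matching(G)$. For vertex cover the two arguments essentially coincide: the paper observes that every vertex of $U$ must lie in any minimum vertex cover of both $G$ and $G'$ (because its degree exceeds $2k$ in each), which is exactly the contradiction you reach when a minimum cover $C'$ of $G'$ misses an endpoint $u\in U$.
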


The next lemma establishes that \eqref{eq:magic} holds with the required probability.
\begin{lemma}
Eq.~\ref{eq:magic} holds with probability at least $1-1/\poly(k)$.
\end{lemma}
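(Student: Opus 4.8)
The plan is to establish the two conjuncts of \eqref{eq:magic} separately and combine them by a union bound, after first recording the structural consequences of $\matching(G)\le k$. Fix a maximum matching $M$ of $G$ and let $C:=V(M)$ be the set of its endpoints; then $C$ is a vertex cover with $|C|\le 2k$. Every vertex outside $C$ has all of its neighbours in $C$ and hence degree at most $2k<10k$, so $U\subseteq C$ and in particular $|U|\le 2k$; likewise every edge of $F$ is covered by a vertex of $C\setminus U$, each of which has degree below $10k$, so $|F|<2k\cdot 10k=20k^2$. These two bounds are exactly what make the ensuing union bounds affordable.

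For the first conjunct $F\subseteq E'$, I would fix $uv\in F$ and a single draw $G''\sim\sample_{100k,2,1}$ with colouring $c$. Since $|\{c(u),c(v)\}|\le 2=d$, conditioned on $c$ the edge $uv$ lies in $E''$ with probability exactly $1/N_c$, where $N_c$ is the number of edges of $G$ whose colour set equals $\{c(u),c(v)\}$; it therefore suffices to show $N_c=1$ with constant probability. Any other edge in this colour class is either incident to $u$ or $v$ --- there are fewer than $20k$ such edges, each landing in the class with probability $O(1/k)$ by pairwise independence, so by Markov they all avoid the class with probability at least $3/5$ --- or it is ``far'', in which case, since every edge has an endpoint in $C$, it must be anchored at a vertex of $C\setminus\{u,v\}$ receiving colour $c(u)$ or $c(v)$; a union bound over the at most $2k$ such vertices (each an honest pairwise event of probability $1/(100k)$) rules this out with probability at least $24/25$. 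Hence $\prob{uv\in E''}\ge\prob{N_c=1}\ge 1/2$, so over the $r=\Theta(\log k)$ independent draws the edge $uv$ is missing from $E'=\bigcup_i E''_i$ with probability $2^{-\Omega(r)}=1/\poly(k)$, and a union bound over the $O(k^2)$ edges of $F$ finishes this part.

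For the second conjunct I would fix $u\in U$ and again a single draw. The key is a ``capture'' observation: if a colour $c'\ne c(u)$ is used by some neighbour of $u$ and no vertex of $C\setminus\{u\}$ receives colour $c'$ or $c(u)$, then --- since every edge has an endpoint in $C$ --- \emph{every} edge in colour class $\{c(u),c'\}$ is incident to $u$, so $E_{\{c(u),c'\}}$ contributes an edge at $u$, and distinct colours contribute distinct edges; hence $\deg_{G''}(u)$ is at least the number $Y_u$ of such colours. To lower-bound $Y_u$: with probability at least $49/50$ no vertex of $C\setminus\{u\}$ shares $u$'s colour (a union bound over $\le 2k$ pairwise events), and then $Y_u$ is at least the number of distinct colours among $u$'s neighbours minus $|C|\le 2k$; restricting to any $10k$ of $u$'s neighbours (available since $\deg_G(u)\ge 10k$), the number of distinct colours among them is at least $10k-B$, where $B$ counts monochromatic pairs and $\expec{B}\le k/2$, so $B\le k$ with probability at least $1/2$. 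Combining, $\deg_{G''}(u)\ge 7k-1\ge 5k$ with constant probability; since taking the union over the $r=\Theta(\log k)$ draws only increases degrees, $\deg_{G'}(u)\ge 5k$ except with probability $1/\poly(k)$, and a union bound over the $\le 2k$ vertices of $U$ completes the argument.

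The hard part is that the hash family is only pairwise independent, which is too weak to estimate directly the probability that two edges spanning four distinct vertices fall in the same colour class. The device that rescues both conjuncts is the small vertex cover $C$: it lets us replace a sum over the $\Omega(n^2)$ potential ``far'' colliding edges by a union bound over the $\le 2k$ vertices of $C$, each contributing a genuine pairwise event of probability $1/(100k)$; the remaining ``near'' collisions and the distinct-colour count at vertices of $U$ involve only degrees and ordinary pairs, which pairwise independence handles cleanly.
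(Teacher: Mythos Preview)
Your proof is correct and follows essentially the same approach as the paper: exploit the small vertex cover $C$ (of size $\le 2k$) to reduce all colour-class collision events to genuine pairwise events, show that each $e\in F$ survives and each $u\in U$ retains degree $\ge 5k$ in a single draw with constant probability, then boost via the $r=O(\log k)$ independent repetitions and union-bound over $|F|=O(k^2)$ and $|U|=O(k)$. The only cosmetic difference is that the paper merges your ``near'' and ``far'' collision analyses for $F$ into a single bad set $A=(W\cup\Gamma(x)\cup\Gamma(y))\setminus\{x,y\}$ and does one union bound, whereas you treat the two cases separately; your bookkeeping for the high-degree part is also slightly tighter (you get $\ge 7k-1$ where the paper's constants give a smaller margin).
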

\begin{proof}
First note that $\matching(G)\leq k$ implies that there exists a vertex cover $W$ of size of most $2k$ because the endpoints of the edges in a maximum matching form a vertex cover.
Next consider $H\sim \sample_{100k,2,1}$. We will show that for any $e\in F$ and  $u\in U$, 
\[
\prob{e\in H}  >  1/2 ~~\mbox{ and }~~
\prob{\deg_H(u)\geq 5k}  \geq  1/2 \ .
\]
  It  follows that if $r=O(\log k)$ and $G' \sim \sample_{100k,2,r}$ then 
  \[  \prob{e\in G' \mbox{ and } \deg_{G'}(u) \geq 5k }\geq 1-1/\poly(k) \ .\] 
  We then take the union bound over the $O(k^2)$ edges in $F$ and the $O(k)$ vertices in $U$. The fact that $|F|= O(k^2)$ and $|U|=O(k)$ follows from the promises $\matching(G)\leq k$ and $\vc(G)\leq 2k$. In particular, the induced graph on $V\setminus U$ has a matching of size $\Omega(|F|/k)$ since the maximum degree is $O(k)$ and this is at most $k$. Since all vertices in $U$ must be in the minimum vertex cover, $|U|\leq 2k$.

\paragraph{To prove $\prob{e\in H}\geq 1/2$.}
Let the endpoints of $e$ be $x$ and $y$. Consider the pairwise hash function $c:[n]\rightarrow [b]$ that defined $H$ where $b=100k$. If $c(x)\neq c(y)$ and $c(x),c(y)\not \in A$ where
\[A=\{c(w):w\in (W\cup \Gamma(x)\cup \Gamma(y)), w\not \in \{x,y\}\} 
\quad \mbox{ where $\Gamma(\cdot)$ denotes the set of neighbors of a vertex}\ ,\] 
then $xy$ is the unique edge in $E_{\{c(x),c(y)\}}$ and is therefore in $H$. This follows because any edge in $E_{\{c(x),c(y)\}}$ must be incident to a vertex in $W$ since $W$ is a vertex cover. However, the only vertices in $V_{c(x)}$ or $V_{c(y)}$ that are in $W$ are one or both of $x$ and $y$ and aside from the edge $xy$ none of the incident edges on either $x$ or $y$ are in $E_{\{c(x),c(y)\}}$.
Since $b=100k$ and $
|A|\leq 2k+ 10k+10k=22k$,
\begin{eqnarray*}
\prob{xy \in H} 
 \geq  1-\prob{c(x)=c(y)}-
\prob{c(x)\in A}-\prob{c(y)\in A}\geq 1-1/b-2|A|/b> 1/2 \ . 
\end{eqnarray*}

\paragraph{To prove $\prob{\deg_{H}(u)\geq 5k}\geq 1/2$.}
Let $N_u$ be an arbitrary set of $10k$ neighbors of $u$. If
$c(u)\not \in \{c(w):w\in W\setminus \{u\} \}$ and there exist
different colors $c_1, \ldots ,c_{5k}$ such that each
$c_i \in \{c(v):v \in N_u\}\setminus \{c(w):w\in W\}$ then  these
color pairings are unique to their edges, and the algorithm returns at least $5k$ edges incident to $u$.
This follows since every edge has at least one vertex in $W$. 

First note that  $\prob{c(u) \in \{c(w):w\in W\setminus \{u\}\}}\leq 2k/b$.
By appealing to Lemma \ref{lem:markov:ineq}, with probability at least $3/4$, there are at least $6k$ colors used to color the vertices $N_u$. Of these colors, at least $5k$ are colored differently from vertices in $W$. Hence we find $5k$ edges incident to $u$ with probability at least $3/4-2k/b\geq 1/2$.
\end{proof}

\paragraph{Extension to Weighted Matching.}
We now extend the result of the previous section to the weighted case. 
The following lemma shows that it is possible to remove an edge $uv$ from a graph without changing the weight of the maximum weighted matching, 
if $u$ and $v$ satisfy certain properties.

\begin{lemma}\label{lem:removingwe}
Let $G=(V,E)$ be a weighted graph and let $G'=(V,E')$ be a subgraph with the property:
 \[\forall uv \in E\setminus E' ~,~ \deg_{G'}^{w(uv)}(u)\geq 5k ~\mbox{  or  }~\deg_{G'}^{w(uv)}(v)\geq 5k\ ,\] 
where $\deg_G^w(u)$ is the number of edges incident to $u$ in $G$ with weight $w$.
Then, $\matching(G)=\matching(G')$.
\end{lemma}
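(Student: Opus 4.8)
The plan is to prove the two inequalities $\matching(G')\le\matching(G)$ and $\matching(G')\ge\matching(G)$ separately. The first is immediate: since $E'\subseteq E$, every matching of $G'$ is also a matching of $G$. For the reverse inequality I would start from a maximum weighted matching $M$ of $G$ and \emph{rewire} it, one edge at a time, into a matching all of whose edges lie in $E'$, without ever changing its total weight; the resulting matching lives inside $G'$ and hence certifies $\matching(G')\ge w(M)=\matching(G)$.

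The rewiring step goes as follows. As long as $M\not\subseteq E'$, choose an edge $uv\in M\setminus E'$. By the hypothesis of the lemma one of its endpoints --- say $u$, the other case being symmetric --- satisfies $\deg_{G'}^{w(uv)}(u)\ge 5k$; that is, there are at least $5k$ vertices joined to $u$ in $G'$ by an edge of weight exactly $w(uv)$. A matching saturates only $2|M|$ vertices, and $|M|\le k$ in the regime where this lemma is applied (the maximum matching has at most $k$ edges), so at least $5k-2|M|\ge 3k\ge 1$ of these $G'$-neighbours of $u$ are $M$-unsaturated; fix one of them and call it $v'$. Being unsaturated, $v'$ differs from both $u$ and $v$, so $M':=(M\setminus\{uv\})\cup\{uv'\}$ is again a matching, it has the same weight as $M$ because $w(uv')=w(uv)$, and --- the whole point of the manoeuvre --- $uv'\in E'$.

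To see that this process terminates, note that $|M\cap(E\setminus E')|$ is a non-negative integer that strictly decreases at each rewiring: we drop $uv\in E\setminus E'$ and add $uv'\in E'$, leaving every other edge of $M$ untouched. Hence after at most $|M|\le k$ steps $M$ is supported entirely on $E'$ while still having weight $\matching(G)$, which yields $\matching(G')\ge\matching(G)$ and, together with the trivial direction, the desired equality.

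I expect the only delicate point to be the counting that produces the unsaturated neighbour $v'$. It relies on combining the \emph{weight-restricted} degree bound $\deg_{G'}^{w(uv)}(u)\ge 5k$ --- which is precisely what guarantees that the spliced-in edge $uv'$ carries the same weight we just removed --- with the cardinality bound on $M$; without the latter a single local swap need not be enough and one would be pushed toward a genuine augmenting-path argument. The remaining checks --- that $M'$ is a matching, that its weight is unchanged, and that the potential $|M\cap(E\setminus E')|$ drops --- are routine.
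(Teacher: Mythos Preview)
Your proof is correct and follows essentially the same approach as the paper: both hinge on the same local swap --- replace a missing edge $uv$ by an edge $uv'\in E'$ of the same weight, where $v'$ exists because $u$ has at least $5k$ weight-$w(uv)$ neighbours in $G'$ while $M$ saturates at most $2k$ vertices. The only cosmetic difference is that the paper phrases this as a contradiction on a sequence of intermediate graphs $G=G'_0,G'_1,\ldots,G'_t=G'$ (removing one edge of $E\setminus E'$ at a time and looking at the first index where the matching weight drops), whereas you iterate the swap directly on the matching and track the potential $|M\cap(E\setminus E')|$.
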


Consider a weighted graph $G$ and let $G'\sim \sample_{100k,2,O(\log k)}$. For each weight $w$, let $G_w$ and $G'_w$ denote the subgraphs consisting of edges with weight exactly $w$. By applying the analysis of the previous section to $G_w$ and $G'_w$ we may conclude that $G'$ satisfies the properties of the above lemma. Hence, $\match(G)=\match(G')$. To reduce the dependence on the number of distinct weights in Theorem \ref{thm:sample:distribution}, we may first round each weight to the nearest power of $(1+\epsilon)$ at the cost of incurring a $(1+\epsilon)$ factor error. If $W$ is the ratio of the max weight to min weight, there are  $O(\epsilon^{-1} \log W)$ distinct weights after the rounding.

\subsection{Finding Large Matchings Approximately}
\label{largematchings}
We next show our graph sampling primitive yields an approximation algorithm for estimating large matchings. 

\paragraph{Intuition and Preliminaries.} 
Given a hash function $c:V\rightarrow [b]$, we say an edge $uv$ is
colored $i$ if $c(u)=c(v)=i$. If the endpoints have different colors, we say the edge is \emph{uncolored}. The basic idea behind our algorithm is to repeatedly sample a set of colored edges with distinct colors. Note that a set of edges colored with different colors is a matching. We use the edges in this matching to augment the matching already constructed from previous rounds. In this section we require the hash functions to be $O(k)$-wise independent and, in the context of dynamic data streams, this will increase the update time by a $O(k)$ factor.

\begin{theorem}\label{thm:largematch}
Suppose $\matching(G)\geq k$. 
For any $1\leq \alpha \leq \sqrt{k}$ and $0<\epsilon\leq 1$, with probability $1-1/\poly(k)$, \[\matching(G')\geq \left (\frac{1-\epsilon}{2\alpha} \right )\cdot k\ , \]
where  $G'\sim \sample_{2k/\alpha,1,r}$ where $r=O(k\alpha^{-2}\epsilon^{-2} \log k)$.
\end{theorem}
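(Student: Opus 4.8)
The plan is to construct a large matching inside $G'$ greedily over the $r$ sampling rounds, exactly as the preliminaries suggest. Write $b=2k/\alpha$, let $H_1,\dots,H_r\sim\sample_{b,1,1}$ be the independent per-round subgraphs whose union is $G'$, and note that each $H_t$ is itself a matching, since two edges with distinct singleton colors cannot share a vertex. Set $M'_0=\emptyset$ and $M'_t=M'_{t-1}\cup\{e\in H_t:\ e\text{ is vertex-disjoint from }M'_{t-1}\}$; then every $M'_t$ is a matching contained in $G'$, so it is enough to prove that $|M'_r|\ge T:=\tfrac{(1-\epsilon)k}{2\alpha}$ with probability $1-1/\poly(k)$.

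First I would fix a matching $M=\{e_1,\dots,e_k\}$ of $G$ (it exists since $\match(G)\ge k$) and prove a per-round progress bound. Suppose $|M'_{t-1}|<T$ and let $A_{t-1}\subseteq M$ be the set of $M$-edges both of whose endpoints are still unmatched by $M'_{t-1}$; since $M'_{t-1}$ saturates fewer than $2T$ vertices, $|A_{t-1}|\ge k-2T=k\bigl(1-\tfrac{1-\epsilon}{\alpha}\bigr)\ge\epsilon k$, using $\alpha\ge 1$. Call a color $a\in[b]$ \emph{productive} in round $t$ if (i) some edge of $A_{t-1}$ is monochromatic of color $a$ under the round-$t$ hash $c_t$, and (ii) no endpoint of $M'_{t-1}$ has color $a$ under $c_t$. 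For each productive color $a$, the class $E_{\{a\}}$ is nonempty, the edge it contributes to $H_t$ has both endpoints of color $a$ and hence, by (ii), both endpoints free in $M'_{t-1}$, and it is automatically disjoint from the edges contributed by the other colors; so the number of edges $M'_t$ gains in round $t$ is at least the number of productive colors. Conditioning on everything before round $t$ (so $M'_{t-1}$ and $A_{t-1}$ are fixed and $c_t$ is fresh): $\prob{e_i\text{ monochromatic}}=1/b$ by pairwise independence, and any fixed color avoids the fewer than $2T<b/2$ colors of $M'_{t-1}$'s endpoints with probability at least $1/2$; a short inclusion--exclusion over $A_{t-1}$, needing only $4$-wise independence, then gives $\expec{|M'_t|-|M'_{t-1}| \mid |M'_{t-1}|<T}\ge c\,|A_{t-1}|/b\ge c'\,\epsilon\alpha$ for absolute constants $c,c'>0$. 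Note this drift exceeds $1$ when $\alpha$ is large, which is what lets $r$ be as small as the theorem claims in that regime.

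The remaining step is to turn this conditional drift into a high-probability statement. I would first use the $O(k)$-wise independence to Chernoff-concentrate, within each round, the number of productive colors around its (small) mean --- the relevant indicators depend on only the $O(k)$ hash values of vertices incident to $A_{t-1}\cup M'_{t-1}$, so they are effectively independent --- and take a union bound over the polynomially many rounds; this confines each round's gain to $O\bigl(|A_{t-1}|/b+\log k\bigr)$ except with probability $1/\poly(k)$. Then, tracking the shortfall $\max(0,\,T-|M'_t|)$ as a supermartingale with this bounded per-step decrease and conditional drift $\Omega(\epsilon\alpha)$, a Freedman/Bernstein-type inequality (one sensitive to the conditional variance, which here is comparable to the mean) shows that after $r=O(k\alpha^{-2}\epsilon^{-2}\log k)$ rounds the shortfall is $0$, i.e.\ $|M'_r|\ge T$, except with probability $1/\poly(k)$. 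The hypothesis $\alpha\le\sqrt k$ is precisely what guarantees $T=\Omega(\sqrt k)$, large enough for these tail bounds to be polynomially small. Combining, $\match(G')\ge|M'_r|\ge T=\tfrac{(1-\epsilon)k}{2\alpha}$.

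I expect the main obstacle to be the concentration in the last paragraph rather than the combinatorial core: the per-round gains are dependent both across rounds (the greedy history evolves) and within a round (the productive-color indicators are only $O(k)$-wise independent and sum to a quantity with a large worst-case value), so one must be careful about which event to concentrate (the count of productive colors, which is exactly where the $O(k)$-wise independence is used) and which martingale inequality to apply (one that is not swamped by a single atypical round). The middle-paragraph bookkeeping --- that productive colors genuinely yield mutually disjoint edges extending $M'_{t-1}$ --- is routine, but it is exactly where the non-uniform sampling of $\sample$, which suppresses edges at high-degree vertices, does its work: uniform edge sampling would not admit such a per-round progress guarantee.
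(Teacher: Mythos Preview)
Your proposal is correct and follows the same overall strategy as the paper: build a greedy matching over the $r$ independent rounds, lower-bound the expected per-round progress by $\Omega(\epsilon\alpha)$ whenever the current matching is short of the target $T=(1-\epsilon)k/(2\alpha)$, and then amplify across rounds. The combinatorial core---at least $\epsilon k$ vertex-disjoint free edges remain, each is monochromatic with probability $1/b$, and any monochromatic color not used by the current matching contributes a usable edge---is identical to the paper's.

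The difference is in the concentration step, where the paper's argument is considerably simpler than yours. The paper observes that the per-color indicators $X_c$ (over colors $c$ not hit by $M_{t-1}$'s endpoints) are \emph{negatively correlated}: under the $O(k)$-wise independent hash, the $O(k)$ relevant endpoint colors are fully independent, so the monochromatic free edges land in color classes in a balls-in-bins fashion, and the ``bin nonempty'' indicators are negatively associated. A single Chernoff-type bound then gives $\prob{X\ge \expec{X}/2}\ge 1-\exp(-\Omega(\epsilon\alpha))\ge\Omega(\epsilon)$, i.e., each round gains $\Omega(\epsilon\alpha)$ edges with probability $\Omega(\epsilon)$. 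The cross-round argument is then just stochastic domination by i.i.d.\ Bernoulli$(\Omega(\epsilon))$ trials and a count of successes---no per-round upper bound on the gain and no Freedman/Bernstein martingale are needed. Your route via supermartingale concentration also works, but is heavier machinery for the same $r=O(k\alpha^{-2}\epsilon^{-2}\log k)$.

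One small remark: your ``needing only $4$-wise independence'' is accurate for the Bonferroni lower bound on $\expec{X_c}$ alone, but the subsequent decoupling of conditions (i) and (ii) and the within-round Chernoff step genuinely use the $O(k)$-wise independence that the preliminaries assume---exactly the place the paper's negative-correlation argument uses it as well.
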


\begin{proof}
Let $H_1, \ldots, H_r \sim \sample_{2k/\alpha,1,1}$ and let $G'$ be the union of these graphs.
Consider the greedy matching $M_r$ where $M_0=\emptyset$ and for $t\geq 1$, $M_t$ is the union of $M_{t-1}$ and  additional edges from $H_t$. We will show that if $M_{t-1}$ is small, then we can find many edges in $H_t$ that can be used to augment $M_{t-1}$.

Consider $H_t$ and suppose $|M_{t-1}|< \nofrac{1-\epsilon}{2\alpha} \cdot k$. Let $c:V\rightarrow [b]$ be the hash-function used to define $H_t$ where $b=\nofrac{2k}{\alpha}$. Let $U$ be the set of colors that are not used to color the endpoints of $M_{t-1}$, i.e., 
\[U=\{c\in [b]: \mbox{ there does not exist a matched vertex $u$ in $M_{t-1}$ with $c(u)=c$}\}\ .\]
and note that $|U| \geq b-2|M_{t-1}| \geq \nofrac{k}{\alpha}$.
For each $c\in U$, define the indicator variable $X_c$ where $X_c=1$ if there exists an edge $uv$ with $c(u)=c(v)=c$. 
We will find $X=\sum_{c\in U} X_c$ edges to add to the matching.

Since $\matching(G)\geq k$, there exists a set $k-2|M_{t-1}|>
k\epsilon$ vertex disjoint edges that can be added to $M_{t-1}$.
Let $p=\nofrac{\alpha}{2k}$ and observe that 
\[
\expec{X_c}\geq k\epsilon p^2 - {k\epsilon \choose 2} p^4 >
k\epsilon p^2/2 = \epsilon \cdot \nofrac{\alpha^2}{8k}
\]
Therefore, $\expec{X}\geq (\nofrac{k}{\alpha})\cdot  \epsilon \cdot \nofrac{\alpha^2}{8k} = \nofrac{\epsilon \alpha}{8}$.
Since $X_c$ and $X_{c'}$ are negative correlated, 
$\prob{X \geq E[X]/2}\geq 1-\exp\left (-\Omega \left
(\epsilon \alpha  \right ) \right ) \geq \Omega(\epsilon)$.
Hence,
with each repetition we may increase the size of the matching by at
least $\epsilon  \alpha/2$ with probability $\Omega(\epsilon)$.
After $O(k\alpha^{-2} \epsilon^{-2} \log k)$ repetitions the matching has size at least $\nofrac{1-\epsilon}{2\alpha} \cdot k$.
\end{proof}

By applying Theorem \ref{thm:largematch}  for all $k\in \{1,2,4,8,16,\ldots\}$ and appealing to Theorem \ref{thm:sample:distribution}, we establish:

\begin{corollary}\label{cor:semimatch}
There exists a $O(n\polylog n)$-space algorithm that returns an $O(n^{1/3})$-approximation to the size of the maximum matching in the dynamic graph stream model.
\end{corollary}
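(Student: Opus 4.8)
The plan is to run the sampling primitive once for each dyadic guess $k=2^i$ of the matching size, for $i=0,1,\ldots,\lceil\log n\rceil$, all in parallel, and then output the largest matching found among the sampled subgraphs. Fix $\epsilon=1/2$. For guess $k=2^i$ I would maintain $G'_i\sim\sample_{2k/\alpha_i,1,r_i}$ as in Theorem~\ref{thm:largematch}, with $r_i=O(k\alpha_i^{-2}\epsilon^{-2}\log k)$ and truncation parameter $\alpha_i=\max\{1,\lceil(2^{2i}/n)^{1/3}\rceil\}$. The first thing to check is that this $\alpha_i$ simultaneously satisfies the three competing constraints in play: (i)~$\alpha_i\le\sqrt{2^i}$, so that Theorem~\ref{thm:largematch} applies, which holds because $(k^2/n)^{1/3}\le\sqrt k$ whenever $k\le n^2$; (ii)~$\alpha_i\le n^{1/3}$, which will force the approximation factor to be $O(n^{1/3})$ and holds because $(k^2/n)^{1/3}\le n^{1/3}$ whenever $k\le n$; and (iii)~$\alpha_i^3\ge 2^{2i}/n$ (immediate when $2^i\ge\sqrt n$, and when $2^i<\sqrt n$ we have $\alpha_i=1$ and $2^{2i}<n$), so that by Theorem~\ref{thm:sample:distribution} with $d=1$ and $w_0=1$ the instance occupies $\tilde O(b_i r_i)=\tilde O(2^{2i}/\alpha_i^3)=\tilde O(n)$ space. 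Summing over the $O(\log n)$ instances keeps the total space $\tilde O(n\polylog n)$.

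For the output: at the end of the stream each $G'_i$ has only $\tilde O(n)$ edges, so its maximum matching can be computed offline without using more stream space, and I output $\hat m:=\max_i\matching(G'_i)$. Let $k^\star=\matching(G)$ (output $0$ if $k^\star=0$). Since every $G'_i$ is a subgraph of $G$ we get $\hat m\le k^\star$. For the matching lower bound, let $i^\star$ be the index with $2^{i^\star}\le k^\star<2^{i^\star+1}$; because $\matching(G)\ge 2^{i^\star}$, Theorem~\ref{thm:largematch} with $k=2^{i^\star}$ yields, with probability $1-1/\poly(2^{i^\star})$,
\[\matching(G'_{i^\star})\ \ge\ \frac{1-\epsilon}{2\alpha_{i^\star}}\cdot 2^{i^\star}\ \ge\ \frac{1-\epsilon}{4\alpha_{i^\star}}\cdot k^\star\ \ge\ \frac{1-\epsilon}{4n^{1/3}}\cdot k^\star\ .\]
Hence $\hat m$ lies between $\frac{1-\epsilon}{4n^{1/3}}k^\star$ and $k^\star$, i.e.\ it is an $O(n^{1/3})$-approximation.

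The only real obstacle is the weakness of the probability guarantee at the low end: Theorem~\ref{thm:largematch} only promises $1-1/\poly(2^{i^\star})=1-1/\poly(k^\star)$, which is $1-1/\poly(n)$ only when $k^\star=n^{\Omega(1)}$. To obtain a high-probability statement for every $k^\star$ I would additionally run the exact algorithm underlying Theorem~\ref{thm:exact} with parameter $T=\Theta(\log^2 n)$ but with $O(\log n)$ repetitions instead of $O(\log T)$ — this costs only $\tilde O(T^2\log n)=\tilde O(\polylog n)$ extra space and, by the same union bound as in that proof (now over $O(T^2)$ events each failing with probability $1/\poly(n)$), computes $\matching(G)$ exactly with probability $1-1/\poly(n)$ whenever $k^\star\le T$ — and include its output in the maximum. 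Apart from this boundary handling and the arithmetic of verifying constraints (i)--(iii) for all $2^i\le n$, the argument is routine.
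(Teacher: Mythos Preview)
Your proof is correct and follows the same approach as the paper: run the sampler in parallel for every dyadic guess $k=2^i$ and return the largest matching found among the resulting subgraphs. The paper's own proof is a two-line sketch with a single fixed $\alpha$, whereas you vary $\alpha_i$ per level (so the hypothesis $\alpha_i\le\sqrt{2^i}$ of Theorem~\ref{thm:largematch} is always satisfied) and bolt on an exact-matching instance for tiny $k^\star$; these are careful refinements of the same idea rather than a different route.
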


This result generalizes to the weighted case using the Crouch-Stubbs technique \cite{CrouchS14}. They showed that if we can find a $\beta$-approximation to the maximum \emph{cardinality} matching amongst all edges of weight greater than $(1+\epsilon)^i$ for each $i$, then we can find a $2(1+\epsilon)\beta$-approximation to the maximum weighted matching in the original graph.

\subsection{Matchings in Planar and Bounded-Arboricity Graphs }

We also provide an algorithm for estimating the size of the matching in a graph of bounded arboricity. Recall that a graph has \emph{arboricity} $\nu$ if its edges can be partitioned into at most $\nu$ forests.  Our result is as follows.

\begin{theorem}
\label{thm:esimate:planar:matching}
There exists a $\tilde{O}(\nu \epsilon^{-2}n^{4/5} \log \delta^{-1} )$-space dynamic graph stream algorithm that returns a $(5\nu+9)(1+\epsilon)^2$ approximation of $\match(G)$ with probability at least $1-\delta$ where $\nu$ is the arboricity of $G$. 
\end{theorem}

The basic idea is to generalize the approach taken by Esfandiari et al.~\cite{EsfandiariHLMO15} in the insert-only case. This can be achieved using sparse recovery sketches and our algorithm for small matchings.
See Appendix~\ref{sec:planar}.

\section{Hitting Set and Hypergraph Matching}
\label{sec:hittingset}

In this section we present exact results for hitting set and hypergraph matching. 
Throughout the section, let $G$ be a hypergraph where each edge has size exactly $d$ and $\hs(G)\leq k$. In the case where $d=2$, the problems under consideration are vertex cover and matching. Throughout this section we assume $d$ is a constant. 

\paragraph{Intuition and Preliminaries.} Given that the hitting set problem is a generalization of the vertex cover problem, it will be unsurprising that some of the ideas in this section build upon ideas from the previous section. However, the combinatorial structure we need to analyze for our sampling result goes beyond what is typically needed when extending vertex cover results to hitting set. We first need to review a basic definition and result about ``sunflower'' set systems  \cite{ErdosR60}.

\begin{lemma}[Sunflower Lemma \cite{ErdosR60}]
Let $\F $ be a collection of subsets of $[n]$. Then $A_1,\ldots, A_s\in \F $ is an \emph{$s$-sunflower} if $A_i\cap A_j=C$
for all $1\leq i< j\leq s$. We refer to $C$
as the \emph{core} of the sunflower and $A_i \setminus C$ as the \emph{petals}. If each set in $\F$ has size at most $d$ and $|\F |> d! k^d$, then $\F$ contains a $(k+1)$-sunflower. 
\end{lemma}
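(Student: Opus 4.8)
The statement to prove is the classical Erdős–Rado Sunflower Lemma: if every set in a family $\F$ has size at most $d$ and $|\F| > d! \, k^d$, then $\F$ contains a $(k+1)$-sunflower. The plan is to argue by induction on $d$, the bound on the set sizes.

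\medskip\noindent\textbf{Base case.} When $d=1$, every set in $\F$ is a singleton (we may assume no empty sets, else an empty set together with any $k$ others trivially forms a sunflower with empty core). If $|\F| > k$, then $\F$ contains at least $k+1$ distinct singletons, and any $k+1$ of them form a $(k+1)$-sunflower with empty core $C=\emptyset$.

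\medskip\noindent\textbf{Inductive step.} Assume the statement holds for all bounds smaller than $d$, and let $\F$ be a family of sets each of size at most $d$ with $|\F| > d!\,k^d$. First I would extract a maximal collection of pairwise disjoint sets $A_1, \ldots, A_t \in \F$. If $t \geq k+1$, these sets already form a $(k+1)$-sunflower with empty core, and we are done. Otherwise $t \leq k$, so the union $Y = A_1 \cup \cdots \cup A_t$ has size at most $dk$. By maximality of the disjoint collection, every set in $\F$ intersects $Y$; hence by pigeonhole there is some element $x \in Y$ contained in at least $|\F|/|Y| \geq |\F|/(dk) > d!\,k^d/(dk) = (d-1)!\,k^{d-1}$ sets of $\F$. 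Now form the family $\F' = \{\, A \setminus \{x\} : A \in \F,\ x \in A \,\}$; each set in $\F'$ has size at most $d-1$ and $|\F'| > (d-1)!\,k^{d-1}$, so by the induction hypothesis $\F'$ contains a $(k+1)$-sunflower $B_1, \ldots, B_{k+1}$ with some core $C'$. Adding $x$ back, the sets $B_1 \cup \{x\}, \ldots, B_{k+1} \cup \{x\}$ lie in $\F$ and form a $(k+1)$-sunflower with core $C' \cup \{x\}$, since pairwise intersections are exactly $C' \cup \{x\}$.

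\medskip\noindent The only mild subtlety is bookkeeping around sets of size strictly less than $d$ and possible repeated sets in $\F$; handling empty sets separately in the base case and treating $\F$ as a family (so repeated sets can be discarded, or counted but then they coincide and a sunflower is immediate) takes care of this. There is no real obstacle here — the argument is entirely standard — so the ``hard part'' is merely verifying the arithmetic $d!\,k^d/(dk) = (d-1)!\,k^{d-1}$ that makes the induction close cleanly.
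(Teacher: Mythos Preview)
Your proof is the standard Erd\H{o}s--Rado inductive argument and is correct. The paper itself does not give a proof of this lemma: it is stated as a classical result with a citation to \cite{ErdosR60} and then used as a black box in the analysis of hitting set and hypergraph matching, so there is no ``paper's own proof'' to compare against.
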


Let $s_G(C)$ denote the number of petals in a maximum sunflower in the graph $G$ with core $C$. We say a core is \emph{large} if $s_G(C)>ak$ for some large constant $a$ and $\emph{significant}$ if $s_G(C)> k$.
Define the  sets:

\begin{itemize}
\item  $U = \{C \subseteq V \mid s_G(C) > ak\}$ is the set of large cores.
\item  $F = \{D \in E \mid \forall C \in U, C \not \subseteq D\}$ is
the set of edges that do include a large core.
\item   $U' = \{C \in U \mid \forall C' \subset C, s_G(C') \leq k\}$ is the set of large cores that do not contain significant cores.
\end{itemize}

The sets $U$ and $F$ play a similar role to the sets of the same name in the previous section. For example, if $d=2$, then a large core corresponds to a high degree vertex. However, the set $U'$ had no corresponding notion when $d=2$ because a high degree vertex cannot contain another high degree vertex. 
The following bounds on $|F|$ and $|U'|$ are proved in the Appendix \ref{sec:omitted}.
  
\begin{lemma}\label{lemma:smallf}
$|F| = O(k^d)$
 and $|U'| = O(k^{d-1})$
\end{lemma}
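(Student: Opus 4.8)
The plan is to bound $|F|$ and $|U'|$ separately, in each case using the Sunflower Lemma to convert the combinatorial constraints (every edge has size $d$, and $\hs(G)\le k$) into the stated polynomial bounds.

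\textbf{Bounding $|F|$.} First I would observe that $F$ is a collection of $d$-element subsets of $V$ with the property that no $C\in U$ is contained in any $D\in F$; in particular, for every core $C$ with $s_G(C)>ak$, no edge of $F$ contains $C$. I claim $|F|\le d!\,k^d$ (so $|F|=O(k^d)$ since $d$ is constant). Suppose for contradiction that $|F|>d!\,k^d$. Then by the Sunflower Lemma applied to $F$, there is a $(k+1)$-sunflower $A_1,\dots,A_{k+1}\in F$ with some core $C$. Since a maximum sunflower in $G$ with core $C$ has at least $k+1$ petals (these $A_i$ all lie in $E$), we get $s_G(C)\ge k+1$. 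If in fact $s_G(C)>ak$ then $C\in U$, contradicting that $C\subseteq A_1\in F$. So the subtlety is that a $(k+1)$-sunflower only gives $s_G(C)>k$, not $s_G(C)>ak$. The fix is to take $a$ to be any constant and start from $|F|>d!\,(ak)^d$: then the Sunflower Lemma yields an $(ak+1)$-sunflower, so $s_G(C)>ak$, hence $C\in U$ and $C\subseteq A_1\in F$, the desired contradiction. Thus $|F|\le d!\,(ak)^d=O(k^d)$.

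\textbf{Bounding $|U'|$.} Here I would use the hitting-set promise. Every large core $C\in U$ has $s_G(C)>ak\ge k$ petals; the petals of a sunflower with core $C$ are disjoint sets each of which, together with $C$, forms an edge of $G$. A hitting set of size $\le k$ must hit all $>k$ of these petal-edges; since the petals are pairwise disjoint, no single vertex outside $C$ can hit two of them, so the hitting set must contain at least one vertex of $C$ (it cannot hit $>k$ disjoint petals with $\le k$ vertices otherwise). Hence \emph{every} large core contains a vertex of the minimum hitting set $W$, $|W|\le k$. Now for $C\in U'$, minimality says no proper subset of $C$ is significant, i.e.\ no $C'\subsetneq C$ has $s_G(C')>k$; in particular each such $C'$ is not large either. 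I would argue that each $C\in U'$ contains exactly one vertex of $W$: if $C$ contained two vertices $w_1,w_2\in W$, consider $C'=C\setminus\{w_2\}$ — one then wants to show $C'$ is still large (contradicting $C\in U'$), which should follow because removing $w_2$ from the core of a large sunflower still leaves $>k$ petals (the same petals, now with $w_2$ appended, still pairwise intersect in $C'$). So each $C\in U'$ is determined by choosing its distinguished vertex $w\in W$ ($\le k$ choices) together with the at most $d-1$ other vertices of $C$; but we need to control how many cores can share a given $w$. The right counting: the edges $D\supseteq C$ for $C\in U'$ all contain $w$; there may be many such edges. Instead, I would bound $|U'|$ by noting every $C\in U'$ with distinguished vertex $w$ is a $(\le d)$-subset of $V$ containing $w$ that is a large core, and then re-apply the Sunflower Lemma to the family $\{C\setminus\{w\}: C\in U',\ w\in C\cap W\}$ of $(\le d-1)$-sets: if this family had size $>(d-1)!\,(ak)^{d-1}$ it would contain an $(ak{+}1)$-sunflower with some core $C''$, and then $C''\cup\{w\}$ would be a significant (indeed large) proper subcore of each participating $C$, contradicting membership in $U'$ — unless $C''\cup\{w\}$ equals one of the $C$'s, which forces $C''$ to be that $C$ minus $w$ and the sunflower degenerates. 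Carefully handling this degenerate case (the sunflower "core" being a full petal-free set) is where I expect to spend effort; the clean conclusion is $|U'|\le |W|\cdot (d-1)!\,(ak)^{d-1}=O(k^{d-1})$.

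\textbf{Main obstacle.} The genuinely delicate part is the $|U'|$ bound: one must correctly exploit the \emph{minimality} condition defining $U'$ (no significant proper subcore) in tandem with the Sunflower Lemma, and the interplay between "large" ($>ak$) and "significant" ($>k$) thresholds means one has to be slightly generous with the constant $a$ and with sunflower sizes, as in the $|F|$ argument. The $|F|$ bound is essentially a direct application of the Sunflower Lemma once the thresholds are aligned; the $|U'|$ bound requires first establishing that every large core meets the minimum hitting set and then a second, more careful sunflower argument on the "residual" sets obtained after deleting the distinguished hitting-set vertex.
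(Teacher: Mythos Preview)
Your argument for $|F|$ is essentially the paper's: assume $|F|>d!(ak)^d$, extract an $(ak+1)$-sunflower, and observe its core $C$ is large, hence $C\in U$ and $C\subseteq A_1\in F$, a contradiction. (The paper splits off the empty-core case via a matching/hitting-set argument, but your phrasing absorbs it.)

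Your argument for $|U'|$ has two concrete problems.

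\emph{First, the arithmetic does not close.} You treat $C\setminus\{w\}$ as a $(\le d{-}1)$-set, apply the Sunflower Lemma to bound each $w$-bucket by $(d-1)!(ak)^{d-1}$, and then multiply by $|W|\le k$. That product is $O(k^d)$, not $O(k^{d-1})$. The fix you are missing is that every $C\in U'$ already satisfies $|C|\le d-1$ (a core of a sunflower with at least two petals is a \emph{proper} subset of a $d$-edge), so $|C\setminus\{w\}|\le d-2$ and the per-bucket bound becomes $(d-2)!(ak)^{d-2}$; only then does multiplying by $k$ give $O(k^{d-1})$. The paper avoids this detour entirely: it applies the Sunflower Lemma directly to $U'$ as a family of $(\le d{-}1)$-sets, obtaining $|U'|\le (d-1)!k^{d-1}$ in one shot, with no partition by hitting-set vertex.

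\emph{Second, your ``exactly one $w$'' step is wrong as stated, and the key structural step is only asserted.} If $C$ has a large sunflower with petals $P_i$, then the edges $C\cup P_i$ intersect pairwise in $C=C'\cup\{w_2\}$, not in $C'$; appending $w_2$ to each petal makes the new ``petals'' all share $w_2$, so you do \emph{not} get a sunflower with core $C'$, and $s_G(C')>k$ does not follow. (The uniqueness claim is also unnecessary: picking any $w\in C\cap W$ suffices for the bucketing.) More importantly, in both your approach and the paper's, once you have a sunflower of \emph{cores} $C_1,\dots,C_{k+1}\in U'$ with common core $C^*$ (or $C''\cup\{w\}$ in your version), you must upgrade this to a sunflower of \emph{edges} with that same core to conclude that the core is significant and derive the contradiction with $U'$. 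You assert this (``$C''\cup\{w\}$ would be a significant (indeed large) proper subcore'') without argument. The paper does supply the argument: from the $>ak$ petals available at each $C_i$, greedily pick one edge $D_i$ per $i$ avoiding the (at most $O(k)$) vertices already used by the other $C_j$'s and previously chosen $D_j$'s; this is where ``$a$ sufficiently large'' is actually spent.
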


The next  lemma shows that if a core $C$ is contained in a set $D$, then the set of other edges $D'$ that intersect $D$ at $C$ has a hitting set that a) does not include vertices in $C$ and b) has small size if $s_G(C)$ is small.

\begin{lemma} \label{small_hs}
For any two sets of vertices $C \subseteq D$, define 
\[M_{C,D} = \{D' \setminus C \mid D' \in E, D \cap D' = C\} \ . \]
Then $\hs(M_{C,D})\leq s_G(C) d$.
\end{lemma}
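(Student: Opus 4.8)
\textbf{Proof plan for Lemma~\ref{small_hs}.}

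The plan is to directly exhibit a hitting set for $M_{C,D}$ whose size is at most $s_G(C)\cdot d$. Fix any element $D' \setminus C \in M_{C,D}$, so $D' \in E$ and $D \cap D' = C$. The crucial observation is that the edges $D'$ appearing here, with the core $C$ stripped off, cannot form too large a sunflower with core $\emptyset$; otherwise they would, when $C$ is added back, form a large sunflower with core $C$ in $G$, contradicting the bound $s_G(C)$. More precisely, I would argue that $\{D' \setminus C : D'\setminus C \in M_{C,D}\}$ cannot contain a family of more than $s_G(C)$ pairwise disjoint sets: if $D'_1\setminus C, \dots, D'_m\setminus C$ were pairwise disjoint with $m > s_G(C)$, then since each $D'_i$ satisfies $D \cap D'_i = C$ and in particular $D'_i \cap C = C \subseteq D'_i$ (wait—$C\subseteq D'_i$ since $D\cap D'_i=C$ forces $C\subseteq D'_i$), the sets $D'_1,\dots,D'_m$ would pairwise intersect in exactly $C$ (their common part is $C$, and the petals $D'_i\setminus C$ are disjoint), giving an $m$-sunflower in $G$ with core $C$ and $m > s_G(C)$ petals, a contradiction.

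Given that bound on disjoint petals, the hitting set follows from a greedy/packing argument: take a maximal collection $D'_1\setminus C, \dots, D'_t\setminus C$ of pairwise disjoint members of $M_{C,D}$. By the previous paragraph, $t \leq s_G(C)$. Now let $Z = \bigcup_{i=1}^t (D'_i \setminus C)$. Since each petal $D'_i\setminus C$ has size at most $|D'_i| \le d$, we get $|Z| \le t\cdot d \le s_G(C)\cdot d$. I claim $Z$ hits every element of $M_{C,D}$: by maximality of the disjoint collection, any other $D'\setminus C \in M_{C,D}$ must intersect some $D'_i\setminus C$ (otherwise it could be added to the collection), hence it intersects $Z$. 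Therefore $Z$ is a hitting set of $M_{C,D}$ of size at most $s_G(C)d$, proving $\hs(M_{C,D}) \le s_G(C)d$.

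The one point that needs care—and the main (minor) obstacle—is the translation between ``pairwise disjoint petals in $M_{C,D}$'' and ``a sunflower with core $C$ in $G$.'' I need to verify that $D\cap D' = C$ indeed forces $C \subseteq D'$ (it does: $C = D\cap D' \subseteq D'$), so that each $D'_i$ decomposes as $C \sqcup (D'_i\setminus C)$; and that with the petals $D'_i\setminus C$ pairwise disjoint, the sets $D'_i$ pairwise intersect in exactly $C$ and no more, which is exactly the sunflower condition $D'_i \cap D'_j = C$. This gives a sunflower in $G$ of size $t$ with core $C$, so $t \le s_G(C)$ by definition of $s_G(C)$ as the maximum number of petals over sunflowers with core $C$. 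Everything else is a routine maximal-packing covering argument and the size bound $|D'|\le d$ from the hypothesis that every edge has size exactly $d$.
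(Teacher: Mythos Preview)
Your proposal is correct and takes essentially the same approach as the paper: both arguments bound the size of a maximal matching (pairwise disjoint family) in $M_{C,D}$ by $s_G(C)$ via the observation that the corresponding edges $D'_i$ form a sunflower with core $C$, and then use the union of a maximal matching as a hitting set of size at most $s_G(C)\cdot d$. The paper phrases this via the contrapositive (if $\hs(M_{C,D}) > s_G(C)d$ then the matching exceeds $s_G(C)$, contradiction), while you construct the hitting set directly, but the content is identical.
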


\paragraph{Hitting Set.}
For the rest of this section we let
$G' = (V,E') \sim \sample_{b,d,r}(G)$
where $b=O(k)$, $d$ is the cardinality of the largest hyperedge, and $r=O(\log k)$. Let $W$ be a minimum hitting set of $G$.

\begin{theorem}
Suppose $\hs(G)\leq k$. With probability $1-1/\poly(k)$, 
$\hs(G')=\hs(G)$.
\end{theorem}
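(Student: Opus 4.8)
Since $G'=(V,E')$ is a subgraph of $G$ on the same vertex set (we have $E'\subseteq E$), every hitting set of $G$ is automatically a hitting set of $G'$, so $\hs(G')\le\hs(G)$ holds deterministically; the whole content of the theorem is the reverse inequality with the stated probability. Following the template of Theorem~\ref{thm:exact}, the plan is to isolate a ``good event'' on $G'$ — the hypergraph analogue of \eqref{eq:magic} — namely
\begin{equation}\label{eq:hsmagic}
\bigl(F\subseteq E'\bigr)\quad\text{and}\quad\bigl(\forall\,C\in U':\ s_{G'}(C)>k\bigr)\,,
\end{equation}
and then to prove two things: (i) \eqref{eq:hsmagic} implies $\hs(G')=\hs(G)$, and (ii) \eqref{eq:hsmagic} holds with probability $1-1/\poly(k)$ for $G'\sim\sample_{b,d,r}$ with $b=O(k)$ and $r=O(\log k)$.

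For (i): let $W'$ be a minimum hitting set of $G'$, so $|W'|=\hs(G')\le\hs(G)\le k$, and suppose for contradiction that some $D\in E$ has $D\cap W'=\emptyset$. Then $D\notin E'$, and since $F\subseteq E'$ this forces $D\notin F$, so $D$ contains a large core. I would descend through the cores nested inside $D$ to reach one in $U'$: starting from a large core $C\subseteq D$, if $C\notin U'$ then $C$ has a \emph{significant} proper subcore $C'$, and I would argue that either $W'$ intersects $C'\subseteq D$ (a contradiction) or $C'$ is itself large and we recurse on the smaller core $C'$ — using a minimality choice together with the observation that the $>k$ petals of a significant core are pairwise disjoint, so that if $W'$ avoids the core it must contain a distinct vertex from each petal and hence $|W'|>k$; the guarantee $F\subseteq E'$ is what lets us assert that enough of those petal-edges survive into $G'$ for $W'$ to be forced to hit them. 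Once we have a core $C\in U'$ with $C\subseteq D$, the event $s_{G'}(C)>k$ gives a $(k+1)$-petal sunflower with core $C$ inside $G'$; if $W'\cap C=\emptyset$ then $W'$ hits each of the $k+1$ disjoint petals and $|W'|>k$, impossible, so $W'\cap C\ne\emptyset$ and therefore $W'\cap D\ne\emptyset$, contradicting the choice of $D$. Hence $W'$ is a hitting set of $G$, so $\hs(G')\ge\hs(G)$.

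For (ii): work with a single sample $H\sim\sample_{b,d,1}$, show each required sub-event holds with probability at least $\tfrac{1}{2}$, and then boost to $1-1/\poly(k)$ by taking the union of $r=O(\log k)$ independent copies and union-bounding over the $|F|+|U'|=O(k^d)$ relevant objects (Lemma~\ref{lemma:smallf}). For $e\in F$: $e$ is the \emph{unique} $c(e)$-colored edge of $G$ — hence lies in $H$ — provided the $\le d$ colours on $e$ are distinct and avoid the colours of a ``dangerous'' vertex set consisting of a fixed minimum hitting set $W$ of $G$ ($|W|\le k$) together with, for each $C\subsetneq e$, a minimum hitting set of $M_{C,e}$; because $e\in F$ every such $C$ has $s_G(C)\le ak$, so Lemma~\ref{small_hs} bounds each of these by $O(k)$, the dangerous set has size $O(k)$, and pairwise independence with $b=O(k)$ (the constant depending on $d$) gives probability $\ge\tfrac{1}{2}$. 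For $C\in U'$: take a sunflower of $G$ with core $C$ and $>ak$ petals, keep $O(k)$ of them, apply Lemma~\ref{lem:markov:ineq} to the $O(k)$ vertices of $C$ and of these petals to conclude that a $(1-o(1))$-fraction of the petals are coloured with colours that are distinct from each other and from $c(C)$, and then run a uniqueness argument analogous to the one for $F$ — now accounting, via the appropriate $M_{C,\cdot}$ and Lemma~\ref{small_hs}, for the edges that could collide in colour with a petal-edge — to push $>k$ of the petal-edges simultaneously into $H$, so $H$ retains a sunflower with core $C$ and more than $k$ petals.

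The main obstacle is step (i), and specifically the descent to a core of $U'$: unlike the $d=2$ case, where a large core is simply a high-degree vertex and cannot contain another large core, a large core may sit on top of significant-but-not-large subcores, so a ``bad'' edge $D$ need not directly expose one of the $O(k^{d-1})$ objects we have protected. Making the minimality argument and the disjoint-petals bookkeeping close the loop — and, correspondingly, identifying the right $M_{C,\cdot}$ to feed into Lemma~\ref{small_hs} when bounding the dangerous sets for petal-edges of large cores in part (ii) — is where essentially all of the combinatorial work lies; taking the constant $a$ in the definition of ``large'' comfortably larger than the constants produced by Lemmas~\ref{lem:markov:ineq} and~\ref{small_hs} should supply the slack needed for the probability estimates.
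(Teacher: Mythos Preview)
Your proposal takes essentially the same approach as the paper: the paper's proof reduces $\hs(G)$ to $\hs(U\cup F)$ and then to the minimal significant cores, handling the large ones via Lemma~\ref{lem_u} (your ``$s_{G'}(C)>k$ for all $C\in U'$'' part of the good event) and the non-large ones via Lemma~\ref{lem_f} (your ``$F\subseteq E'$''), which is exactly your decomposition into (i) and (ii). The obstacle you flag in (i) --- descending past a significant-but-not-large subcore --- is precisely where the paper's argument is tersest, as it simply asserts that such sunflowers are ``found as a part of set $F$''; so you have correctly identified both the overall structure and the crux of the argument.
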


\begin{proof}
For each significant core $C$ there has to be at least one vertex from the hitting set in $C$. Since all large cores are significant, $\hs(G) = \hs(U \cup F)$. If $C \in U$ has a subset $C' \subset C$ such that $s_G(C') > k$, then there is at least one vertex from the hitting set in $C'$ and it also hits $C$. Thus, we only need to find significant cores that  do not contain other significant cores. Such sunflowers with more than $ak$ petals will be found according to Lemma \ref{lem_u}. Sunflowers with at most $ak$ petals will be found as a part of set $F$ according to Lemma \ref{lem_f}. 
\end{proof}

\begin{lemma}\label{lem_u}
$\prob{s_{G'}(C) > k \textrm{ for all } C \in U'} \geq 1-1/\poly(k)$.
\end{lemma}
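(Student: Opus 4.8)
The goal of Lemma~\ref{lem_u} is to show that every large core $C \in U'$ remains significant in the sampled hypergraph $G'$, i.e., survives with more than $k$ petals. My plan is to mimic the structure of the proof for vertex cover (the earlier lemma establishing that high degree vertices keep degree $\geq 5k$ in $G'$), but working at the level of cores rather than single vertices.

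\paragraph{Plan.} Fix a core $C \in U'$. Since $C \in U$ we know $s_G(C) > ak$; fix a maximum sunflower $\{A_1, \ldots, A_{ak}\}$ with core $C$, so the petals $P_i = A_i \setminus C$ are pairwise disjoint and disjoint from $C$. Now work with one copy $H \sim \sample_{b,d,1}$, $b = O(k)$, with its pairwise independent coloring $c$. I want to show $\prob{s_H(C) > k} \geq 1/2$ (or some constant), and then amplify over the $r = O(\log k)$ independent copies and union bound over all $|U'| = O(k^{d-1})$ cores using Lemma~\ref{lemma:smallf}. The event I will aim for is: (i) the vertices of $C$ get distinct colors, say $c(C) = S_0$ with $|S_0| = |C|$; (ii) a constant fraction — at least, say, $2k$ — of the petals $P_i$ are ``good'' in the sense that the color set $S_i := c(A_i)$ is a valid size-$\leq d$ set, the $S_i \setminus S_0$ are pairwise disjoint across good $i$, and no vertex of the minimum hitting set $W$ (other than those forced inside $C$) receives a color in $S_i \setminus S_0$. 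When these hold, $A_i$ is the unique $S_i$-colored edge in $H$: any other edge $D'$ with $c(D') = S_i$ would have to be hit by $W$, hence contain a $W$-vertex, but the only colors in $S_i$ carrying $W$-vertices lie in $S_0$, forcing $D' \cap C \neq \emptyset$ with $D' \cap D$ containing a core vertex — here I would invoke Lemma~\ref{small_hs} with the pair $(C, A_i)$ together with the fact that $C$ contains no significant sub-core (definition of $U'$), so $M_{C, A_i}$ has small hitting set and in particular few edges intersect $A_i$ exactly at $C$, and none can ``steal'' the color class $S_i$. This leaves the $A_i$ themselves as distinct edges sharing exactly the core $C$, so $s_H(C) \geq$ (number of good petals) $> k$.

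\paragraph{Key steps, in order.} First, bound $\prob{c(C)\text{ not injective}} \leq \binom{|C|}{2}/b = O(1/k)$. Second, restrict attention to the $ak$ petals and apply a Markov/second-moment argument in the spirit of Lemma~\ref{lem:markov:ineq} to show that with constant probability at least (say) $4k$ of the color sets $c(P_i)$ are pairwise disjoint and disjoint from $S_0$ — since $|W| \leq k$ and each petal is killed by colliding with at most $d$ of the $\leq dk$ relevant $W$-colors, at least $2k$ of these survive the $W$-collision test as well, by Markov on the number of collisions with expectation $O(k \cdot dk / b) = O(1)$ after choosing the constant $a$ large enough relative to $d$. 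Third, conditioned on this good event, argue the uniqueness of each surviving $A_i$ as the $c(A_i)$-colored representative in $E_{c(A_i)}$, using the vertex-cover/hitting-set property of $W$ and Lemma~\ref{small_hs} to rule out competitor edges; conclude $s_H(C) > k$ with probability $\geq 1/2 - o(1)$. Fourth, amplify: with $r = O(\log k)$ independent samples, $\prob{s_{G'}(C) > k} \geq 1 - k^{-c}$ for any desired constant $c$ (note the monotonicity — a sunflower surviving in any one copy $H_j$ survives in the union $G'$). Fifth, union bound over $|U'| = O(k^{d-1})$ cores.

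\paragraph{Main obstacle.} The delicate part is the uniqueness argument in step three: unlike the $d = 2$ case, where an edge with a given pair of colors is automatically forced, here many hyperedges can share the same $d$-subset of colors, and I need the structural input that $C$ contains no significant core (the defining property of $U'$) to control, via Lemma~\ref{small_hs}, the set of edges that could collide with $A_i$ on the color classes. Getting the bookkeeping right — which colors are ``blocked'' by $W$, how a competitor edge's intersection with $A_i$ translates into a sub-core of $C$ or a petal overlap, and ensuring the constant $a$ in the definition of ``large'' is chosen large enough (as a function of $d$) that enough petals survive all the collision tests simultaneously — is where the real work lies. I expect the rest (the Markov bound, the amplification, the final union bound) to be routine given Lemmas~\ref{lem:markov:ineq}, \ref{lemma:smallf}, and \ref{small_hs}.
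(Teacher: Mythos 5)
Your high-level plan---fix $C\in U'$, work in a single copy $H\sim\sample_{b,d,1}$, isolate $C$ and a large collection of petals via the pairwise-independent coloring, use the hitting set $W$ and Lemma~\ref{small_hs} to rule out ``competitor'' edges, then Markov, amplify over $O(\log k)$ copies, and union bound over $|U'|=O(k^{d-1})$---matches the paper's proof. But two of your key steps are off, and the second one is where the argument would actually break.

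First, the uniqueness claim is both unnecessary and false. You cannot hope that $A_i$ is the \emph{unique} $S_i$-colored edge: another edge $D'=C\cup Q$ with $Q\neq P_i$ but $c(Q)=c(P_i)$ can exist and will share color set $S_i$. The paper instead proves the weaker statement that \emph{every} $S_i$-colored edge contains $C$ (``Property 1''), together with the observation that if $c(P_i)$ and $c(P_j)$ are disjoint and injective then \emph{any} $S_i$-colored edge and \emph{any} $S_j$-colored edge have disjoint petals (``Property 3''). So whichever edges the sampler returns for the color classes $S_1,\dots,S_{k+1}$ still form a $(k+1)$-sunflower with core $C$, and uniqueness is never needed.

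Second---and this is the genuine gap---your invocation of Lemma~\ref{small_hs} is misapplied. You propose to use the pair $(C,A_i)$, i.e.\ $M_{C,A_i}$, and appeal to ``$C$ contains no significant sub-core'' to claim $\hs(M_{C,A_i})$ is small. But Lemma~\ref{small_hs} bounds $\hs(M_{C,A_i})$ by $s_G(C)\cdot d$, and $s_G(C)>ak$ precisely because $C$ is a large core, so this bound is useless; the $U'$ property says nothing about $s_G(C)$ itself. The property you actually need to exploit is that every \emph{proper} subset $C'\subsetneq C$ satisfies $s_G(C')\leq k$. The paper's forbidden set is $A=(W\setminus C)\cup\bigcup_{C'\subsetneq C}\hs(M_{C',C})$: any edge disjoint from $A$ must contain $C$, because a hitting-set vertex must lie in $C$ and any edge meeting $C$ only in a proper subset $C'$ must hit $\hs(M_{C',C})\subseteq A$. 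It is here that $C\in U'$ pays off, since each of the (at most $2^d$) sets $\hs(M_{C',C})$ has size $\leq s_G(C')\,d\leq kd$, giving $|A|=O(k)$ and hence $b=O(k)$ suffices. Without this construction you have no way to block competitor edges that meet $C$ in a strict subset while keeping $b=O(k)$, so the probability bound in your ``step three'' would not go through.
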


\begin{proof}
Fix an arbitrary core $C\in U'$. Consider $H\sim \sample_{b,d,1}$ and let $c:[n]\rightarrow [b]$ be the coloring that defined $H$. We need to identify sets $S_1, S_2, \ldots S_{k+1}\subset [b]$ each of size $d$ with the following three properties:
\begin{enumerate}
\item All edges that are $S_i$-colored contain $C$
\item There is at least one $S_i$-colored edge.
\item If $D$ is $S_i$-colored and $D'$ is $S_j$-colored then $(D\setminus C)\cap (D'\setminus C)=\emptyset$.
\end{enumerate}
Let $D_1, D_2, \ldots, D_{k+1}$ be any set of edges where $D_i$ is $S_i$-colored. Then these sets form a sunflower of size $k+1$ on core $C$. 
 It will suffice to show that there exists such a family $S_1, S_2, \ldots S_{k+1}\subset [b]$ with probability at least $1/2$ because repeating the process $O(\log k)$ times will ensure that such a family exists with high probability. The result then follows by taking the union bound over all $C\in U'$ since $|U'| = O(k^{d-1})$.
 
\paragraph{Property 1.} We say $S\in [b]$ is \emph{good} if all $S$-colored edges contain $C$. 
We first define a set of vertices $A$ such that all edges disjoint from $A$ include $C$. Then any $S$ such that $S\cap c(A)=\emptyset$ will be good since if $c(D)=S$ for some edge $D$ then
$S\cap c(A)=\emptyset \Rightarrow 
c(D)\cap c(A)=\emptyset \Rightarrow 
D\cap A =\emptyset$,
and so $C\subseteq D$.
Let
\[
A=( W\setminus C) \cup \bigg({\textstyle \bigcup_{C' \subset C}} \hs(M_{C',C}) \bigg) \ .
\]
where $W$ is a minimum hitting set and, by a slight abuse of notation, we use $\hs(M_{C',C})$ to denote a minimum hitting set of $M_{C',C}$. Note that $\hs(M_{C',C})$ does not include any vertices in $C$. Since $W$ is a hitting set, all edges that do not intersect $W\setminus C$ must intersect with $C$. But all edges that intersect with only a subset of $C$, say $C'$, must intersect with $\hs(M_{C',C})$. Hence $A$ has the claimed property.

\paragraph{Properties 2 and 3.}
Next, let $\calP$ be a set of petals in a sunflower with core $C$ that do not intersect with $A$. We may chose a set of $|\calP|=ak-|A|$ such petals. We will show later that $|A|=O(k)$ so we may assume $|\calP|=ak-|A|\geq 2(k+1)$ for a sufficiently large constant $a$.
For each $P\in \calP$, define the set:
\[
A_P=A\cup C \cup \bigg( {\textstyle \bigcup_{P'\in \calP\setminus P}} P' \bigg) \ . 
\] 
Let  $\calP'$ contain all $P\in \calP$ such that $c(P)\cap c(A_P)=\emptyset$ and $|c(P)|=|P|$. Suppose $c(C)\cap c(A)=\emptyset$ and $|c(C)|=|C|$. Then the family $\calF= \{c(P\cup C)\}_{P\in \calP'}$ satisfies Property 2. 

To show $\calF$ also satisfies Property 3 consider edges $C\cup Q_1$ and $C\cup Q_2$ such that $c( C\cup Q_1)=c(C\cup P_1)$ and $c( C\cup Q_2)=c(C\cup P_2)$. 
Then $c(Q_1)=c(P_1)$ and $c(Q_2)=c(P_2)$ because $|c(C)|=|C|$ and  $|c(P_1)|=|P_1|$, and $|c(P_2)|=|P_2|$. But $c(P_1)\cap c(P_2)=\emptyset$ implies $c(Q_1)\cap c(Q_2)=\emptyset$ and so $Q_1\cap Q_2 =\emptyset$.

\paragraph{Size of family $\calF$.} It will suffice to show that $c(C)\cap c(A)=\emptyset$ and $|c(C)|=|C|$ with probability $3/4$ and $|\calP'|\geq (k+1)$ with probability $3/4$.  Then $\calF$ satisfies all three properties and has size $(k+1)$ with probability $1/2$.
Suppose $b=\max(4d(|A|+d),8d (|A_P|+d))$. Then,
\[
\prob{c(C)\cap c(A)=\emptyset, |c(C)|=|C|}\geq 1-(d|A|+d^2)/b\geq 3/4 \ .
\]
For each $P\in \calP$, let $X_P=1$ if $c(P)\cap c(A_P)\neq \emptyset$ or $|c(P)|\neq  |P|$ and $X_P=0$ otherwise. 
Then $\expec{\sum X_P}\leq |\calP| (d |A_P|+d^2)/b \leq  |\calP|/8$. By applying the Markov inequality, $\prob{\sum X_P\geq |\calP|/2}\leq 1/4$. Hence, $|\calP'|=|\calP|-\sum X_P \geq |\calP|/2=d(k+1)$ with probability at least $3/4$.

It remains to show that $b=O(k)$ where we omit dependencies on $d$. To do this, it suffices to show $|A|=O(k)$ and $|A_P|=O(k)$. By appealing to Lemma \ref{small_hs},
\[
|A| \leq 
|A_P|  \leq |A|+|C|+d|\calP|\leq  |W|+\sum_{C'\subset C} \hs(M_{C,C'}) +|C|+d|\calP|\leq k+ 2^d dk+d+dak=O(k) \ .
\]
\end{proof}

\begin{lemma}\label{lem_f}
$\prob{F\subseteq E'}\geq 1-1/\poly(k)$. 
\end{lemma}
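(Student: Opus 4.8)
The plan is to mimic the unweighted vertex-cover argument from the proof of Theorem~\ref{thm:exact} (the step establishing $\prob{e\in H}\ge 1/2$), lifted from edges to $d$-uniform hyperedges. First, fix an arbitrary $D\in F$ and let $H\sim\sample_{b,d,1}$ with defining coloring $c:[n]\to[b]$. It suffices to show $\prob{D\in H}\ge 1/2$: amplifying over $r=O(\log k)$ independent copies then gives $\prob{D\in E'}\ge 1-1/\poly(k)$, and a union bound over the $|F|=O(k^d)$ edges of $F$ (Lemma~\ref{lemma:smallf}) yields $\prob{F\subseteq E'}\ge 1-1/\poly(k)$. To lower bound $\prob{D\in H}$ I will produce a set of vertices $A$, disjoint from $D$ and with $|A|=O(k)$, such that whenever the $d$ vertices of $D$ receive pairwise distinct colors none of which lies in $c(A)$, the edge $D$ is the \emph{unique} $c(D)$-colored edge, so that $E_{c(D)}=\{D\}$ and $D\in H$. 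Since $c$ is pairwise independent and the complement of this event is a union of pairwise colour-collisions, it fails with probability at most $\bigl(\binom{d}{2}+d|A|\bigr)/b$, which is at most $1/2$ for the stated $b=O(k)$.

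The candidate set, by analogy with the set $A$ appearing in the proof of Lemma~\ref{lem_u}, is
\[ A=(W\setminus D)\ \cup\ {\textstyle\bigcup_{C\subsetneq D}}\ \hs(M_{C,D}), \]
where $W$ is a minimum hitting set of $G$, $M_{C,D}$ is as defined in Lemma~\ref{small_hs}, and $\hs(M_{C,D})$ denotes a minimum hitting set of $M_{C,D}$, which may be taken disjoint from $C$ (every set in $M_{C,D}$ is disjoint from $D$, hence from $C$). For correctness, suppose $D'\ne D$ is also $c(D)$-colored and put $C=D\cap D'$; then $C\subsetneq D$ and $C\subsetneq D'$, and since $c$ is injective on $D$ we get $c(D'\setminus C)=c(D\setminus C)$, so every vertex of the nonempty set $D'\setminus C$ lies outside $D$ and carries a colour in $c(D)$. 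As $W$ is a hitting set, $D'\cap W\neq\emptyset$; if some $w\in D'\cap W$ lies outside $D$ then $w\in W\setminus D\subseteq A$ and $c(w)\in c(D')=c(D)$, contradicting $c(D)\cap c(A)=\emptyset$; otherwise $D'\cap W\subseteq D\cap D'=C$, and then $D'\setminus C\in M_{C,D}$, so $\hs(M_{C,D})$ contains a vertex of $D'\setminus C$, which again lies outside $D$, lies in $A$, and carries a colour in $c(D)$ --- the same contradiction.

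Finally, $|A|=O(k)$: we have $|W|=\hs(G)\le k$; there are at most $2^d$ proper subsets $C\subsetneq D$; and for each of them Lemma~\ref{small_hs} gives $|\hs(M_{C,D})|\le d\,s_G(C)$, while $D\in F$ means $D$ contains no large core, i.e.\ $s_G(C)\le ak$ for every $C\subseteq D$, so $|\hs(M_{C,D})|\le adk$ and $|A|\le k+2^d adk=O(k)$ for constant $a$ and $d$. Substituting this bound into the pairwise estimate above makes $b=O(k)$ suffice, which completes the estimate $\prob{D\in H}\ge 1/2$. I expect the crux to be the second case of the correctness check --- arguing that a colliding edge $D'$ meeting $W$ only inside its core $C=D\cap D'$ is nonetheless hit by $\hs(M_{C,D})$ at a vertex \emph{outside} $D$, whose colour therefore collides with a colour of $D$ --- together with the bookkeeping over all cores $C\subsetneq D$ that keeps $|A|=O(k)$; both are exactly the places where the sunflower/hitting-set structure established in Lemmas~\ref{small_hs} and~\ref{lemma:smallf} is essential.
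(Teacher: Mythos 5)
Your proof is correct and follows essentially the same route as the paper's: the same choice of set $A=(W\setminus D)\cup\bigcup_{C\subsetneq D}\hs(M_{C,D})$, the same case analysis on where a colliding edge $D'$ meets the hitting set $W$ (outside $D$, or only inside the core $C=D\cap D'$), and the same $|A|=O(k)$ bookkeeping via Lemma~\ref{small_hs} and the property $s_G(C)\le ak$ for $D\in F$. The only superficial difference is that you additionally condition on $c$ being injective on $D$; this costs you an extra $\binom{d}{2}/b$ in the failure probability but is not needed, since the paper's argument (and yours, on inspection) only uses $c(D'\setminus C)\subseteq c(D')=c(D)$, never $c(D'\setminus C)=c(D\setminus C)$, so $c(D)\cap c(A)=\emptyset$ alone already forces $D'\cap A=\emptyset$ and hence $D'=D$.
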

\begin{proof}
Pick an arbitrary edge $D \in F$.  Consider $H\sim \sample_{b,d,1}$ and let $c:[n]\rightarrow [b]$ be the coloring that defined $H$. We need to show that there is a unique edge that is $c(D)$-colored since then $D$ is an edge in $H$.
It suffices to show that this is the case with probability at least $1/2$ because repeating the process $O(\log k)$ times will ensure that such a family exists with high probability. The result then follows by taking the union bound over all $D\in F$ since $|F| = O(k^{d})$.

Let $S=c(D)$. 
We first define a set $A$ of vertices such that the only edge that is disjoint from $A$ is $D$. Then it follows that $D$ is the unique $S$-colored edge  if $S\cap c(A)=\emptyset$; every other edge intersects with $A$ and hence must share a color with $A$. We define $A$ as follows:
\[A= ( W \setminus D) \cup \big( {\textstyle \bigcup_{C \subset D}} \hs(M_{C,D}) \big) \]
where $W$ is a minimum hitting set and, by a slight abuse of notation, we use $\hs(M_{C,D})$ to denote a minimum hitting set of $M_{C,D}$. Note that $\hs(M_{C,D})$ does not include any vertices in $D$. 
If an edge is disjoint from $(W\setminus D)$ then it must intersect $D$ since $W$ is a hitting set.
Suppose there  exists an edge such that $D\cap D'= C\subset D$ then $D'$ intersects $\hs(M_{C,D})$. 
Hence, the only edge that is disjoint from $A$ includes the vertices in $D$ and hence is equal to $D$ on the assumption that all edges have the same number of vertices.

It remains to show that $S\cap c(A)=\emptyset$ with probability at least $1/2$. If $b\geq 2d|A|$ then we have
\[
\prob{S\cap c(A)=\emptyset}\geq 1- d|A|/b \geq 1/2 \ .
\]
Finally, note that $2d|A|=O(k)$ since
$
|A|\leq |W|+\sum_{C\subset D} \hs(M_{C,D})
\leq k+2^d akd=O(k)
$
by appealing to Lemma \ref{small_hs} and using the fact that $s_G(C)< ak$ for all $C\subset D$ since $D\in F$.
\end{proof}

A result for hypergraph matching follows along similar lines.
\begin{theorem}\label{thm:hmatch}
Suppose $\matching(G) \leq k' = k/d$. With probability $1-1/\poly(k)$, 
$\matching(G')=\matching(G)$.
\end{theorem}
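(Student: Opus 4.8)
The plan is to mirror the proof of Theorem~\ref{thm:exact} (the $d=2$ case), substituting ``large core'' for ``high-degree vertex'' and reusing the sunflower machinery of this section. The first step is to reduce to the hitting-set promise: the $dk'=k$ vertices incident to a maximum matching of $G$ form a hitting set, so $\matching(G)\le k'$ implies $\hs(G)\le k$, and hence the structural facts of this section---Lemmas~\ref{small_hs},~\ref{lemma:smallf},~\ref{lem_u},~\ref{lem_f} and the definitions of $U$, $F$, $U'$---apply to $G'=(V,E')\sim\sample_{b,d,r}(G)$ with $b=O(k)$ and $r=O(\log k)$. Since $G'$ is a subgraph of $G$ we have $\matching(G')\le\matching(G)$ for free, so the content is the reverse inequality.

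Next I would isolate a ``good event'' $\mathcal{E}$, the analogue of \eqref{eq:magic}: that $F\subseteq E'$, and that $s_{G'}(C)>k$ for every \emph{inclusion-minimal large core} $C$, i.e.\ every core that is large but has no proper subset that is large. By Lemma~\ref{lem_f}, $\prob{F\subseteq E'}\ge 1-1/\poly(k)$. For the second part, the key observation is that the proof of Lemma~\ref{lem_u} uses the hypothesis $C\in U'$ only through the bound $s_G(C')=O(k)$ for all proper subsets $C'\subsetneq C$ (which is what keeps $|A|$, $|A_P|$, and hence $b$, at $O(k)$ via Lemma~\ref{small_hs}); an inclusion-minimal large core is large and satisfies $s_G(C')\le ak=O(k)$ for every proper subset, so the same argument gives $\prob{s_{G'}(C)>k}\ge 1-1/\poly(k)$ for each such $C$. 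A union bound over the $O(k^{d-1})$ inclusion-minimal large cores (bounded exactly as $|U'|$ is in Lemma~\ref{lemma:smallf}) then gives $\prob{\mathcal{E}}\ge 1-1/\poly(k)$.

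Conditioned on $\mathcal{E}$, I would prove $\matching(G')\ge\matching(G)$ by an augmenting argument (the hypergraph analogue of Lemma~\ref{eq1suffices}). Fix a maximum matching $M=\{D_1,\dots,D_m\}$ of $G$, $m=\matching(G)\le k'$, and process the edges $D_i\notin E'$ one at a time, always maintaining a size-$m$ matching of $G$ that agrees with $M$ outside the processed edges. To process such a $D_i$: since $F\subseteq E'$, $D_i\notin F$, so $D_i$ contains a large core and hence an inclusion-minimal large core $C\subseteq D_i$; by $\mathcal{E}$ there is a sunflower in $G'$ with core $C$ and more than $k$ petals, which are pairwise disjoint. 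The other $m-1$ edges of the current matching use at most $d(m-1)\le dk'-d=k-d<k$ vertices in total, and each vertex lies in at most one petal, so some petal $P$ avoids all of them; replace $D_i$ by the hyperedge $C\cup P\in E'$. This new edge is disjoint from every other current matching edge---$P$ by the choice just made, and $C\subseteq D_i$ because $D_i$ is disjoint both from the still-unprocessed original edges of $M$ and from the already-inserted replacement edges (each of which was itself chosen disjoint from $D_i$). Once every bad edge has been processed we obtain a matching of $G'$ of size $m=\matching(G)$.

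The main obstacle is the second step: choosing the correct family of cores for the good event. It is tempting to use $U'$ directly, but this fails---an edge can contain a large core and yet no core in $U'$, since a large core may contain a core that is significant but not large---so one must instead route each bad edge to an inclusion-minimal large core and check both that Lemma~\ref{lem_u}'s proof is robust enough to cover these (it is, as it only needs $s_G=O(k)$ on proper subsets) and that there are only $O(k^{d-1})$ of them. The rest---the reduction to $\hs(G)\le k$ and the replacement argument---is routine bookkeeping; the one thing to watch is that the hypothesis $k'=k/d$ is exactly what makes $d(m-1)<k$, so that the guarantee $s_{G'}(C)>k$ of Lemma~\ref{lem_u} always leaves an unused petal for the augmentation.
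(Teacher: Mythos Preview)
Your proposal is correct and follows the same line as the paper's proof: reduce to $\hs(G)\le k$, invoke Lemma~\ref{lem_f} for edges in $F$, and for each matching edge containing a large core use the recovered sunflower to swap in a petal that avoids the remaining $d(m-1)<k$ vertices.  If anything your write-up is more careful than the paper's four-sentence sketch: you make explicit the need for inclusion-minimal \emph{large} (rather than $U'$) cores and check that Lemma~\ref{lem_u}'s proof and the $O(k^{d-1})$ count both survive the weaker hypothesis $s_G(C')\le ak$, and your sequential replacement argument cleanly handles the potential conflict between multiple swapped-in petals that the paper's ``at most $k$ of those intersect $M\setminus\{D\}$'' leaves implicit.
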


\section{Sampling Kernels for Subgraph Search Problems}
\label{sec:general}

Finally, we consider a class of problems where the objective is to
search for a subgraph $H$ of $G(V,E)$ which satisfies some property $\mathcal{P}$.
In the parametrized setting, we typically search for the largest $H$
which satisfies this property, subject to the promise that the size of
any $H$ satisfying $\mathcal{P}$ is at most $k$.
For concreteness, we assume the size is captured by the number of
vertices in $H$, and our objective is to find a maximum cardinality
satisfying subgraph. 
The  sampling primitive $\sample_{b,2,1}$ can be used here
when $\mathcal{P}$ is preserved under
vertex contraction:
if $G'$ is a vertex contraction of $G$, then any subgraph $H$ of $G'$
satisfying $\mathcal{P}$ also satisfies $\mathcal{P}$ for $G$ (with
vertices suitably remapped). 
Here, the vertex contraction of vertices $u$ and $v$ creates a new vertex
whose neighbors are $\Gamma(u) \cup \Gamma(v)$.
Many well-studied problems posess the required structure, including:

\begin{trivlist}
\item
 --- $b$-matching, to find a (maximum cardinality) subgraph $H$ of $G$
  such that the degree of each vertex in $H$ is at most $b$.  Hence, the
  standard notion of matching in Section~\ref{sec:basic:sample} is
  equivalent to 1-matching.
\item
 --- $k$-colorable subgraph, to find a subgraph $H$ that is
  $k$-colorable. The maximum cardinality 2-colorable subgraph forms
  a max-cut, and more generally the maximum cardinality
  $k$-colorable subgraph is a max $k$-cut. 
\item
---  other maximum subgraph problems, such as to find the largest
  subgraph that is a forest, has at least $c$ connected components, or
  is a collection of vertex disjoint paths. 
\end{trivlist}

\begin{theorem}
  Let $\mathcal{P}$ be a graph property preserved under vertex contraction.
  Suppose that the number of vertices in some optimum solution
  $\operatorname{opt}(G)$ is at most $k$. 
  Let $G' \sim \sample_{4k^2,2,1}(G)$.
  With constant probability, we can compute a solution $H$ for
  $\mathcal{P}$ from $G'$ that achieves $|H|=|\opt(G)|$. 
\label{thm:contract}
\end{theorem}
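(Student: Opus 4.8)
The plan is to mimic the argument from the ``Motivating Application'' paragraph of Section~\ref{sec:basic:sample}, the one difference being that the target copy of the optimum now lives in the contracted \emph{color graph} rather than inside $G'$ itself. Fix an optimum $H^\star=\opt(G)$ and write $t=|V(H^\star)|\le k$; let $c:V\to[b]$ with $b=4k^2$ be the pairwise independent hash function defining $G'\sim\sample_{4k^2,2,1}$, and let $G_c$ be the graph on vertex set $[b]$ obtained by contracting each color class $c^{-1}(i)$ of $G$ into a single vertex. Two structural facts about $G_c$ will be used: it is obtained from $G$ by a sequence of vertex contractions, and $\{i,j\}\in E(G_c)$ exactly when $G$ has an $\{i,j\}$-colored edge, which is exactly when $E_{\{i,j\}}\neq\emptyset$; in particular $G_c$ can be reconstructed from $G'$ together with $c$.

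First I would isolate the good event that the $t$ vertices of $V(H^\star)$ receive pairwise distinct colors. By pairwise independence and a union bound over the at most $\binom{k}{2}$ pairs of such vertices this event fails with probability at most $\binom{k}{2}/b<1/8$ (one may instead quote Lemma~\ref{lem:markov:ineq} with $r=t$), so it holds with constant probability; condition on it. Then $v\mapsto c(v)$ is injective on $V(H^\star)$, and for every edge $uv\in E(H^\star)$ we have $c(u)\neq c(v)$ while $uv$ itself witnesses $E_{\{c(u),c(v)\}}\neq\emptyset$, whence $\{c(u),c(v)\}\in E(G_c)$. Thus $v\mapsto c(v)$ embeds $H^\star$ as an isomorphic subgraph of $G_c$, and since $\mathcal{P}$ is a graph property this copy satisfies $\mathcal{P}$ as well, giving $|\opt(G_c)|\ge t=|\opt(G)|$.

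For the reverse inequality I would apply the vertex-contraction hypothesis: since $G_c$ is a vertex contraction of $G$, every subgraph of $G_c$ that satisfies $\mathcal{P}$ lifts (with vertices suitably remapped) to a subgraph of $G$ satisfying $\mathcal{P}$ with the same number of vertices, so $|\opt(G_c)|\le|\opt(G)|$ unconditionally. Combining, on the good event $|\opt(G_c)|=|\opt(G)|$; the claimed algorithm then reconstructs $G_c$ from $G'$ and $c$, computes a maximum satisfying subgraph $\opt(G_c)$, and outputs its lift $H\subseteq G$, which satisfies $\mathcal{P}$ and has $|H|=|\opt(G)|$.

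The step I expect to require the most care is the embedding argument: one must verify that retaining a single (uniformly chosen) representative edge per color pair in $G'$ still induces the entire edge set of the $H^\star$-copy in $G_c$. This is precisely why we pass to the contraction instead of reasoning inside $G'$ directly --- the sampled edges of $G'$ incident to a common color class need not share a vertex, so they do not in general assemble into a copy of $H^\star$ in $G'$, whereas after contracting the color classes they do. The remaining ingredients --- the union bound and the invocation of the stated preservation property --- are routine.
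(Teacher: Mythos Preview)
Your proposal is correct and follows essentially the same approach as the paper: both arguments pass to the color-contracted graph, invoke Lemma~\ref{lem:markov:ineq} (or the equivalent union bound) to isolate $\opt(G)$ into distinct color classes, observe that a copy of $\opt(G)$ then persists in the contracted graph, and appeal to the vertex-contraction hypothesis to lift any optimum back to $G$. Your version is in fact slightly more explicit than the paper's --- you separate the two inequalities $|\opt(G_c)|\ge|\opt(G)|$ and $|\opt(G_c)|\le|\opt(G)|$, you define the contracted graph directly from $G$ (rather than from $G'$) and then observe it is computable from $G'$, and your closing paragraph correctly identifies why the contraction step is essential rather than merely convenient.
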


\begin{proof}
We construct a {\em contracted graph} $G''$ from $G'$ based
on the color classes used in the $\sample$ operator:
we contract all vertices that are  assigned the same color by the hash function $c()$. 
Fix an optimum solution $\opt(G)$ with at most $k$ vertices. 
Lemma \ref{lem:markov:ineq} shows that for $b=4k^2$, all vertices involved in $opt(G)$
are hashed into distinct color values.
Hence, the subgraph $\opt(G)$ is a subgraph of $G''$:
for any edge $e=(u,v)\in \opt(G)$, the edge itself was sampled from
the data structure, or else a different edge with the same color
values was sampled, and so can be used interchangeably in $G''$.
Hence, (the remapped form of) $\opt(G)$ persists in $G''$. 
By the vertex contraction property of $\mathcal{P}$, this means that a
maximum cardinality solution for $\mathcal{P}$ in $G''$ is a maximum
cardinality solution in $G$.

Note that for this application of the subgraph sampling primitive, it suffices to implement
the sampling data structure with a counter for each pair of colors:
any non-zero count corresponds to an edge in $G''$. 
\end{proof}

We can follow the same template laid out in Section~\ref{sec:exact} to
generalize to the weighted case (e.g., where the objective is to find
the subgraph satisfying $\mathcal{P}$ with the greatest total
weight).
We can perform the sampling in parallel for each distinct weight
value, and then round each edge weight  to the closest power of
$(1+\epsilon)$ to reduce the number of weight classes to
$O(\epsilon^{-1} \log W)$, with a loss factor of $(1+\epsilon)$.

\newcommand{\Proc}{Proceedings of the~}
\newcommand{\STOC}{Annual ACM Symposium on Theory of Computing (STOC)}
\newcommand{\FOCS}{IEEE Symposium on Foundations of Computer Science (FOCS)}
\newcommand{\SODA}{Annual ACM-SIAM Symposium on Discrete Algorithms (SODA)}
\newcommand{\SOCG}{Annual Symposium on Computational Geometry (SoCG)}
\newcommand{\ICALP}{Annual International Colloquium on Automata, Languages and Programming (ICALP)}
\newcommand{\ESA}{Annual European Symposium on Algorithms (ESA)}
\newcommand{\CCC}{Annual IEEE Conference on Computational Complexity (CCC)}
\newcommand{\RANDOM}{International Workshop on Randomization and Approximation Techniques in Computer Science (RANDOM)}
\newcommand{\APPROX}{International Workshop on  Approximation Algorithms for Combinatorial Optimization Problems  (APPROX)}
\newcommand{\PODS}{ACM SIGMOD Symposium on Principles of Database Systems (PODS)}
\newcommand{\SSDBM}{ International Conference on Scientific and Statistical Database Management (SSDBM)}
\newcommand{\ALENEX}{Workshop on Algorithm Engineering and Experiments (ALENEX)}
\newcommand{\BEATCS}{Bulletin of the European Association for Theoretical Computer Science (BEATCS)}
\newcommand{\CCCG}{Canadian Conference on Computational Geometry (CCCG)}
\newcommand{\CIAC}{Italian Conference on Algorithms and Complexity (CIAC)}
\newcommand{\COCOON}{Annual International Computing Combinatorics Conference (COCOON)}
\newcommand{\COLT}{Annual Conference on Learning Theory (COLT)}
\newcommand{\COMPGEOM}{Annual ACM Symposium on Computational Geometry}
\newcommand{\DCGEOM}{Discrete \& Computational Geometry}
\newcommand{\DISC}{International Symposium on Distributed Computing (DISC)}
\newcommand{\ECCC}{Electronic Colloquium on Computational Complexity (ECCC)}
\newcommand{\FSTTCS}{Foundations of Software Technology and Theoretical Computer Science (FSTTCS)}
\newcommand{\ICCCN}{IEEE International Conference on Computer Communications and Networks (ICCCN)}
\newcommand{\ICDCS}{International Conference on Distributed Computing Systems (ICDCS)}
\newcommand{\VLDB}{ International Conference on Very Large Data Bases (VLDB)}
\newcommand{\IJCGA}{International Journal of Computational Geometry and Applications}
\newcommand{\INFOCOM}{IEEE INFOCOM}
\newcommand{\IPCO}{International Integer Programming and Combinatorial Optimization Conference (IPCO)}
\newcommand{\ISAAC}{International Symposium on Algorithms and Computation (ISAAC)}
\newcommand{\ISTCS}{Israel Symposium on Theory of Computing and Systems (ISTCS)}
\newcommand{\JACM}{Journal of the ACM}
\newcommand{\LNCS}{Lecture Notes in Computer Science}
\newcommand{\RSA}{Random Structures and Algorithms}
\newcommand{\SPAA}{Annual ACM Symposium on Parallel Algorithms and Architectures (SPAA)}
\newcommand{\STACS}{Annual Symposium on Theoretical Aspects of Computer Science (STACS)}
\newcommand{\SWAT}{Scandinavian Workshop on Algorithm Theory (SWAT)}
\newcommand{\TALG}{ACM Transactions on Algorithms}
\newcommand{\UAI}{Conference on Uncertainty in Artificial Intelligence (UAI)}
\newcommand{\WADS}{Workshop on Algorithms and Data Structures (WADS)}
\newcommand{\SICOMP}{SIAM Journal on Computing}
\newcommand{\JCSS}{Journal of Computer and System Sciences}
\newcommand{\JASIS}{Journal of the American society for information science}
\newcommand{\PMS}{ Philosophical Magazine Series}
\newcommand{\ML}{Machine Learning}
\newcommand{\DCG}{Discrete and Computational Geometry}
\newcommand{\TODS}{ACM Transactions on Database Systems (TODS)}
\newcommand{\PHREV}{Physical Review E}
\newcommand{\NATS}{National Academy of Sciences}
\newcommand{\MPHy}{Reviews of Modern Physics}
\newcommand{\NRG}{Nature Reviews : Genetics}
\newcommand{\BullAMS}{Bulletin (New Series) of the American Mathematical Society}
\newcommand{\AMSM}{The American Mathematical Monthly}
\newcommand{\JAM}{SIAM Journal on Applied Mathematics}
\newcommand{\JDM}{SIAM Journal of  Discrete Math}
\newcommand{\JASM}{Journal of the American Statistical Association}
\newcommand{\AMS}{Annals of Mathematical Statistics}
\newcommand{\JALG}{Journal of Algorithms}
\newcommand{\TIT}{IEEE Transactions on Information Theory}
\newcommand{\CM}{Contemporary Mathematics}
\newcommand{\JC}{Journal of Complexity}
\newcommand{\TSE}{IEEE Transactions on Software Engineering}
\newcommand{\TNDE}{IEEE Transactions on Knowledge and Data Engineering}
\newcommand{\JIC}{Journal Information and Computation}
\newcommand{\ToC}{Theory of Computing}
\newcommand{\MST}{Mathematical Systems Theory}
\newcommand{\Com}{Combinatorica}
\newcommand{\NC}{Neural Computation}
\newcommand{\TAP}{The Annals of Probability}
\newcommand{\TCS}{Theoretical Computer Science}
\newcommand{\IPL}{Information Processing Letter}
\newcommand{\Algorithmica}{Algorithmica}

\newpage

{
\bibliographystyle{abbrv} \bibliography{dynamic}
}

\newpage

\appendix

\section{Omitted Proofs}
\label{sec:omitted}
\begin{proof}[Proof of Lemma~\ref{lem:markov:ineq}]
Let $b=4\epsilon r$. For a vertex $u\in U$, let $I_u$ be the indicator random variable that equals one if there exists $u'\in U\setminus \{u\}$ such that $c(u)=c(u')$.
Since $c$ is pairwise independent, 
$$
\prob{I_u}\leq \sum_{u'\in U\setminus \{u\}} \prob{c(u)=c(u')}
= \sum_{u'\in U\setminus \{u\}} 1/b < r/b=\epsilon/4 
\ .
$$
Let $I=\sum_{u\in U} I_u$ and note that $\expec{I}\le \epsilon r/4$. Then Markov's inequality implies $\prob{I\ge \epsilon r}\leq 1/4$.
\end{proof}

\begin{proof}[Proof of Theorem~\ref{thm:sample:distribution}]
To sample a graph from $\sample_{b,d,r}$ we simply sample $r$ graphs from $\sample_{b,d,r}$ in parallel.
To draw a sample from $\sample_{b,d,1}$, we employ one instance of an
$\ell_0$-sampling primitive for each of the $O(b^d)$ edge colorings~\cite{JST11,CormodeF14}: Given a dynamic graph stream, the $\ell_0$-sampler returns FAIL with probability at most $\delta$.
Otherwise, it returns an edge chosen uniformly at random amongst the edges that have been inserted and not deleted. If there are no such edges, the $\ell_0$-sampler returns NULL. The $\ell_0$-sampling primitive can be implemented using $O(\log^2 n\log\delta^{-1})$ bits of space and $O(\polylog n)$ update time.
In some cases, we can make use of simpler deterministic data structures.
For Theorem~\ref{thm:exact}, we can replace the $\ell_0$ sampler with a counter
and the exclusive-or of all the edge identifiers, since we only
require to recover edges when they are unique within their color
class.
For Theorem~\ref{thm:contract}, we only require a counter.
In both cases, the space cost is reduced to $O(\log n$). 

At the start of the stream we choose a pairwise independent hash function $c:V\rightarrow [b]$. For each weight $w$ and subset $S\subset [b]$ of size $d$, this hash function defines a sub-stream corresponding to the $S$-colored edges of weight $w$. We then use $\ell_0$-sampling on each sub-stream to select a random edge from $E_S$.
\end{proof}

\begin{proof}[Proof of Lemma \ref{eq1suffices}]
We first argue that $\vc(G')=\vc(G)$.  Since the vertex cover of $G$ is of size at most $2k$, we know every vertex in $U$ must be in the vertex cover of both $G$ and $G'$ since the degrees of such vertices in both graphs are strictly greater than $2k$. This follows because if a vertex in $U$ was not in the minimum vertex cover then all its neighbors need to be in the vertex cover.

We next argue that $\matching(G')=\matching(G)$. If property~\eqref{eq:magic}
is satisfied then $G'$ contains a matching of size
$\matching(F)+\sou \geq \matching(G)$ since we may choose the optimum
matching in $F$ and then still be able to match every vertex in
$U$. This follows because the optimum matching in $F$ ``consumes'' at
most $2k$ potential endpoints, since $\matching(G)\leq k$. Hence, each of the (at most $2k$) vertices in $U$ can still be matched to $3k$ possible vertices. 
\end{proof}

\begin{proof}[Proof of Lemma~\ref{lem:removingwe}]
Let $E\setminus E'=\{e_1,e_2, \ldots e_t\}$ and let $G'_i$ be the graph formed by removing $\{e_1,\ldots, e_i\}$ from $G$. So $G'_0=G$ and $G'_t=G'$. For the sake of contradiction, suppose  $\matching(G)> \matching(G')$ and let $r$ be the minimal value such that $\matching(G)> \matching(G'_r)$. 

By the minimality of $r$,  $\matching(G)= \matching(G'_{r-1})$. Consider the maximum weight matching $M$ in $G'_{r-1}$. If $e_r\not \in M$ then $\matching(G)=\matching(G'_{r-1})= \matching(G'_r)$ and we have a contradiction. If $e_r\in M$, let $u,v$ be the endpoints of $e_r$ and the weight of $e_r$ be $w$. Without loss of generality $\deg_{G'_{r}}^w(u)\geq d_{G'}^w(u)\geq 5k$. Hence, there exists edge $ux$ of weight $w$ in $G'_{r}$ where $x$ is not an endpoint in $M$. Therefore, the matching $\left (M\setminus \{e_r\}\right )\cup \{ux\}$ is contained in $G'_r$ and has the same weight as $M$. Hence, $\matching(G)=\matching(G'_{r-1})= \matching(G'_r)$ and we again have a contradiction. \end{proof}

\begin{proof}[Proof of Corollary \ref{cor:semimatch}]
For $1\leq i\leq \log n$, let $G'_i\sim \sample_{b,1,r}$ where $r=O(2^i \alpha^{-2}\log k)$ and $b=2^{i+1}/\alpha$. These graphs can be generated in $\tilde{O}(n^2 \alpha^{-3})$ space. For some $i$, $2^i\leq \match(G)<2^{i+1}$ and hence $\match(G'_i)=\Omega(\match(G)/\alpha)$.
\end{proof}

\begin{proof}[Proof of Lemma~\ref{lemma:smallf}]
For the sake of contradiction assume $|F| > d!(ak)^d$. Then, by the Sunflower Lemma, $F$ contains a $(ak+1)$-sunflower. If the core of this sunflower is empty, $F$ has a matching of size $(ak+1)$ and therefore cannot have a hitting set of size at most $k$. If the sunflower has a non-empty core $C$, then some edge $D \in F$ contains $C$, which contradicts the definition of $F$. Therefore, $|F| \leq d!(ak)^d$.
%

To prove $|U'| \leq  (d-1)!k^{d-1}$,  first note that $|C'| \leq d-1$ for all $C' \in U'$.  For the sake of contradiction assume that $|U'| > (d-1)!k^{d-1}$. Then, by the Sunflower Lemma again, $U'$ contains a $(k+1)$-sunflower. Note that it is a sunflower of cores, not hypergraph edges. Let $C_1, C_2,\ldots , C_{k+1}$ be the sets in the sunflower. Each of these sets has to contain at least one vertex of the minimum hitting set. Therefore, if $C_1, C_2,\ldots , C_{k+1}$ are disjoint (i.e., the core of the sunflower is empty), $U'$ has a matching of size $(k+1)$ and cannot have a hitting set of size at most $k$. If the sunflower has a non-empty core $C^*$, we will show that union of the maximum sunflowers with cores $C_1, C_2,\ldots , C_{k+1}$ contains a sunflower with $k+1$ edges with core $C^*\subset C_1\in U'$. This contradicts  the definition of $U'$ and therefore $|U'| \leq (d-1)!k^{d-1} = O(k^{d-1})$. To construct the sunflower on $C^*$, for $i=1, \ldots, k+1$, we  pick an edge $D_i$ in the maximum sunflower with core $C_i$ such that $D_i\cap C_j=C^*$ for $j\neq i$ and $D_i\cap D_j=C^*$ for $j<i$. This is possible if $a$ is sufficiently large.
\end{proof}

\begin{proof}[Proof of Lemma \ref{small_hs}]
Consider the size of minimum hitting set of $M_{C,D}$. If $\hs(M_{C,D}) > s_G(C) d$, then $M_{C,D}$ has a matching of size greater than $s_G(C)$. This matching together with the set $C$ forms a sunflower with core $C$ and over $s_G(C)$ petals, which contradicts the assumption. Therefore, $\hs(M_{C,D}) \leq s_G(C) d$ as claimed.
\end{proof}

\begin{proof}[Proof of Theorem \ref{thm:hmatch}]
$\hs(G) \leq dk' = k$. Let $M$ be the matching. $F \cap M$ is preserved in $G'$. Consider an edge $D \in M$ such that $C \subseteq D$ for some $C \in U$. Then in $G'$ we can find (by Lemma \ref{lem_u}) at least $k+1$ petals in a sunflower with core either $C$ itself or some $C' \subset C$. At most $k$ of those intersect $M \setminus \{D\}$. Therefore, there is still at least one edge we can pick for the matching.
\end{proof}

\section{Matchings in Planar and Bounded-Arboricity Graphs } \label{sec:planar}
In this section, we present an algorithm for estimating the size of the matching in a graph of bounded arboricity.
Recall that a graph has \emph{arboricity} $\nu$ if its edges can be partitioned into at most $\nu$ forests. 
In particular, it can be shown that a planar graph has arboricity at most 3. We will make repeated use of the fact that the average degree of every subgraph of a graph with arboricity $\nu$ is at most $2\nu$. 

Our algorithm is based on an insertion-only streaming algorithm due to Esfandiari et al.~\cite{EsfandiariHLMO15}. 
They first proved upper and lower bounds on the size of the maximum matching in a graph of arboricity $\nu$. 

\begin{lemma}[Esfandiari et al.~\cite{EsfandiariHLMO15}]
\label{lem:mainproperty}
For any graph $G$ with arboricity $\nu$, define a vertex to be \emph{heavy} if its degree is at least $2\nu+3$ and define an edge to be \emph{shallow} if it is not incident to a heavy vertex. Then,
\[\frac{\max\{ h, s\}}{2.5\nu +4.5} \leq \match(G) \leq 2\max\{h,s\} \ .\] 
where $h$ is the number of heavy  vertices and $s$ is the number of shallow edges.
\end{lemma}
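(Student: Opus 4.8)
The plan is to establish the two inequalities separately. The upper bound is elementary and the lower bound has a ``heavy'' part and a ``shallow'' part; the shallow part is where the real work lies.

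\textbf{Upper bound.} Take a maximum matching $M$ of $G$. Each edge of $M$ is either incident to a heavy vertex, or else (by definition) is shallow. Since every heavy vertex is the endpoint of at most one edge of $M$, at most $h$ edges of $M$ are incident to heavy vertices, and at most $s$ are shallow. Hence $\match(G)=|M|\le h+s\le 2\max\{h,s\}$.

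\textbf{Lower bound, heavy part.} Fix a maximum matching $M$ of $G$ with endpoint set $V(M)$, $|V(M)|=2|M|$. At most $2|M|$ heavy vertices lie in $V(M)$. Let $A$ be the set of heavy vertices outside $V(M)$; by maximality of $M$, every neighbour of every $v\in A$ lies in $V(M)$, and there is no edge between two vertices of $A$, so $A$ contributes at least $(2\nu+3)|A|$ edges into $V(M)$. The subgraph induced on $A\cup V(M)$ has arboricity at most $\nu$, hence fewer than $\nu(|A|+2|M|)$ edges, so $(2\nu+3)|A|\le\nu(|A|+2|M|)$, giving $|A|\le\frac{2\nu}{\nu+3}|M|\le 2|M|$. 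Therefore $h\le 2|M|+|A|\le 4|M|$, and since $2.5\nu+4.5\ge 7$ for every $\nu\ge 1$, we get $h\le(2.5\nu+4.5)\match(G)$ with room to spare.

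\textbf{Lower bound, shallow part.} Here I would show that the subgraph $G_s$ spanned by the shallow edges already contains a matching of size at least $s/(2.5\nu+4.5)$ (and $G_s\subseteq G$ then finishes the job, combining with the heavy part to bound $\max\{h,s\}$). Every vertex of $G_s$ is non-heavy, so $G_s$ has maximum degree at most $2\nu+2$, and it inherits arboricity at most $\nu$, hence average degree below $2\nu$. A greedy matching that repeatedly adds an edge at a current minimum-degree vertex, deleting it and all incident edges, removes at most $4\nu$ edges per step (minimum degree $\le 2\nu-1$ from arboricity, other endpoint of degree $\le 2\nu+2$), which already gives a matching of size $\ge s/(4\nu)$ — enough when $\nu\le 3$. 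The main obstacle is the regime of large $\nu$: the naive greedy bound has the wrong coefficient ($4\nu$ instead of $2.5\nu$), and getting down to $2.5\nu+4.5$ requires a near-optimal matching lower bound for graphs of arboricity $\le\nu$ and maximum degree $\le 2\nu+2$ (as witnessed by clique-like extremal examples such as disjoint copies of $K_{2\nu+1}$, where $s/\match\approx 2\nu$). I would obtain this via a more careful iterative/potential argument — or an LP-duality argument bounding the fractional matching of $G_s$ from below by $s/(2\nu+O(1))$ and paying the $3/2$ integrality gap — rather than the plain greedy bound. Nailing this constant is the crux of the lemma.
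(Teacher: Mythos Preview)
The paper does not prove this lemma at all: it is quoted from Esfandiari et al.\ and used as a black box, so there is no ``paper's own proof'' to compare against. That said, a few remarks on your attempt are in order.

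Your upper bound and your heavy-part lower bound are clean and correct. For the shallow part, however, you are working much harder than necessary and leaving the argument incomplete. You have already observed that the shallow subgraph $G_s$ has maximum degree at most $2\nu+2$; this alone is enough. By Vizing's theorem, $G_s$ admits a proper edge colouring with at most $2\nu+3$ colours, so the largest colour class is a matching of size at least $s/(2\nu+3)$. Since $2\nu+3\le 2.5\nu+4.5$ for every $\nu\ge 1$, this immediately gives
\[
\match(G)\ \ge\ \match(G_s)\ \ge\ \frac{s}{2\nu+3}\ \ge\ \frac{s}{2.5\nu+4.5}\,,
\]
and no potential argument, LP duality, or integrality-gap accounting is needed. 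The arboricity of $G_s$ plays no role in this step; the maximum-degree bound does all the work. Your intuition that disjoint copies of $K_{2\nu+1}$ are the extremal examples is correct, and the Vizing bound matches them up to the additive constant.
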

To estimate $\max \{h,s\}$,  Esfandiari et al.~sampled a set of
vertices $Z$ and (a) computed the exact degree of these vertices, then (b) found the set of all edges in the induced subgraph on these vertices. The fraction of heavy vertices in $Z$ and shallow edges in the induced graph are then used to estimate $h$ and $s$. By choosing the size of $Z$ appropriately, they showed that the resulting estimate was sufficiently accurate on the assumption that $\max \{h,s\}$ is large. In the case where $\max \{h,s\}$ is small, the maximum matching is also small and hence a maximal matching could be constructed in small space using a greedy algorithm.

\paragraph{Algorithm for Dynamic Graph Streams.} In the dynamic graph
stream model, it is not possible to construct a maximal
matching. However, we may instead use the algorithm of
Theorem \ref{thm:exact} to find the exact size of the maximum
matching. Furthermore we can still recover the induced subgraph on
sampled vertices $Z$ via a sparse recovery sketch \cite{GilbertI10}. This can be done space-efficiently because the number of edges is at most $2\nu |Z|$. Lastly, rather than fixing the size of $Z$, we consider sampling each vertex independently with a fixed probability as this simplifies the analysis significantly. The resulting algorithm is as follows:

\begin{enumerate}
\item Invoke algorithm of Theorem \ref{thm:exact} for $k=2 n^{2/5}$ and let $r$ be the reported matching size. 
\item In parallel, sample vertices with probability $p=8\epsilon^{-2}n^{-1/5}$  and let $Z$ be the set of sampled vertices. 
Compute the degrees of vertices in $Z$ and maintain a $2\nu |Z|$-sparse recovery 
sketch of the edges in the induced graph on $Z$. Let $s_Z$ be the number of shallow edges in the induced graph on $Z$ and let $s_Z$ be the number of heavy vertices in $Z$. Return $\max\{r,h_Z/p,s_Z/p^2\}$.
\end{enumerate}

\paragraph{Analysis.}
Our analysis relies on the following lemma that shows that $\max\{h_Z/p,s_Z/p^2\}$ is a $1+\epsilon$ approximation for $\max \{s,h\}$ on the assumption that $\max \{s,h\}\geq n^{2/5}$.

 \begin{lemma}
 \label{lem:shallow:edge} $   \prob {|\max\{h_Z/p,s_Z/p^2\}-\max \{ s,h\}|\le \epsilon\cdot \max\{n^{2/5}, s,h\}} \ge 4/5 \ . $
 \end{lemma}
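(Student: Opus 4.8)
The plan is to analyze the two estimators $h_Z/p$ and $s_Z/p^2$ separately, show each concentrates around its target ($h$ and $s$ respectively) with additive error $\epsilon \max\{n^{2/5}, s, h\}$, and then combine via a union bound. First I would treat the heavy-vertex estimator. Since each heavy vertex is included in $Z$ independently with probability $p$, we have $\expec{h_Z} = ph$ and $h_Z$ is a sum of independent indicators, so a Chernoff bound gives $\prob{|h_Z/p - h| \geq \epsilon \max\{n^{2/5}, h\}} \leq 2\exp(-\Omega(p \epsilon^2 \max\{n^{2/5},h\}))$. Plugging in $p = 8\epsilon^{-2} n^{-1/5}$, the exponent is $\Omega(p\epsilon^2 n^{2/5}) = \Omega(n^{1/5})$, which is $\leq 1/10$ for large $n$. (For small $n$ the statement can be checked directly, or absorbed into the "with high probability" language used elsewhere.)

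Next I would handle the shallow-edge estimator, which is the more delicate part. A shallow edge $uv$ survives into the induced subgraph on $Z$ iff both $u \in Z$ and $v \in Z$, which happens with probability $p^2$; hence $\expec{s_Z} = p^2 s$. The obstacle here is that the events "edge $e$ survives" are not independent across edges that share an endpoint, so a naive Chernoff bound does not directly apply. I would control the variance instead: $\var{s_Z} \leq \expec{s_Z} \cdot (\text{max co-occurrence factor})$, and the key structural fact is that in a graph of arboricity $\nu$ the induced subgraph on any vertex set has average degree at most $2\nu$, so the number of pairs of shallow edges sharing an endpoint is $O(\nu \cdot s)$ in expectation over the endpoints that land in $Z$. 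More carefully, $\var{s_Z} = \sum_{e} \var{\mathbf{1}[e \text{ survives}]} + \sum_{e \neq f} \covar{\mathbf{1}[e],\mathbf{1}[f]}$; edges sharing no endpoint are independent (covariance $0$), and for the $O(\nu s)$ pairs sharing an endpoint the covariance is at most $p^3$. Thus $\var{s_Z} = O(p^2 s + \nu p^3 s)$, and since each shallow edge has both endpoints of degree $< 2\nu+3$, one can argue $s = O(\nu \cdot \match(G)) = O(\nu \max\{h,s\})$ so the bookkeeping stays clean. A Chebyshev inequality then gives $\prob{|s_Z/p^2 - s| \geq \epsilon \max\{n^{2/5}, s\}} \leq \var{s_Z}/(p^4 \epsilon^2 \max\{n^{2/5},s\}^2)$, and plugging in $p = 8\epsilon^{-2}n^{-1/5}$ together with $s = O(\nu \max\{n^{2/5},s\})$ makes this $O(1/n^{1/5})$ up to the $\nu$ and $\epsilon$ factors that are already accounted for in the space bound — I would arrange constants so this is $\leq 1/10$.

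Finally I would combine: $|\max\{h_Z/p, s_Z/p^2\} - \max\{s,h\}| \leq \max\{|h_Z/p - h|, |s_Z/p^2 - s|\}$ since $|\max\{a,b\} - \max\{c,d\}| \leq \max\{|a-c|,|b-d|\}$, so the bad event is contained in the union of the two bad events, each of probability at most $1/10$, giving the claimed probability at least $4/5$. The main obstacle is the second moment computation for $s_Z$ and, relatedly, invoking the arboricity bound correctly to ensure the number of edge pairs sharing an endpoint — and the quantity $s$ itself — are both $O(\nu \max\{n^{2/5}, s, h\})$, so that Chebyshev's inequality yields a probability bound that is small for the chosen value of $p$.
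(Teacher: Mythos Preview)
Your proposal is correct and follows essentially the same route as the paper: Chernoff for $h_Z$, a second-moment (Chebyshev) argument for $s_Z$ using the fact that every shallow edge has both endpoints of degree less than $2\nu+3$ to bound the number of covariance terms, then a union bound and the inequality $|\max\{a,b\}-\max\{c,d\}|\le\max\{|a-c|,|b-d|\}$.

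Two small cleanups. First, the detour through $s=O(\nu\cdot\matching(G))$ is unnecessary: in the Chebyshev step you only need $s\le\max\{n^{2/5},s\}$, which is immediate, so the $\nu$ never enters the probability bound. Second, the paper tightens the variance one step further by observing $(2\nu+3)\le 1/p$, which collapses $O(p^2 s+\nu p^3 s)$ to at most $2sp^2$; with that, the Chebyshev bound becomes $2sp^2/(p^2\epsilon\max\{n^{2/5},s\})^2\le 2/(p^2\epsilon^2 n^{2/5})=\epsilon^2/32$, a clean constant rather than the $O(\nu/n^{1/5})$ you obtain. Either way the bound is at most $1/10$ and the argument goes through.
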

\begin{proof}
First we show $s_Z/p^2$ is a sufficiently good estimate for $s$.
Let $S$ be the set of shallow edges in $G$ and let $E_Z$ be the set of edges in the induced graph on $Z$.
For each shallow edge $e\in S$, define an indicator random variable $X_e$ where $X_e=1$ iff $e\in E_Z$ and note that $s_Z=\sum_{e\in S}X_e$. Then,
\[
\expec{s_Z}=sp^2 \quad \mbox{ and }  \quad
\var{s_Z}=
\sum_{e\in S} \sum_{e'\in S} \expec{X_{e}X_{e'}}-\expec{X_{e}}\expec{X_{e'}} \ .
\] 
Note that 
\[
\sum_{e'\in S} \expec{X_{e}X_{e'}}-\expec{X_{e}}\expec{X_{e'}}
=
\begin{cases}
p^2-p^4 & \mbox{if $e=e'$} \\
p^3-p^4 & \mbox{if $e$ and $e'$ share exactly one endpoint} \\
0 & \mbox{if $e$ and $e'$ share no endpoints} 
\end{cases} \ .
\]
and since there are at most $2\nu+3$ edges that share an endpoint with a shallow edge,
\[
\var{s_Z}\leq s(p^2-p^4+(2\nu+3)p^3-p^4)\leq 2sp^2
\]
on the assumption that $(2\nu+3)\leq 1/p$.
We then use  Chebyshev's inequality  to obtain 

\begin{equation}\label{szeq}
\prob {|s_Z-sp^2 |\le p^2 \epsilon\cdot \max\{n^{2/5}, s\}} 
 \leq  
\frac{2sp^2}{(p^2 \epsilon\cdot \max\{n^{2/5}, s\})^2} 
\leq 9/10 \ .
\end{equation}

Next we show that $h_Z/p$ is a sufficiently good estimate for $h$.
 Let $H$ denote the set of $h$ heavy vertices in $G$ and define an indicator random variable $Y_v$  for each $v\in H$, where $Y_v=1$ iff $v\in Z$. Note that $h_Z=\sum_{v\in H} Y_v$ and $\expec{h_Z}=hp$. Then, by an application of the Chernoff-Hoeffding bound,
 \begin{equation}\label{hzeq}
 \prob{|h_Z-hp|\geq \epsilon  p  \max\{h,n^{2/5}\} }\leq \exp(-\epsilon^2  pn^{2/5}/3) \leq 9/10 \ . 
 \end{equation}
 
 Therefore, it follows from Eq.~\ref{szeq} and \ref{hzeq} that
 $\prob{\max \{h_Z/p,s/p^2 \} \leq \epsilon \max\{h,s,n^{2/5}\}}\geq 4/5$.
%
\end{proof}


\begin{theorem}
\label{thm:esimate:planar:matching}
There exists a $\tilde{O}(\nu \epsilon^{-2}n^{4/5} \log \delta^{-1} )$-space dynamic graph stream algorithm that returns a $(5\nu+9)(1+\epsilon)^2$ approximation of $\match(G)$ with probability at least $1-\delta$ where $\nu$ is the arboricity of $G$. 
\end{theorem}
\begin{proof}
To argue the approximation factor, first suppose $\match(G)\leq 2n^{2/5}$. In this case $r=\match(G)$ and $\max \{s,h\}\leq (2.5\nu+4.5) \match(G)$ by appealing to Lemma \ref{lem:mainproperty}. Hence,
\[\match(G) \leq \max\{r,h_Z/p,s_Z/p^2\}\leq (2.5\nu+4.5) \match(G)\]
Next suppose  $\match(G)\geq 2n^{2/5}$. In this case, $\max\{s,h\}\geq n^{2/5}$  by Lemma \ref{lem:mainproperty}. Therefore, by Lemma \ref{lem:shallow:edge}, $\max\{h_Z/p,s_Z/p^2\}=(1\pm \epsilon) \max\{s,h\}$, and so
\[\frac{\match(G)}{2(1+\epsilon)} \leq \max\{r,h_Z/p,s_Z/p^2\}\leq (1+\epsilon) \max\{s,h\} \leq (1+\epsilon) (2.5\nu+4.5) \match(G) \]

To argue the space bound, recall that the algorithm used in Theorem \ref{thm:exact} requires $\tilde{O}(n^{4/5})$ space. Note that $|Z|\leq 2np=\tilde{O}(\epsilon^{-1/2} n^{4/5})$ with high probability. Hence, to sample the vertices $Z$ and maintain a $2\nu |Z|$-sparse recovery requires $\tilde{O}(n^{4/5}\nu)$ space.
\end{proof}

\section{Lower Bounds}
\label{sec:lowerbounds}

\subsection{Matching and Hitting Set Lower Bounds}

The following theorem establishes that the space-use of our matching, vertex cover, hitting set, and hyper matching algorithms is optimal up to logarithmic factors.

\begin{theorem}\label{lower_hs}
Any (randomized) parametrized streaming algorithm for the minimum $d$-hitting set or maximum (hyper)matching problem with parameter $k$ requires $\Omega(k^d)$ space.
\end{theorem}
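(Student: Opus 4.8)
The plan is to prove the lower bound via a reduction from the one-way communication complexity of the Indexing problem (or equivalently, via a fooling-set argument), which is the standard route for lower bounds against streaming algorithms that must output an actual solution. I would first handle $d=2$ (vertex cover / matching) to fix intuition and then lift to general $d$. The key observation driving the construction is that the output of a hitting-set or matching algorithm effectively encodes $\Omega(k^d)$ bits of information about which $d$-tuples of a small ``core'' vertex set have been touched by edges in the stream.

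\emph{Construction for general $d$.} Fix a ground set consisting of $k$ disjoint ``blocks'' $B_1, \dots, B_k$, each of size roughly $d$, together with a large reservoir of ``private'' vertices. The index set will be (a subset of) the collection of $d$-element transversals that pick one vertex from each of $d$ chosen blocks; there are $\Theta(k^d)$ such potential hyperedges. Given a string $x \in \{0,1\}^{\Theta(k^d)}$, Alice inserts exactly the hyperedges indexed by the $1$-entries of $x$; each such hyperedge is also padded so that, were it the only relevant edge, it could be hit (or matched) locally. Then Bob, holding an index $i$ corresponding to a particular transversal $e^\star$, appends a small gadget of new hyperedges on fresh private vertices that are ``anchored'' at the vertices of $e^\star$ in such a way that the minimum hitting set size (resp. maximum matching size) changes by one depending on whether $e^\star$ was already present in Alice's input. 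Running the streaming algorithm on Alice's stream, sending its memory state to Bob, and letting Bob finish the stream and inspect the reported solution then recovers the bit $x_i$. Since Indexing on $N$ bits requires $\Omega(N)$ communication for one-way (even randomized, bounded-error) protocols, and here $N = \Theta(k^d)$, we get the claimed $\Omega(k^d)$ space bound. (For the pure matching lower bound one uses a perfect-matching-style gadget so that $e^\star$ being present lets Bob's gadget extend the matching; for hitting set, Bob's gadget forces a dedicated covering vertex unless $e^\star$ already sits in the optimum hitting set.)

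\emph{Key steps, in order.} (1) Specify the blocks, the reservoir, and the exact family of $\Theta(k^d)$ candidate hyperedges, verifying the count is $\Omega(k^d)$ and that each hyperedge has size exactly $d$. (2) Describe Alice's encoding of $x$ and verify the promise: the minimum hitting set of Alice's graph alone is $O(k)$ (one vertex per block suffices, since every candidate hyperedge is a transversal of the blocks), so the parameter is indeed $k$ up to constants — rescale $k$ accordingly. (3) Describe Bob's index gadget: a handful of new hyperedges on private vertices whose optimal cover cost is one smaller precisely when $e^\star \in$ Alice's edge set, again keeping the total optimum $O(k)$ and hyperedge size $d$. (4) Argue correctness of the decoding: the streaming algorithm, being required to output a valid optimum solution (not merely its size), lets Bob read off the answer — in fact even an algorithm reporting only the \emph{size} suffices here since the size differs by one in the two cases. (5) Invoke the $\Omega(N)$ lower bound for one-way randomized Indexing to conclude. (6) Note the matching version is essentially the same construction with the covering gadget replaced by an augmenting gadget, and that $d=2$ recovers the vertex-cover/matching bound matching Theorems~\ref{thm:exact} and the hitting-set upper bound.

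\emph{Main obstacle.} The delicate point is designing Bob's gadget so that a single bit of change in Alice's input moves the optimum by exactly one \emph{without} inadvertently giving the algorithm a cheaper global solution that bypasses the intended structure — i.e., ensuring the adversary cannot ``cheat'' by covering many candidate hyperedges with few reservoir vertices, which would destroy the information-theoretic content. This is handled by making the reservoir/private vertices each incident to only one hyperedge (so they are useless for covering anything but their own edge) and by keeping the blocks small and disjoint so that the only economical cover of Alice's part is one vertex per block; then Bob's few extra hyperedges are genuinely forced. Verifying this ``no shortcut'' property carefully — including the matching analogue, where one must rule out large matchings using only padding vertices — is where the real work lies; the communication-complexity wrapper is then routine.
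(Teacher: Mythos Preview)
Your approach is essentially the paper's: reduce from one-way \textsc{Membership}/\textsc{Index} on $N=\Theta(k^d)$ bits, with Alice encoding her string as presence/absence of hyperedges on an $O(k)$-vertex core and Bob appending a gadget that shifts the optimum (hitting-set size or matching size) by exactly one depending on the queried bit. The paper's concrete instantiation is the transpose of your block layout: it uses $d$ partitions each containing $k$ ``starred'' vertices $v^*_{i,j}$, lets Alice's $k^d$ candidate hyperedges be the full transversals $(v^*_{1,j_1},\dots,v^*_{d,j_d})$, and has Bob (holding index $(J_1,\dots,J_d)$) attach a pairwise-disjoint private $d$-edge $(v^*_{i,j},v^1_{i,j},\dots,v^{d-1}_{i,j})$ to every starred vertex with $j\neq J_i$. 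Because Bob's $d(k-1)$ edges are disjoint, any hitting set has size at least $d(k-1)$, and choosing exactly those $d(k-1)$ starred vertices meets the bound while hitting every Alice-edge except possibly the queried one; this turns the ``no-shortcut'' verification you correctly flag as the main obstacle into a one-line disjointness count, and the matching statement follows from the same picture.

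One genuine slip in your write-up: the claim that ``one vertex per block suffices'' to hit all of Alice's hyperedges is false once blocks have size greater than one, since a transversal can avoid your chosen representative in every block it touches. The promise $\hs(G)=O(k)$ is still salvageable (take all $kd$ block vertices, or all vertices in $k-d+1$ blocks), so the argument survives, but the stated justification is wrong. The paper's $d$-partitions-of-size-$k$ layout sidesteps this wrinkle entirely and is what makes Bob's gadget and the tight size analysis clean.
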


\begin{proof}
We reduce from the \textsc{Membership} communication problem:

\begin{center}
\noindent\framebox{\begin{minipage}{0.95\columnwidth}
\textsc{Membership}\\
\emph{Input}: Alice has a set $X\subseteq [n]$, and Bob has an element $1\leq x\leq n$.\\
\emph{Question}: Bob wants to check whether $x\in X$.
\end{minipage}}
\end{center}

There is a lower bound of $\Omega(n)$ bits of communication from Alice to Bob, even allowing randomization~\cite{Ablayev96}.

Let $S = s_1s_2...s_n$ be the characteristic string of $X$, i.e. a binary string such that $s_i = 1$  iff $i \in X$. Let $k = \sqrt[d]{n}$. Fix a canonical mapping $h:[n] \rightarrow [k]^d$. This way we can view an $n$ bit string as an adjacency matrix of a $d$-partite graph. Construct the following graph $G$ with $d$ vertex partitions $V_1, V_2,..., V_d$:
\begin{itemize}
\item Each partition $V_i$ has $dk$ vertices: for each $j \in [k]$ create vertices $v^{*}_{i,j}$, $v^{1}_{i,j}$, $v^{2}_{i,j}$,..., $v^{d-1}_{i,j}$.
\item Alice inserts a hyperedge $(v^{*}_{1,j_1}, v^{*}_{2,j_2},..., v^{*}_{d,j_d})$ iff the corresponding bit in the string $S$ is 1, i.e., $s_a = 1$ where $h(a) = (j_1,j_2,...,j_d)$.
\item Let $h(x) = (J_1, J_2,..., J_d)$. Bob inserts edge $(v^{*}_{i,j}, v^{1}_{i,j}, v^{2}_{i,j},..., v^{d-1}_{i,j})$ iff $j \neq J_i$.
\end{itemize}
Alice runs the hitting set algorithm on the edges she is inserting using space $f(k)$. Then she sends the memory contents of the algorithm to Bob, who finishes running the algorithm on his edges.

The minimum hitting set should include vertices $v^{*}_{i,j}$ such that $j \neq J_i$. If edge $(v^{*}_{1,J_1}, v^{*}_{2,J_2},..., v^{*}_{d,J_d})$ is in the graph, we also need to include one of its vertices. Therefore,
$$x\in X \iff s_{x} = 1 \iff (v^{*}_{1,J_1}, v^{*}_{2,J_2},..., v^{*}_{d,J_d}) \textrm{ is in } G \iff \hs(G) = dk-d+1$$
On the other hand,
$$x \not\in X \iff s_{x} = 0 \iff (v^{*}_{1,J_1}, v^{*}_{2,J_2},..., v^{*}_{d,J_d}) \textrm{ is not in } G \iff \hs(G) = dk-d$$
Alice only sends $f(k)$ bits to Bob. Therefore, $f(k) = \Omega(n) = \Omega(k^d)$.

For the lower bound on matching we use the same construction. For each vertex $v^{*}_{i,j}$ such that $j \neq J_i$ maximum matching should include $(v^{*}_{i,j}, v^{1}_{i,j}, v^{2}_{i,j},..., v^{d-1}_{i,j})$. If edge $(v^{*}_{1,J_1}, v^{*}_{2,J_2},..., v^{*}_{d,J_d})$ is in the graph, we include it in the matching as well. Therefore,
$$x\in X \iff s_{x} = 1 \iff (v^{*}_{1,J_1}, v^{*}_{2,J_2},..., v^{*}_{d,J_d}) \textrm{ is in } G \iff \match(G) = dk-d+1$$
And
$$x \not\in X \iff s_{x} = 0 \iff (v^{*}_{1,J_1}, v^{*}_{2,J_2},..., v^{*}_{d,J_d}) \textrm{ is not in } G \iff \match(G) = dk-d$$
\end{proof}

\eat{
\subsection{Well-Behaved Problems: Each Property is Necessary}

In this subsection, we show that each of the four properties in the definition of \emph{well-behaved} problems is important in the following sense: for each property, there are problems which violate this property and have $\Omega(n)$ lower bound for randomized algorithms (even for constant $k$).

\begin{itemize}
\item \textbf{Violating Property~\ref{prop:1}}: In the \textsc{Minimum Fill-In}$(k)$ problem, we are given an undirected graph $G=(V,E)$ and an integer $k$. The question is whether we can add at most $k$ edges such that the resulting graph is chordal, i.e., has no induced cycle of length $\geq 4$. If we start with our original graph as $C_4$ (cycle on four vertices), then this graph is not chordal and we need to add an edge which is not currently present to make it chordal. Thus, there is a solution for this problem which is not a subgraph of given graph, thus violating Property~\ref{prop:1}. It is shown in Section~\ref{sec:fill-in} that for the \textsc{Minimum Fill-In}$(k)$ problem any $p$-pass (randomized) streaming algorithm needs $\Omega(n/p)$ space, even when $k=0$.

\item \textbf{Violating Property~\ref{prop:2}}: In the \textsc{Edge Dominating Set}$(k)$ problem, we are given an undirected graph $G=(V,E)$ and an integer $k$. The question is whether there exists a set of edges $X\subseteq E$ of size at most $k$ such that every edge in $E\setminus X$ is incident with some edge of $X$. We build $G$ as follows: take $n+2$ vertices $x, y, v_1, v_2, \ldots, v_n$. For each $i\in [n]$ add an edge $x-v_{i}$ of weight $i$. Also add an edge $v_{1}-y$ of weight 2. Clearly the edge $x-{v_1}$ is an edge dominating set of weight 1. Consider the edge $x-v_{2}$ of weight 2. Although it is isomorphic to the edge $x-{v_1}$, the edge $x-v_{2}$ is not an edge dominating set as the edge $y-{v_1}$ is not dominated by it, thus violating Property~\ref{prop:2}. It is shown in Section~\ref{sec:lb-eds} that for the \textsc{Edge Dominating Set}$(k)$ problem any (randomized) streaming algorithm needs $\Omega(n)$ space, even when $k=1$.

\item \textbf{Violating Property~\ref{prop:3}}: In the \textsc{Edge Bipartization}$(k)$ problem, we are given an undirected graph $G=(V,E)$ and an integer $k$. The question is whether we can delete at most $k$ edges such that the resulting graph is bipartite. This is the edge version of the \textsc{odd Cycle Transversal} problem. Let us take our original graph as $G=C_3$ (cycle on three vertices) and let $e$ be any edge of $G$. Then $G\setminus e$ has the empty graph as a solution. However, the empty graph is clearly not a solution for $G=C_3$, this violating Property~\ref{prop:3}. It is shown in Section~\ref{sec:lb-deletion} that for the \textsc{Edge Bipartization}$(k)$ problem any $p$-pass (randomized) streaming algorithm needs $\Omega(n/p)$ space, even when $k=0$.

\item \textbf{Violating Property~\ref{prop:4}}: In the \textsc{Path}$(k)$ problem, we are given an undirected graph $G=(V,E)$ and an integer $k$. The question is whether there exists a path of length at least $k$. Here, we consider the problem when $k=2$. Take the original graph $G$ as containing two edges given by $u'-u$ and $v-v'$ of weights $1+\epsilon$ and $(1+\epsilon)^2$ respectively. Let us merge the two vertices $u$ and $v$ into say the vertex $w$. The graph $(\cup_{i=1}^{ \log_{1+\epsilon}W} G_i^{u\leftrightarrow v})$ contains a path of length 2 given by $u'-w-v'$. But the original graph $G$ clearly does not contain a path of length $2$. It is shown in Section~\ref{sec:lb-editing} that for the \textsc{Path}$(k)$ problem any $p$-pass (randomized) streaming algorithm needs $\Omega(n/p)$ space, even when $k=3$.
\end{itemize}
}

\subsection{Lower Bounds for Problems considered by Fafianie and Kratsch~\cite{kratsch}}

\textbf{Comparison with Lower Bounds for Streaming Kernels}: Fafianie and Kratsch~\cite{kratsch} introduced the notion of kernelization in the streaming setting as follows:

\begin{definition}
A 1-pass streaming kernelization algorithm is receives an input $(x,k)$ and returns a kernel, with the restriction that the space usage of the algorithm is bounded by $p(k)\cdot \log |x|$ for some polynomial $p$.
\end{definition}

Fafianie and Kratsch~\cite{kratsch} gave lower bounds for several parameterized problems. In particular, they showed that:

\begin{itemize}
\item Any 1-pass kernel for \textsc{Edge Dominating Set}$(k)$ requires $\Omega(m)$ bits, where $m$ is the number of edges. However, there is a 2-pass kernel which uses $O(k^{3}\cdot \log n)$ bits of local memory and $O(k^2)$ time in each step and returns an equivalent instance of size $O(k^{3}\cdot \log k)$.
\item The lower bound of $\Omega(m)$ bits for any 1-pass kernel also holds for several other problems such as \textsc{Cluster Editing}$(k)$, \textsc{Cluster Deletion}$(k)$, \textsc{Cluster Vertex Deletion}$(k)$, \textsc{Cograph Vertex Deletion}$(k)$, \textsc{Minimum Fill-In}$(k)$, \textsc{Edge Bipartization}$(k)$, \textsc{Feedback Vertex Set}$(k)$, \textsc{Odd Cycle Transversal}$(k)$, \textsc{Triangle Edge Deletion}$(k)$, \textsc{Triangle Vertex
Deletion}$(k)$, \textsc{Triangle Packing}$(k)$, \textsc{$s$-Star Packing}$(k)$, \textsc{Bipartite Colorful Neighborhood}$(k)$.
\item Any $t$-pass kernel for \textsc{Cluster Editing}$(k)$ and \textsc{Minimum Fill-In}$(k)$ requires $\Omega(n/t)$ space.
\end{itemize}

\eat{
\begin{observation}
Any lower bound for parameterized streaming kernels also holds for the parameterized streaming algorithms.
\end{observation}
\begin{proof}
The idea is that if we can actually solve the problem, then we can get a trivial kernel as follows: Fix a trivial YES instance say $I_1$ and a trivial NO instance $I_2$. Now, solve the given instance using the parameterized algorithm and according to the answer then map it predetermined YES or NO instance. This shows that given a parameterized algorithm one can achieve a streaming kernel with the same space requirements.
\end{proof}
}

In this section, we give $\Omega(n)$ lower bounds for the space complexity of all the problems considered by Fafianie and Kratsch. In addition, we also consider some other problems such as \textsc{Path}$(k)$ which were not considered by Fafianie and Kratsch. A simple observation shows that any lower bound for parameterized streaming kernels also transfers for the parameterized streaming algorithms. Thus the results of Fafiane and Kratsch~\cite{kratsch} also give lower bounds for the parameterized streaming algorithms for these problems. However, our lower bounds have the following advantage over the results of~\cite{kratsch}:

\begin{itemize}
\item All our lower bounds also hold for \emph{randomized algorithms}, whereas the kernel lower bounds were for deterministic algorithms.
\item With the exception of \textsc{Edge Dominating Set}$(k)$, all our lower bounds also hold for \emph{constant number of passes}.
\end{itemize}

\subsubsection{Lower Bound for \textsc{Edge Dominating Set}}
\label{sec:lb-eds}

We now show a lower bound for the \textsc{Edge Dominating Set}$(k)$ problem.
\begin{definition}
Given a graph $G=(V,E)$ we say that a set of edges $X\subseteq E$ is an edge dominating set if every edge in $E\setminus X$ is incident on some edge of $X$.
\end{definition}

\begin{center}
\noindent\framebox{\begin{minipage}{0.95\columnwidth}
\textsc{Edge Dominating Set}$(k)$ \hfill \emph{Parameter}: $k$\\
\emph{Input}: An undirected graphs $G$ and an integer $k$\\
\emph{Question}: Does there exist an edge dominating set $X\subseteq E$ of size at most $k$?
\end{minipage}}
\end{center}

\eat{
We reduce from the \textsc{Membership} problem:

\begin{center}
\noindent\framebox{\begin{minipage}{0.95\columnwidth}
\textsc{Membership}\\
\emph{Input}: Alice has a set $X\subseteq [n]$, and Bob has an element $1\leq x\leq n$.\\
\emph{Question}: Bob wants to check whether $x\in X$.
\end{minipage}}
\end{center}
}

\begin{theorem}
For the \textsc{Edge Dominating Set}$(k)$ problem, any (randomized) streaming algorithm needs $\Omega(n)$ space
\label{thm:eds:lower}.
\end{theorem}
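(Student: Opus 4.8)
The plan is to reduce from the one-way communication problem \textsc{Membership}, which by~\cite{Ablayev96} requires $\Omega(n)$ bits of communication from Alice to Bob even for randomized protocols. Given an instance in which Alice holds $X \subseteq [n]$ and Bob holds $x \in [n]$, I would build a graph $G$ on vertex set $\{p,q,w\}\cup\{v_1,\dots,v_n\}$ as follows: Alice first inserts the edge $pq$ together with the edge $pv_i$ for every $i\in X$, and then Bob inserts the single edge $v_xw$. Alice's edges depend only on $X$ and Bob's edge only on $x$, so Alice can run a hypothetical streaming algorithm for \textsc{Edge Dominating Set}$(k)$ with $k=1$ on her portion of the stream, send the working memory of the algorithm to Bob, and have him finish the stream on his edge; the length of the message is at most the space used by the algorithm.

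The heart of the argument is the observation that a single edge $uv$ is an edge dominating set of $G$ if and only if every edge of $G$ is incident to $u$ or to $v$. If $x\in X$, then $pv_x\in E(G)$, and this edge is incident to $p$ (so it dominates $pq$ and every $pv_i$) and to $v_x$ (so it dominates $v_xw$); hence $G$ has an edge dominating set of size $1$ and is a YES instance for $k=1$. If $x\notin X$, then the edge $v_xw$ shares no endpoint with any other edge of $G$, since $v_x$ is not adjacent to $p$ and $w$ has degree one, while the edge $pq$ is also present; as $v_xw$ and $pq$ are vertex-disjoint, no single edge can be incident to both, so every edge dominating set has size at least $2$ and $G$ is a NO instance. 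Note that the always-present edge $pq$ keeps the NO answer correct even in the degenerate case $X=\emptyset$. Consequently, from the algorithm's output Bob decides whether $x\in X$, so the algorithm must use $\Omega(n)$ space; since $|V(G)|=n+3=\Theta(n)$, this is $\Omega(n)$ in the number of vertices, and randomization comes for free because the \textsc{Membership} lower bound already holds for randomized one-way protocols.

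I do not expect a genuine obstacle: the only points requiring care are verifying that in the NO case the ``far'' edge $v_xw$ is truly vertex-disjoint from the rest of $G$, so that no ingenious choice of a single dominating edge exists, and that including $pq$ prevents the trivial collapse to a size-$1$ solution when $X$ is empty. Both are immediate from the construction, so the remainder is bookkeeping (e.g.\ translating the streaming algorithm's constant success probability into a randomized protocol of the same success probability).
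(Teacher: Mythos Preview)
Your proposal is correct and follows essentially the same approach as the paper: both reduce from \textsc{Membership}, with Alice building a star centered at a vertex $p$ (the paper's $a$) over the elements of $X$, and Bob attaching a pendant edge at $v_x$ so that a size-$1$ edge dominating set exists iff $x\in X$. The only difference is cosmetic---you add the extra edge $pq$ to guarantee the NO case remains valid even when $X=\emptyset$, a corner case the paper's write-up glosses over; otherwise the constructions and the verification are the same.
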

\begin{proof}
Given an instance of {\sc Membership}, we create a graph $G$ on $n+2$ vertices as follows. For each $i\in [n]$ we create a vertex $v_i$. Also add two special vertices $a$ and $b$. For every $y\in X$, add the edge $(a, y)$. Finally add the edge $(b,x)$.

Now we will show that $G$ has an edge dominating set of size 1 iff {\sc Membership} answers YES. In the first direction suppose that $G$ has an edge dominating set of size 1. Then it must be the case that $x\in X$: otherwise for a minimum edge dominating set we need one extra edge to dominate the star incident on $a$, in addition to the edge $(b,x)$ dominating itself. Hence {\sc Membership} answers YES. In reverse direction, suppose that {\sc Membership} answers YES. Then the edge $(a,x)$ is clearly an edge dominating set of size 1.

Therefore, any (randomized) streaming algorithm that can determine whether a graph has an edge dominating set of size at most $k=1$ gives a communication protocol for {\sc Membership}, and hence requires $\Omega(n)$ space.
\end{proof}

\subsubsection{Lower Bound for \textsc{$\mathcal{G}$-Free Deletion}}
\label{sec:lb-deletion}

\begin{definition}
A set of connected graphs $\mathcal{G}$ is \emph{bad} if there is a minimal (under operation of taking subgraphs) graph $H\in \mathcal{G}$ such that $2P_{2}\subseteq H$, where $P_2$ is a path on 2 vertices.
\end{definition}

For any bad set of graphs $\mathcal{G}$, we now show a lower bound for the following general problem:

\begin{center}
\noindent\framebox{\begin{minipage}{0.95\columnwidth}
\textsc{$\mathcal{G}$-Free Deletion}$(k)$ \hfill \emph{Parameter}: $k$\\
\emph{Input}: A bad set of graphs $\mathcal{G}$, an undirected graph $G=(V,E)$ and an integer $k$\\
\emph{Question}: Does there exist a set $X\subseteq V$ such that $G\setminus X$ contains no graph from $\mathcal{G}$?
\end{minipage}}
\end{center}

The reduction from the {\sc Disjointness} problem in communication complexity.

\begin{center}
\noindent\framebox{\begin{minipage}{0.95\columnwidth}
\textsc{Disjointness}\\
\emph{Input}: Alice has a string $x\in \{0,1\}^n$ given by
$x_{1}x_{2}\ldots x_n$. Bob has a string $y \in \{0, 1\}^n$ given by $y_{1}y_{2}\ldots y_n$.\\
\emph{Question}: Bob wants to check if $\exists\ i\in [n]$ such that $x_{i}=y_i=1$.
\end{minipage}}
\end{center}

There is a lower bound of $\Omega(n/p)$ bits of communication between
Alice and Bob, even allowing $p$-rounds and randomization~\cite{nisan}.

\begin{theorem}
For a \emph{bad} set of graphs $\mathcal{G}$, any $p$-pass (randomized) streaming algorithm for the $\textsc{$\mathcal{G}$-Free Deletion}$ problem needs $\Omega(n/p)$ space
\label{thm:fvs:lower}.
\end{theorem}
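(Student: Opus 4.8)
The plan is to reduce from \textsc{Disjointness}, in the same spirit as the \textsc{Edge Dominating Set} reduction from \textsc{Membership} above, but encoding $n$ independent two-player AND gates into $n$ vertex-disjoint copies of a fixed forbidden graph. Since $\mathcal{G}$ is \emph{bad}, fix a member $H \in \mathcal{G}$ that is minimal under taking subgraphs and satisfies $2P_2 \subseteq H$; note $|V(H)|$ is a constant depending only on $\mathcal{G}$, not on $n$. Fix two vertex-disjoint edges $e_1,e_2 \in E(H)$ realizing the copy of $2P_2$, and let $H_0$ be $H$ with $e_1$ and $e_2$ deleted. Given a \textsc{Disjointness} instance $(x,y)\in\{0,1\}^n\times\{0,1\}^n$, I would build a graph $G$ on $n\cdot|V(H)|$ vertices consisting of $n$ pairwise vertex-disjoint gadgets $G_1,\dots,G_n$, where each $G_i$ contains a private copy of $H_0$, together with the copy of $e_1$ iff $x_i=1$ and the copy of $e_2$ iff $y_i=1$. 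In the stream, Alice first inserts every $H_0$-copy and the $e_1$-copy of each $G_i$ with $x_i=1$; then Bob inserts the $e_2$-copy of each $G_i$ with $y_i=1$. We set $k=0$, so the algorithm is asked only whether $G$ itself is $\mathcal{G}$-free.

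For correctness: if $x_i=y_i=1$ for some $i$, then $G_i$ is a full copy of $H\in\mathcal{G}$, so $G$ is not $\mathcal{G}$-free and the answer to \textsc{$\mathcal{G}$-Free Deletion}$(0)$ is ``no''. Conversely, suppose $x$ and $y$ are disjoint. Then each $G_i$ is missing at least one of its two designated edges, hence is (isomorphic to) a proper subgraph of $H$; since $H$ is minimal in $\mathcal{G}$, every proper subgraph of $H$, and every subgraph of such, is $\mathcal{G}$-free. Moreover every graph in $\mathcal{G}$ is connected, while the $G_i$ are pairwise vertex-disjoint, so any copy of a member of $\mathcal{G}$ sitting inside $G$ must lie entirely within a single $G_i$ — which we just ruled out. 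Hence $G$ is $\mathcal{G}$-free, so the answer is ``yes''. Thus the streaming algorithm decides whether $x$ and $y$ are disjoint.

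I expect the heart of the argument to be exactly this last verification, and it is worth noting which hypotheses each piece uses: minimality of $H$ is what makes the ``partial'' gadgets $\mathcal{G}$-free; the connectivity of the graphs in $\mathcal{G}$ is what forbids a forbidden copy from spanning several gadgets; and the $2P_2 \subseteq H$ condition is precisely what supplies two \emph{independently switchable} edges $e_1,e_2$, so that one can be owned by Alice and the other by Bob while the rest of $H$ stays fixed. No amplification is needed because $k=0$: a single gadget realizing $H$ already destroys $\mathcal{G}$-freeness, and when the inputs are disjoint no gadget does.

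To conclude, a $p$-pass streaming algorithm using space $s$ for \textsc{$\mathcal{G}$-Free Deletion}, with Alice simulating it on her prefix of the stream and Bob on the suffix and the two players exchanging the $s$-bit memory contents once per pass, yields a randomized $p$-round communication protocol for \textsc{Disjointness} in which every message has length $s$ (total communication $O(ps)$). Since \textsc{Disjointness} on length-$n$ inputs requires $\Omega(n/p)$ communication even for randomized $p$-round protocols~\cite{nisan}, we get $s=\Omega(n/p)$; as $G$ has $\Theta(n)$ vertices, this is $\Omega(|V(G)|/p)$. The bound already holds for $k=0$, hence for the parameterized problem in general, and since the construction uses only edge insertions it applies to insertion-only streams and \emph{a fortiori} to dynamic graph streams.
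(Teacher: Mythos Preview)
Your proposal is correct and follows essentially the same approach as the paper: reduce from \textsc{Disjointness} by taking $n$ vertex-disjoint copies of $H$ with the two $2P_2$-edges removed, letting Alice control one edge and Bob the other, and arguing via connectivity of the members of $\mathcal{G}$ plus minimality of $H$ that the resulting graph is $\mathcal{G}$-free iff the inputs are disjoint (with $k=0$). Your write-up is in fact slightly more explicit than the paper's about which hypothesis is used at each step and about the stream-to-protocol simulation.
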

\begin{proof}

Given an instance of {\sc Disjointness}, we create a graph $G$ which
consists of $n$ disjoint copies say $G_1, G_2, \ldots, G_n$ of
$H':=H\setminus 2P_2$. Let the two edges removed from $H$ to get $H'$
be $e_1$ and $e_2$.  For each $i\in [n]$, to the copy $G_i$ of $H'$ we
add the edge $e_1$ iff $x_i=1$ and the edge $e_2$ iff $y_i=1$. We now
show that the resulting graph $G$ contains a copy of $H$ if and only
if {\sc Disjointness} answers YES.

Suppose that {\sc Disjointness} answers YES. So there is a $j\in [n]$ such that $x_j = 1= y_j$. Therefore, to the copy $G_j$ of $H'$ we would have added the edges $e_1$ and $e_2$ which would complete it into $H$. So $G$ contains a copy of $H$. In other direction, suppose that $G$ contains a copy of $H$. Note that since we add $n$ disjoint copies of $H'$ and add at most two edges ($e_1$ and $e_2$) to each copy, it follows that each connected component of $G$ is in fact a subgraph of $H=H'\cup(e_1 + e_2)$. Since $H$ is connected and $G$ contains a copy of $H$, some connected component of $G$ must exactly be the graph $H$, i.e, to some copy $G_i$ of $H'$ we must have added both the edges $e_1$ and $e_2$. This implies $x_i = 1 = y_i$, and so {\sc Disjointness} answers YES.

Since each connected component of $G$ is a subgraph of $H$, the minimality of $H$ implies that $\mathcal{G}$ contains a graph from $G$ iff $G$ contains a copy of $H$, which in turn is true iff {\sc Disjointness} answers YES. Therefore, any $p$-pass (randomized) streaming algorithm that can determine whether a graph is $\mathcal{G}$-free (i.e., answers the question with $k=0$) gives a communication protocol for {\sc Disjointness}, and hence requires $\Omega(n/p)$ space.
\end{proof}

This implies lower bounds for the following set of problems:
\begin{theorem}
For each of the following problems, any $p$-pass (randomized) algorithm requires $\Omega(n/p)$ space: \textsc{Feedback Vertex Set}$(k)$, \textsc{Odd Cycle Transversal}$(k)$, \textsc{Even Cycle Transversal}$(k)$ and \textsc{Triangle Deletion}$(k)$.
\end{theorem}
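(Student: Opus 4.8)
The plan is to exhibit each of the four problems as an instance of \textsc{$\mathcal{G}$-Free Deletion}$(k)$ for a suitable forbidden family $\mathcal{G}$ and then invoke Theorem~\ref{thm:fvs:lower}. Deleting a vertex set $X$ so that $G\setminus X$ is a forest, is bipartite, has no even cycle, or is triangle-free is precisely the requirement that $G\setminus X$ contain no subgraph from $\mathcal{G}_{\mathrm{FVS}}=\{C_3,C_4,C_5,\dots\}$, $\mathcal{G}_{\mathrm{OCT}}=\{C_3,C_5,C_7,\dots\}$, $\mathcal{G}_{\mathrm{ECT}}=\{C_4,C_6,C_8,\dots\}$, or $\mathcal{G}_{\mathrm{TD}}=\{C_3\}$ respectively, since a graph contains a cycle (respectively an odd cycle, an even cycle, a triangle) exactly when it contains such a cycle as a subgraph. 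Thus \textsc{Feedback Vertex Set}$(k)$, \textsc{Odd Cycle Transversal}$(k)$, and \textsc{Even Cycle Transversal}$(k)$ are literally $\mathcal{G}$-Free Deletion for the first three families (with the family fixed rather than part of the input, which only makes them special cases of the problem in Theorem~\ref{thm:fvs:lower}).

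Next I would verify that $\mathcal{G}_{\mathrm{FVS}}$, $\mathcal{G}_{\mathrm{OCT}}$, and $\mathcal{G}_{\mathrm{ECT}}$ are \emph{bad}, i.e.\ each contains a subgraph-minimal member $H$ with $2P_2\subseteq H$ (a matching of size two). For $\mathcal{G}_{\mathrm{FVS}}$ and $\mathcal{G}_{\mathrm{ECT}}$ take $H=C_4$: every proper subgraph of $C_4$ is a forest, so $C_4$ is minimal in both families, and its two opposite edges form a $2P_2$. For $\mathcal{G}_{\mathrm{OCT}}$ note that $C_3$ cannot serve as the witness, since $C_3$ has only a size-one matching and hence $2P_2\not\subseteq C_3$; instead take $H=C_5$, which is minimal in $\mathcal{G}_{\mathrm{OCT}}$ because no proper subgraph of $C_5$ is an odd cycle, and which does contain two vertex-disjoint edges. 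Theorem~\ref{thm:fvs:lower} then yields an $\Omega(n/p)$ space bound for $p$-pass randomized algorithms for all three problems, even in the case $k=0$.

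For \textsc{Triangle Deletion}$(k)$ the family $\mathcal{G}_{\mathrm{TD}}=\{C_3\}$ is \emph{not} bad, so Theorem~\ref{thm:fvs:lower} does not apply verbatim; however, its reduction adapts with only a cosmetic change. Inspecting that proof, the hypothesis $2P_2\subseteq H$ is used only to split the minimal obstruction $H$ as $H'\cup\{e_1,e_2\}$ for two \emph{distinct} edges $e_1,e_2$ such that $H'$, $H'\cup\{e_1\}$, and $H'\cup\{e_2\}$ are all $\mathcal{G}$-free; vertex-disjointness of $e_1$ and $e_2$ is never invoked. Taking $H=C_3$, $e_1,e_2$ any two of its edges, and $H'=C_3\setminus\{e_1,e_2\}$ (a single edge plus an isolated vertex), the reduction from \textsc{Disjointness} forms $G$ as $n$ disjoint copies of $H'$, has Alice add $e_1$ to copy $i$ iff $x_i=1$ and Bob add $e_2$ to copy $i$ iff $y_i=1$; every connected component of $G$ is then a subgraph of $C_3$, so $G$ has a triangle iff some copy received both edges iff the \textsc{Disjointness} instance is a ``yes''. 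As $G$ has $3n$ vertices, a $p$-pass algorithm deciding triangle-freeness ($k=0$) gives an $O(p)$-round protocol for \textsc{Disjointness} and hence needs $\Omega(n/p)$ space.

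I expect no serious obstacle: the argument is essentially bookkeeping layered on Theorem~\ref{thm:fvs:lower}. The only two points requiring care are (i) recognizing that $C_3$ does not witness badness for \textsc{Odd Cycle Transversal}, so one must move up to $C_5$, and (ii) noting that \textsc{Triangle Deletion} sits just outside the literal ``bad family'' framework, which forces the small remark that the vertex-disjointness assumption in the reduction of Theorem~\ref{thm:fvs:lower} can be dropped (equivalently, writing out the short direct reduction above).
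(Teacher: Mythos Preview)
Your approach is the same as the paper's---reduce each problem to \textsc{$\mathcal{G}$-Free Deletion} for the obvious cycle family and invoke Theorem~\ref{thm:fvs:lower}---but you are more careful than the paper on one point. The paper simply takes $H=C_3$ for \textsc{Feedback Vertex Set}, \textsc{Odd Cycle Transversal}, and \textsc{Triangle Deletion} (and $H=C_4$ for \textsc{Even Cycle Transversal}) and asserts that the corresponding families are \emph{bad}. As you correctly observe, $C_3$ does \emph{not} contain $2P_2$ (it has no matching of size two), so under the paper's own definition of ``bad'' the choice $H=C_3$ fails. Your remedies---moving to $H=C_4$ for \textsc{FVS}, to $H=C_5$ for \textsc{OCT}, and, for \textsc{Triangle Deletion}, noting that the reduction in Theorem~\ref{thm:fvs:lower} never actually uses vertex-disjointness of $e_1,e_2$---are exactly the right patches. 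In effect you have spotted and repaired a small glitch in the paper's application of its own framework; the underlying reduction is identical.
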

\begin{proof}
We first define the problems below:

\begin{center}
\noindent\framebox{\begin{minipage}{0.95\columnwidth}
\textsc{Feedback Vertex Set}$(k)$ \hfill \emph{Parameter}: $k$\\
\emph{Input}: An undirected graph $G=(V,E)$ and an integer $k$\\
\emph{Question}: Does there exist a set $X\subseteq V$ of size at most $k$ such that $G\setminus X$ has no cycles?
\end{minipage}}
\end{center}

\begin{center}
\noindent\framebox{\begin{minipage}{0.95\columnwidth}
\textsc{Odd Cycle Transversal}$(k)$ \hfill \emph{Parameter}: $k$\\
\emph{Input}: An undirected graph $G=(V,E)$ and an integer $k$\\
\emph{Question}: Does there exist a set $X\subseteq V$ of size at most $k$ such that $G\setminus X$ has no odd cycles?
\end{minipage}}
\end{center}
\begin{center}
\noindent\framebox{\begin{minipage}{0.95\columnwidth}
\textsc{Even Cycle Transversal}$(k)$ \hfill \emph{Parameter}: $k$\\
\emph{Input}: An undirected graph $G=(V,E)$ and an integer $k$\\
\emph{Question}: Does there exist a set $X\subseteq V$ of size at most $k$ such that $G\setminus X$ has no even cycles?
\end{minipage}}
\end{center}
\begin{center}
\noindent\framebox{\begin{minipage}{0.95\columnwidth}
\textsc{Triangle Deletion}$(k)$ \hfill \emph{Parameter}: $k$\\
\emph{Input}: An undirected graph $G=(V,E)$ and an integer $k$ \\
\emph{Question}: Does there exist a set $X\subseteq V$ of size at most $k$ such that $G\setminus X$ has no triangles?
\end{minipage}}
\end{center}

Now we show how each of these problems can be viewed as a \textsc{$\mathcal{G}$-Free Deletion} problem for an appropriate choice of \emph{bad} $\mathcal{G}$.
\begin{itemize}
\item \textsc{Feedback Vertex Set}$(k)$: Take $\mathcal{G}=\{C_3, C_4, C_5, \ldots\}$ and $H=C_3$
\item \textsc{Odd Cycle Transversal}$(k)$: Take $\mathcal{G}=\{C_3, C_5, C_7, \ldots\}$ and $H=C_3$
\item \textsc{Even Cycle Transversal}$(k)$: Take $\mathcal{G}=\{C_4, C_6, C_8, \ldots\}$ and $H=C_4$
\item \textsc{Triangle Deletion}$(k)$: Take $\mathcal{G}=\{C_3\}$ and $H=C_3$
\end{itemize}
We verify the conditions for \textsc{Feedback Vertex Set}$(k)$; the proofs for other problems are similar. Note that the choice of $\mathcal{G}=\{C_3, C_4, C_5, \ldots\}$ and $H=C_3$ implies that $\mathcal{G}$ is \emph{bad} since each graph in $\mathcal{G}$ is connected, the graph $H$ belongs to $\mathcal{G}$ and is a minimal element of $\mathcal{G}$ (under operation of taking subgraphs). Finally, finding a set $X$ such that the graph $G\setminus X$ is $\mathcal{G}$-free implies that it has no cycles, i.e., $X$ is a feedback vertex set for $G$.
\end{proof}

It is easy to see that the same proofs also work for the edge deletion versions of the \textsc{Odd Cycle Transversal}$(k)$, \textsc{Even Cycle Transversal}$(k)$ and the \textsc{Triangle Deletion}$(k)$ problems.

\subsubsection{\textsc{$\mathcal{G}$-Editing}}
\label{sec:lb-editing}

\begin{definition}
A set of graphs $\mathcal{G}$ is \emph{good} if there is a minimal (under operation of taking subgraphs) connected graph $H\in \mathcal{G}$ such that $2P_{2}\subseteq H$, where $P_2$ is a path on 2 vertices.
\end{definition}

For any \emph{good} set of graphs $\mathcal{G}$, we now show a lower bound for the following general problem:

\begin{center}
\noindent\framebox{\begin{minipage}{0.95\columnwidth}
\textsc{$\mathcal{G}$-Editing}$(k)$ \hfill \emph{Parameter}: $k$\\
\emph{Input}: A graph class $\mathcal{G}$, an undirected graph $G=(V,E)$ and an integer $k$\\
\emph{Question}: Does there exist a set $X$ of $k$ edges such that $(V, E\cup X)$ contains a graph from $\mathcal{G}$?
\end{minipage}}
\end{center}

\begin{theorem}
For a \emph{good} set of graphs $\mathcal{G}$, any $p$-pass (randomized) streaming algorithm for the \textsc{$\mathcal{G}$-Editing}$(k)$ problem needs $\Omega(n/p)$ space
\label{thm:fvs:lower}.
\end{theorem}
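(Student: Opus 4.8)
The plan is to follow the template of the \textsc{Disjointness}-based lower bound for \textsc{$\mathcal{G}$-Free Deletion} above, replacing vertex deletion by edge addition. Recall the standard reduction from communication to streaming: a $p$-pass algorithm using space $s$ can be simulated by Alice and Bob with $O(p)$ rounds of $s$-bit messages, where the stream is partitioned into an ``Alice prefix'' followed by a ``Bob suffix''; since deciding \textsc{Disjointness} on $n$ bits requires $\Omega(n/p)$ bits of communication in any $p$-round protocol~\cite{nisan}, any such algorithm has $s=\Omega(n/p)$. So it suffices to encode \textsc{Disjointness} as a \textsc{$\mathcal{G}$-Editing}$(k)$ instance on $\Theta(n)$ vertices whose stream splits this way.

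Using that $\mathcal{G}$ is \emph{good}, fix a minimal connected $H\in\mathcal{G}$ containing a $2P_2$, say on the two vertex-disjoint edges $e_1,e_2$, and put $H'=H\setminus\{e_1,e_2\}$. Given $(x,y)\in\{0,1\}^n\times\{0,1\}^n$, let $G$ be the disjoint union of $n$ copies $G_1,\dots,G_n$ of $H'$ on disjoint vertex sets; the stream first has Alice insert the edge $e_1$ into $G_i$ whenever $x_i=1$, and then has Bob insert $e_2$ into $G_i$ whenever $y_i=1$. Thus $G_i$ is a copy of $H$ exactly when $x_i=y_i=1$, and is a proper subgraph of $H$ (missing $e_1$, $e_2$, or both) otherwise. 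Each gadget has the constant number $|V(H)|$ of vertices, so $G$ has $\Theta(n)$ vertices, and the stream is cleanly split between the two players.

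The core claim, which I would establish first for $k=0$, is
\[
\bigl(\,(V,E)\ \text{contains a member of}\ \mathcal{G}\,\bigr)
\iff \bigl(\,(V,E)\ \text{contains}\ H\,\bigr)
\iff \bigl(\,\exists\, i:\ x_i=y_i=1\,\bigr).
\]
The rightmost equivalence is immediate: $H$ is connected, so any copy of $H$ in $G$ lies inside one $G_i$, and $G_i\subseteq H$ with equality iff $x_i=y_i=1$. For the first equivalence, ``$\Leftarrow$'' is trivial, and for ``$\Rightarrow$'' a connected member $H''\in\mathcal{G}$ occurring in $G$ sits inside some $G_i\subseteq H$, so $H''$ is a subgraph of $H$ lying in $\mathcal{G}$, whence $H''=H$ by minimality. (If $\mathcal{G}$ contains disconnected members one argues separately that their $\ge 2$ connected components would have to embed into distinct $G_i$'s, each again a proper subgraph of $H$ unless the corresponding coordinate satisfies $x_i=y_i=1$; alternatively one restricts to $\mathcal{G}$ consisting of connected graphs, as in the \emph{bad} setting.) Granting the claim, running any \textsc{$\mathcal{G}$-Editing}$(0)$ algorithm — Alice processes her insertions, sends the $s$-bit memory to Bob, Bob finishes, repeating for each pass — decides \textsc{Disjointness}, so $s=\Omega(n/p)$; as with the deletion lower bound, the hard instances use $k=0$.

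The one genuinely delicate point, which I expect to be the main obstacle, is pushing the bound to a general (constant) budget $k\ge 1$. Here, instead of deleting a single pair of edges, one splits $E(H)=A\cup B$ with $|E(H)\setminus A|\ge k+1$ and $|E(H)\setminus B|\ge k+1$ (possible whenever $|E(H)|\ge 2k+2$, which one may arrange by choosing $H$ large enough or by padding the construction), lets Alice insert the edge set $A$ into $G_i$ iff $x_i=1$ and Bob insert $B$ iff $y_i=1$, so that $G_i=H$ exactly when $x_i=y_i=1$ and is otherwise more than $k$ edge-insertions short of any copy of $H$. One then has to verify that $k$ global edge insertions cannot create a member of $\mathcal{G}$ in the NO case — not by patching a near-$H$ gadget, not by realizing $H$ under a non-identity embedding on a gadget's vertex set, and not by merging two gadgets via a few new edges. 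Keeping the gadgets on exactly $|V(H)|$ vertices and using that the removed edges form a matching (so no ``cheap'' alternative copy of $H$ appears) is what makes this verification go through, and nailing it down is the crux.
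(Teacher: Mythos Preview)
Your proposal is correct and takes essentially the same approach as the paper: reduce from \textsc{Disjointness}, build $n$ disjoint copies of $H'=H\setminus\{e_1,e_2\}$, have Alice add $e_1$ to $G_i$ iff $x_i=1$ and Bob add $e_2$ iff $y_i=1$, and argue via minimality and connectedness of $H$ that $G$ contains a member of $\mathcal{G}$ iff some coordinate intersects; the paper, like you, proves this for $k=0$ and stops there, so your entire final paragraph about pushing to general $k\ge 1$ (splitting $E(H)=A\cup B$, worrying about non-identity embeddings and cross-gadget edges) is unnecessary---the theorem as stated is already established by the $k=0$ instance.
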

\begin{proof}

Given an instance of {\sc Disjointness}, we create a graph $G$ which consists of $n$ disjoint copies say $G_1, G_2, \ldots, G_n$ of $H':=H\setminus 2P_2$. By minimality of $H$, it follows that $H'\notin \mathcal{G}$. Let the two edges removed from $H$ to get $H'$ be $e_1$ and $e_2$.  For each $i\in [n]$ we add to $G_i$ the edge $e_1$ iff $x_i=1$ and the edge $e_2$ iff $y_i=1$. Let the resulting graph be $G$.

We now show that $G$ contains a copy of $H$ if and only if {\sc Disjointness} answers YES. Suppose that $G$ contains a copy of $H$. Note that since we add $n$ disjoint copies of $H'$ and add at most two edges ($e_1$ and $e_2$) to each copy, it follows that each connected component of $G$ is in fact a subgraph of $H=H'\cup(e_1 + e_2)$. Since $H$ is connected and $G$ contains a copy of $H$, some connected component of $G$ must exactly be the graph $H$, i.e, to some copy $G_i$ of $H'$ we must have added both the edges $e_1$ and $e_2$. This implies $x_i = 1 = y_i$, and so {\sc Disjointness} answers YES. Now suppose that {\sc Disjointness} answers YES, i.e., there exists $j\in [n]$ such that $x_j = 1= y_j$. Therefore, to the copy $G_j$ of $H'$ we would have added the edges $e_1$ and $e_2$ which would complete it into $H$. So $G$ contains a copy of $H$.

Otherwise due to minimality of $H$, the graph $G$ does not contain any graph from $\mathcal{G}$.
Therefore, any $p$-pass (randomized) streaming algorithm that can determine whether a graph $G$ contains a graph from $\mathcal{G}$ (i.e., answers the question with $k=0$) gives a communication protocol for {\sc Disjointness}, and hence requires $\Omega(n/p)$ space.
\end{proof}


This implies lower bounds for the following set of problems:
\begin{theorem}
For each of the following problems, any $p$-pass (randomized) algorithm requires $\Omega(n/p)$ space: \textsc{Triangle Packing}$(k)$, \textsc{$s$-Star Packing}$(k)$ and \textsc{Path}$(k)$.
\end{theorem}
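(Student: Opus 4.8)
The plan is to obtain all three bounds from the $\Omega(n/p)$ communication lower bound for \textsc{Disjointness}~\cite{nisan}, reusing the gadget construction behind the \textsc{$\mathcal{G}$-Editing}$(k)$ lower bound (Theorem~\ref{thm:fvs:lower}). For each problem $\Pi(k)$ I would fix a small \emph{constant} value of the parameter and a connected ``target graph'' $H$ such that, for graphs that are disjoint unions of subgraphs of $H$, deciding $\Pi$ amounts to deciding whether $G$ contains $H$ as a subgraph. Writing $H = H_0 \cup \{e_1,e_2\}$ for a ``core'' $H_0$ and two further edges $e_1,e_2$, I would build $G$ as $n$ vertex-disjoint copies $G_1,\dots,G_n$ of $H_0$, let Alice insert $e_1$ into $G_i$ exactly when $x_i=1$ and Bob insert $e_2$ into $G_i$ exactly when $y_i=1$, and show that $G$ contains a copy of $H$ iff $x_i=y_i=1$ for some $i$, i.e.\ iff \textsc{Disjointness} answers YES. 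Since Alice's and Bob's insertions can be interleaved over $p$ passes, a $p$-pass algorithm for $\Pi$ using space $f$ gives a $p$-round protocol with $O(f)$ communication, forcing $f=\Omega(n/p)$.

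For \textsc{Path}$(k)$ this is literally an application of Theorem~\ref{thm:fvs:lower}: take $\mathcal{G}=\{P_\ell : \ell\ge 3\}$, which is \emph{good} because its $\subseteq$-minimal connected member $H=P_4$ has its two end-edges forming a copy of $2P_2$. Deleting those two edges leaves $H_0=P_2$ (plus two isolated vertices); a completed gadget is $P_4$, an incomplete one is a path of length $\le 2$, and since the gadgets are disjoint the longest path in $G$ has length $3$ iff some gadget is complete. Hence \textsc{Path}$(k)$ already needs $\Omega(n/p)$ space for $k=3$.

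For \textsc{Triangle Packing}$(k)$ and \textsc{$s$-Star Packing}$(k)$ the $\subseteq$-minimal connected target ($C_3$, respectively $K_{1,s}$) does not contain $2P_2$, so Theorem~\ref{thm:fvs:lower} does not apply verbatim; instead I would run the reduction directly with the packing parameter fixed to $k=1$. For \textsc{Triangle Packing}, let $H=C_3$ on $\{a,b,c\}$, keep $ab$ as a core edge in every gadget, let Alice insert $bc$ when $x_i=1$ and Bob insert $ca$ when $y_i=1$; then $G_i$ is a triangle iff $x_i=y_i=1$ and otherwise $G$ is triangle-free. For \textsc{$s$-Star Packing} with $s\ge 2$, let $H=K_{1,s}$ with centre $c$ and leaves $\ell_1,\dots,\ell_s$, keep $c\ell_1,\dots,c\ell_{s-2}$ as core edges, let Alice insert $c\ell_{s-1}$ when $x_i=1$ and Bob insert $c\ell_s$ when $y_i=1$; then $\deg(c_i)=(s-2)+[x_i=1]+[y_i=1]\le s$ with equality iff $x_i=y_i=1$, while every other vertex has degree $\le 1$, so $G$ contains $K_{1,s}$ iff some gadget is complete. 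In both cases the existence of a single disjoint copy of the target is equivalent to \textsc{Disjointness} answering YES, so the $k=1$ version --- hence the general parameterized problem --- needs $\Omega(n/p)$ space.

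The only subtle point, and the one I would be most careful about, is the ``only if'' direction: a priori $G$ might contain a \emph{disconnected} member of the relevant family (for instance $k$ vertex-disjoint triangles spread over several gadgets) even when no single gadget is complete. Fixing the packing parameter to $1$ neutralises this: the condition ``$G$ contains a member of $\mathcal{G}$'' collapses to ``$G$ contains the unique $\subseteq$-minimal connected target ($C_3$, resp.\ $K_{1,s}$, resp.\ $P_4$) as a subgraph'', a purely local property that --- since every incomplete gadget is a proper, triangle-free / bounded-degree / short-path subgraph of the target --- holds exactly when some gadget is complete. Verifying these ``incomplete gadget'' claims is a routine case check.
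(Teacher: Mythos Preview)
Your proposal is correct and follows the same overall strategy as the paper: reduce from \textsc{Disjointness} via $n$ vertex-disjoint gadgets, each a copy of a target graph $H$ minus two designated edges, with Alice and Bob each controlling one edge per gadget.

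The only noteworthy difference is how you handle the hypothesis of Theorem~\ref{thm:fvs:lower}. You read ``$2P_2 \subseteq H$'' as requiring two \emph{vertex-disjoint} edges, correctly observe that neither $C_3$ nor $K_{1,s}$ contains such a pair, and therefore unfold the reduction by hand for \textsc{Triangle Packing} and \textsc{$s$-Star Packing}. The paper, by contrast, simply invokes the general theorem with $H=C_3$ and $H=K_{1,s}$ (and $H=P_3$ for \textsc{Path}); this works because the proof of Theorem~\ref{thm:fvs:lower} in fact only uses that $H$ has two edges $e_1,e_2$ whose removal leaves a proper subgraph, not that they are disjoint. So the paper's invocation is morally fine but your explicit constructions are cleaner with respect to the stated hypothesis. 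Your choice of $H=P_4$ (rather than the paper's $P_3$) for \textsc{Path}$(k=3)$ is also unproblematic and arguably avoids an ambiguity about whether ``length'' counts edges or vertices.
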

\begin{proof}
We first define the problems below:

\begin{center}
\noindent\framebox{\begin{minipage}{0.95\columnwidth}
\textsc{Triangle Packing}$(k)$ \hfill \emph{Parameter}: $k$\\
\emph{Input}: An undirected graph $G=(V,E)$ and an integer $k$\\
\emph{Question}: Do there exist at least $k$ vertex disjoint triangles in $G$?
\end{minipage}}
\end{center}

\begin{center}
\noindent\framebox{\begin{minipage}{0.95\columnwidth}
\textsc{$s$-Star Packing}$(k)$ \hfill \emph{Parameter}: $k$\\
\emph{Input}: An undirected graph $G=(V,E)$ and an integer $k$\\
\emph{Question}: Do there exist at least $k$ vertex disjoint instances of $K_{1,s}$ in $G$ (where $s\geq 3$)?
\end{minipage}}
\end{center}

\begin{center}
\noindent\framebox{\begin{minipage}{0.95\columnwidth}
\textsc{Path}$(k)$ \hfill \emph{Parameter}: $k$\\
\emph{Input}: An undirected graph $G=(V,E)$ and an integer $k$\\
\emph{Question}: Does there exist a path in $G$ of length $\geq k$?
\end{minipage}}
\end{center}
Now we show how each of these problems can be viewed as a \textsc{$\mathcal{G}$-Editing} problem for an appropriate choice of \emph{good} $\mathcal{G}$.

\begin{itemize}
\item \textsc{Triangle Packing}$(k)$ with $k=1$: Take $\mathcal{G}=\{C_3\}$ and $H=C_3$
\item \textsc{$s$-Star Packing}$(k)$ with $k=1$: Take $\mathcal{G}=\{K_{1,s}\}$ and $H=K_{1,s}$
\item \textsc{Path}$(k)$ with $k=3$: Take $\mathcal{G}=\{P_3, P_4, P_5, \ldots\}$ and $H=P_3$
\end{itemize}
We verify the conditions for \textsc{Triangle Packing}$(k)$ with $k=1$; the proofs for other problems are similar. Note that the choice of $\mathcal{G}=\{C_3\}$ and $H=C_3$ implies that $\mathcal{G}$ is \emph{good} since $\mathcal{G}$ only contains one graph. Finally, finding a set of edges $X$ such that the graph $(V, E\cup X)$ contains a graph from $\mathcal{G}$ implies that it has at least one $C_3$, i.e., $X$ is a solution for \textsc{Triangle Packing}$(k)$ with $k=1$.
\end{proof}

\subsubsection{Lower Bound for \textsc{Cluster Vertex Deletion}}
We now show a lower bound for the \textsc{Cluster Vertex Deletion}$(k)$ problem.
\begin{definition}
We say that $G$ is a \emph{cluster} graph if each connected component of $G$ is a clique.
\end{definition}

\begin{center}
\noindent\framebox{\begin{minipage}{0.95\columnwidth}
\textsc{Cluster Vertex Deletion}$(k)$ \hfill \emph{Parameter}: $k$\\
\emph{Input}: An undirected graph $G=(V,E)$ and an integer $k$\\
\emph{Question}: Does there exist a set $X\subseteq V$ of size at most $k$ such that $G\setminus X$ is a cluster graph?
\end{minipage}}
\end{center}

\begin{figure}[t]
\centering
\includegraphics[width=\textwidth]{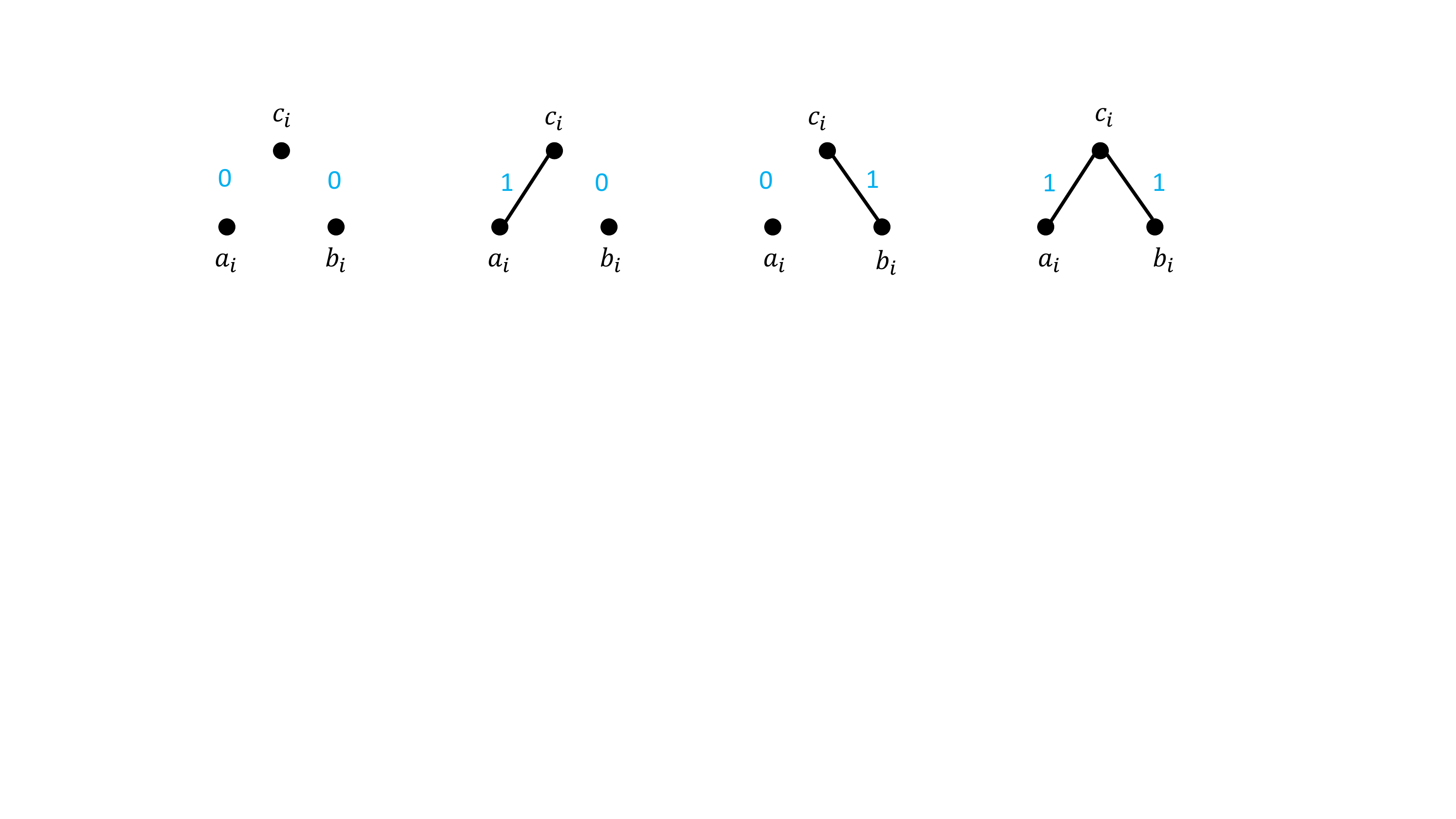}
\vspace{-58mm}
\caption{Gadget for reduction from {\sc Disjointness} to \textsc{Cluster Vertex Deletion}}
\label{fig:fvs}
\vspace{5mm}
\end{figure}

\begin{theorem}
For the \textsc{Cluster Vertex Deletion}$(k)$ problem, any $p$-pass (randomized) streaming algorithm needs $\Omega(n/p)$ space
\label{thm:cvd:lower}.
\end{theorem}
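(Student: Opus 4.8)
The plan is to reduce from the {\sc Disjointness} communication problem, in the same spirit as the $\mathcal{G}$-Free Deletion lower bounds above, exploiting the fact that a graph is a cluster graph if and only if it has no induced path on three vertices ($P_3$). Recall that in {\sc Disjointness} Alice holds $x\in\{0,1\}^n$, Bob holds $y\in\{0,1\}^n$, and even with $p$ rounds and randomization the communication complexity is $\Omega(n/p)$~\cite{nisan}.

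The gadget I would use is a disjoint union of $n$ vertex triples: for each $i\in[n]$ create fresh vertices $u_i,v_i,w_i$; Alice inserts the edge $u_iv_i$ iff $x_i=1$, and Bob inserts the edge $v_iw_i$ iff $y_i=1$; no other edges are ever inserted, so $u_iw_i$ is never present. Then, if $x_i=y_i=1$ the vertices $u_i,v_i,w_i$ induce a $P_3$ and the graph is not a cluster graph; conversely, if every triple carries at most one of its two edges, each connected component has at most two vertices and is therefore a clique, so the whole graph is a cluster graph. Hence the constructed graph $G$ is a cluster graph exactly when {\sc Disjointness} answers ``no''. To make the statement hold for an arbitrary fixed parameter $k$ and not merely $k=0$, I would further add $k$ pairwise vertex-disjoint copies of $P_3$: each copy is connected, has three vertices, and is not a triangle, so exactly one vertex deletion is needed and sufficient to turn it into a cluster graph; these $k$ copies consume the whole deletion budget, leaving nothing for the triple-gadget part. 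Consequently $\mathrm{cvd}(G)\le k$ iff the triple-gadget part needs no deletions, i.e.\ iff {\sc Disjointness} answers ``no''.

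To finish, I would use the standard streaming-to-communication simulation: order the stream as Alice's insertions followed by Bob's, and in each of the $p$ passes have Alice forward the $s$-bit memory state to Bob, who resumes the computation and (except on the last pass) returns the updated state. This gives a $p$-round randomized protocol with messages of size $s$ deciding whether $\mathrm{cvd}(G)\le k$, hence solving {\sc Disjointness}; the lower bound then forces $s=\Omega(n/p)$. Since the total number of vertices is $3n+3k=\Theta(n)$, the space bound is $\Omega(n/p)$ in the size of the input graph.

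The reduction is not deep, and the only points that need care are (a) verifying that no induced $P_3$ can appear ``across'' two gadgets --- which is immediate, as the triples and the $k$ padding copies are pairwise vertex-disjoint, so every connected component lies entirely inside one gadget --- and (b) the budget accounting for general $k$, so that the $k$ forced copies of $P_3$ use up the allowance exactly and the answer flips precisely with the {\sc Disjointness} answer. Both are routine; for the special case $k=0$ (the plain question ``is $G$ a cluster graph?'') the padding is unnecessary.
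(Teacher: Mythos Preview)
Your proposal is correct and essentially identical to the paper's proof: both build $n$ disjoint vertex triples where Alice and Bob each control one of the two edges sharing a common center, so an induced $P_3$ (hence a non-cluster component) appears in a triple exactly when $x_i=y_i=1$, and otherwise every component is a $P_1$ or $P_2$. The only difference is that the paper stops at $k=0$, whereas you additionally pad with $k$ disjoint $P_3$'s to push the threshold to an arbitrary fixed $k$; this is a harmless (and arguably cleaner) extension.
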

\begin{proof}
Given an instance of {\sc Disjointness}, we create a graph $G$ on $3n$
vertices as follows. For each $i\in [n]$ we create three vertices $a_i, b_i, c_i$. Insert the edge
$(a_i, c_i)$ iff $x_i=1$ and the edge $(b_i, c_i)$ iff $y_i=1$
This is illustrated in Figure~\ref{fig:fvs}.

Now we will show that each connected component of $G$ is a clique iff
{\sc Disjointness} answers NO. In the first direction suppose that
each connected component of $G$ is a clique. Then there cannot exist
$i\in [n]$ such that $x_i = 1 = y_i$ because then the vertices $a_i,
b_i, c_i$ will form a connected component which is a $P_3$; this contradicts  the assumption that each connected component of $G$ is a clique. In reverse direction, suppose that {\sc Disjointness} answers NO. Then it is easy to see that each connected component of $G$ is either $P_1$ or $P_2$, both of which are cliques.

Therefore, any $p$-pass (randomized) streaming algorithm that can determine whether a graph is a cluster graph (i.e., answers the question with $k=0$) gives a communication protocol for {\sc Disjointness}, and hence requires $\Omega(n/p)$ space. \footnote{It is easy to see that the same proof also works for the problems of \textsc{Cluster Edge Deletion}$(k)$ where we can delete at most $k$ edges and \textsc{Cluster Editing}$(k)$ where we can delete/add at most $k$ edges}
\end{proof}

\subsubsection{Lower Bound for \textsc{Minimum Fill-In}}
\label{sec:fill-in}
We now show a lower bound for the \textsc{Minimum Fill-In}$(k)$ problem.
\begin{definition}
We say that $G$ is a \emph{chordal} graph if it does not contain an induced cycle of length $\geq 4$.
\end{definition}

\begin{center}
\noindent\framebox{\begin{minipage}{0.95\columnwidth}
\textsc{Minimum Fill-In}$(k)$ \hfill \emph{Parameter}: $k$ \\
\emph{Input}: An undirected graph $G=(V,E)$ and an integer $k$\\
\emph{Question}: Does there exist a set $X$ of at most $k$ edges such that $(V, E\cup X)$ is a chordal graph?
\end{minipage}}
\end{center}

\begin{figure}[t]
\centering
\includegraphics[width=\textwidth]{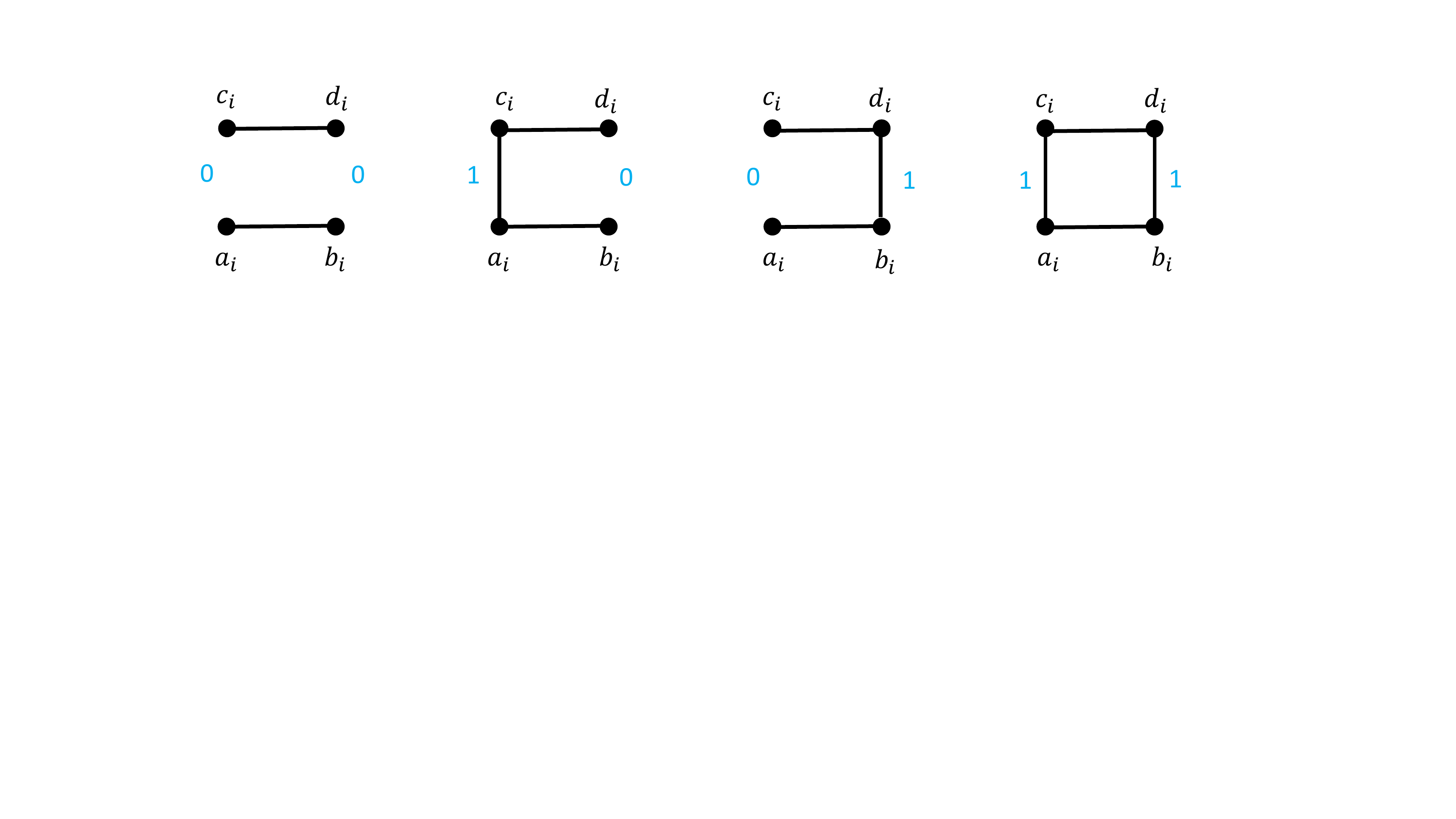}
\vspace{-58mm}
\caption{Gadget for reduction from {\sc Disjointness} to \textsc{Minimum Fill-In}}
\label{fig:cluster}
\vspace{5mm}
\end{figure}

\begin{theorem}
For the \textsc{Minimum Fill-In}$(k)$ problem, any $p$-pass (randomized) streaming algorithm needs $\Omega(n/p)$ space
\label{thm:cvd:lower}.
\end{theorem}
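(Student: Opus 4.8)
The plan is to reduce from the \textsc{Disjointness} communication problem, in the same spirit as the lower bounds of Section~\ref{sec:lb-deletion} and Section~\ref{sec:lb-editing}, exploiting the fact that $G$ is chordal iff it contains no induced cycle of length at least $4$. It therefore suffices to build, for each coordinate $i\in[n]$, a vertex-disjoint four-vertex gadget which is an induced $4$-cycle exactly when $x_i=y_i=1$ and is a forest otherwise. Since the gadgets are vertex disjoint, every induced cycle of $G$ lives inside a single gadget, so $G$ is chordal iff there is no $i$ with $x_i=y_i=1$, i.e.\ iff \textsc{Disjointness} answers NO. Equivalently, the instance of \textsc{Minimum Fill-In}$(k)$ with $k=0$ is a YES instance iff \textsc{Disjointness} answers NO (complementation is free).

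Concretely, for each $i\in[n]$ I would introduce vertices $a_i,b_i,c_i,d_i$ and always insert the two edges $(b_i,c_i)$ and $(d_i,a_i)$ — these form a $2P_2$; Alice inserts $(a_i,b_i)$ iff $x_i=1$ and Bob inserts $(c_i,d_i)$ iff $y_i=1$. If at most one of $x_i,y_i$ equals $1$, gadget $i$ has at most three edges on four vertices and is a path, hence chordal; if $x_i=y_i=1$, gadget $i$ is the cycle $a_i-b_i-c_i-d_i-a_i$, which is an induced $C_4$ and thus not chordal. This is exactly the ``$H'=H\setminus 2P_2$ with $H=C_4$'' pattern of Section~\ref{sec:lb-deletion}, now applied to the induced-subgraph obstruction defining chordality; Figure~\ref{fig:cluster} depicts the gadget.

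Finally I would run the standard streaming-to-communication simulation: Alice feeds her insertions into the hypothetical $p$-pass, $s(n)$-space algorithm for \textsc{Minimum Fill-In}$(0)$, ships the $O(s(n))$-bit memory state to Bob, who finishes the pass and returns the state, and so on for $p$ passes. This yields a randomized $O(p)$-round protocol for \textsc{Disjointness} with $O(p\cdot s(n))$ communication, so by the $\Omega(n/p)$ bound of~\cite{nisan} we get $s(n)=\Omega(n/p)$. Since any algorithm for the parameterized problem in particular decides the $k=0$ case, the bound holds for \textsc{Minimum Fill-In}$(k)$.

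The only delicate point — and the place a careless gadget would break things — is verifying that inserting a single optional edge never creates an induced cycle of length $\ge 4$ anywhere in $G$. Here this is immediate: with only one of the two optional edges present a gadget spans four vertices and at most three edges, hence is a forest, and distinct gadgets are vertex disjoint so cannot jointly host an induced cycle. The remaining bookkeeping (checking that the figure matches the description and that the pass/round accounting is consistent with the $\Omega(n/p)$ in the statement) is routine and parallels the earlier lower-bound proofs.
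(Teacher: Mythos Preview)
Your proposal is correct and follows essentially the same approach as the paper: both reduce from \textsc{Disjointness} via $n$ vertex-disjoint four-vertex gadgets that become an induced $C_4$ precisely when $x_i=y_i=1$, and both test the $k=0$ instance. The only difference is a relabeling of which two edges of the $C_4$ are fixed and which two are Alice's/Bob's optional insertions (the paper fixes $(a_i,b_i),(c_i,d_i)$ and makes the ``diagonal'' pair optional, whereas you fix $(b_i,c_i),(d_i,a_i)$); note also that the paper's text contains a typo---its optional edge ``$(b_i,c_i)$'' should read $(b_i,d_i)$ for the claimed $C_4$ to form---so your version is in fact the cleaner write-up of the same argument.
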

\begin{proof}
We reduce from the {\sc Disjointness} problem in communication complexity. Given an instance of {\sc Disjointness}, we create a graph $G$ on $4n$
vertices as follows. For each $i\in [n]$ we create vertices $a_i, b_i, c_i, d_i$ and insert edges $(a_i, b_i)$ and $(c_i, d_i)$. Insert the edge
$(a_i, c_i)$ iff $x_i=1$ and the edge $(b_i, c_i)$ iff $y_i=1$. This is illustrated in Figure~\ref{fig:cluster}.

Now we will show that $G$ is chordal iff {\sc Disjointness} answers NO. In the first direction suppose that $G$ is chordal. Then there cannot exist $i\in [n]$ such that $x_i = 1 = y_i$ because then the vertices $a_i, b_i, c_i, d_i$ will form an induced $C_4$; contradiction to the fact that $G$ is chordal. In reverse direction, suppose that {\sc Disjointness} answers NO. Then it is easy to see that each connected component of $G$ is either $P_2$ or $P_3$. Hence, $G$ cannot have an induced cycle of length $\geq 4$, i.e., $G$ is chordal.

Therefore, any $p$-pass (randomized) streaming algorithm that can determine whether a graph is a chordal graph (i.e., answers the question with $k=0$) gives a communication protocol for {\sc Disjointness}, and hence requires $\Omega(n/p)$ space.
\end{proof}

\subsubsection{Lower Bound for \textsc{Cograph Vertex Deletion}}
We now show a lower bound for the \textsc{Cograph Vertex Deletion}$(k)$ problem.
\begin{definition}
We say that $G$ is a \emph{cograph} if it does not contain an induced $P_4$.
\end{definition}

\begin{center}
\noindent\framebox{\begin{minipage}{0.95\columnwidth}
\textsc{Cograph Vertex Deletion}$(k)$ \hfill \emph{Parameter}: $k$\\
\emph{Input}: An undirected graph $G=(V,E)$ and an integer $k$\\
\emph{Question}: Does there exist a set $X\subseteq V$ of size at most $k$ such that $G\setminus X$ is a cograph?
\end{minipage}}
\end{center}

\begin{figure}[t]
\centering
\includegraphics[width=\textwidth]{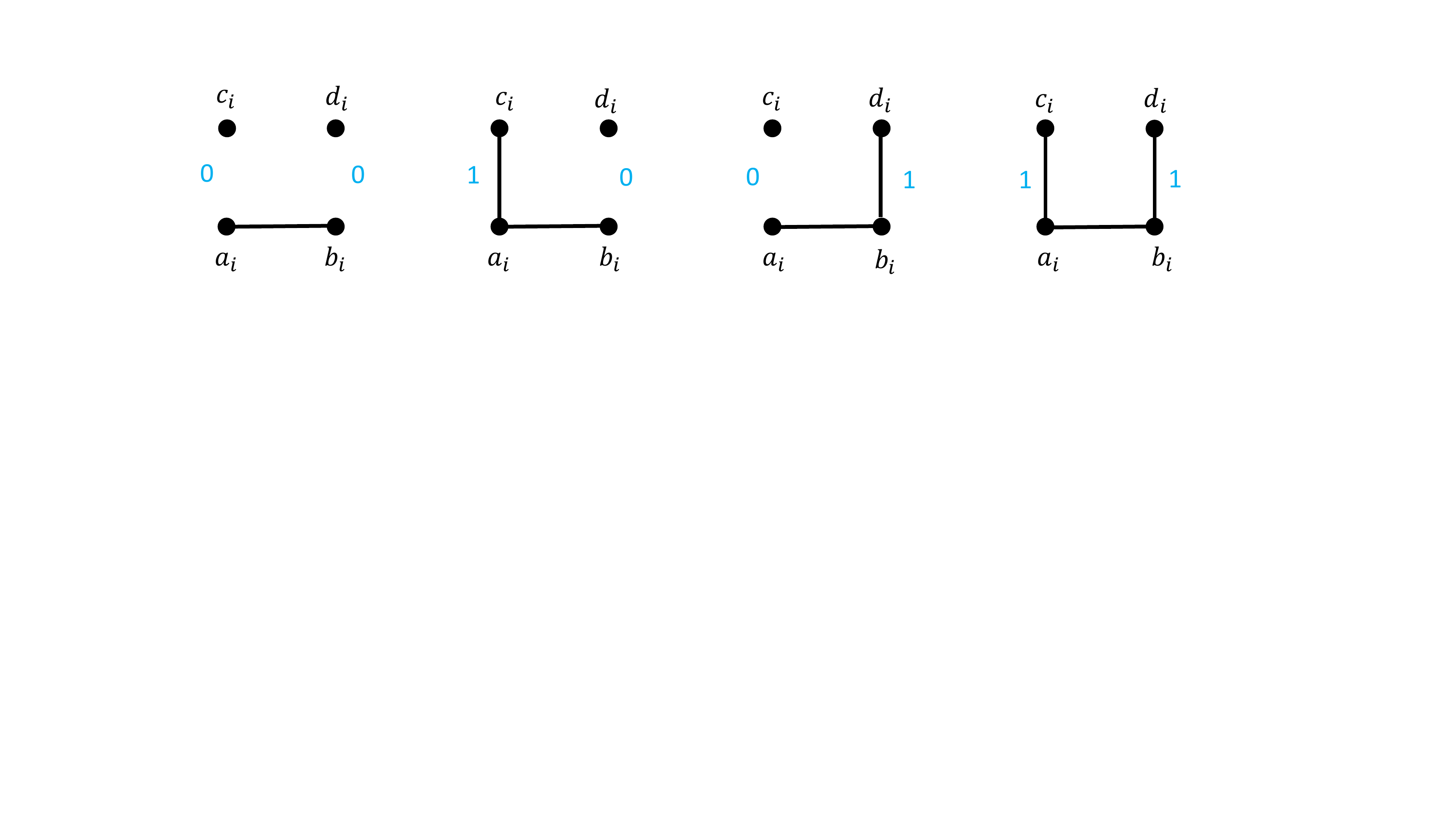}
\vspace{-58mm}
\caption{Gadget for reduction from {\sc Disjointness} to \textsc{Cograph Vertex Deletion}}
\label{fig:covd}
\vspace{5mm}
\end{figure}

\begin{theorem}
For the \textsc{Cograph Vertex Deletion}$(k)$ problem, any $p$-pass (randomized) streaming algorithm needs $\Omega(n/p)$ space
\label{thm:covd:lower}.
\end{theorem}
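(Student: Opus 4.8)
The plan is to follow the same template as the preceding lower bounds and reduce from the \textsc{Disjointness} communication problem, for which there is an $\Omega(n/p)$ lower bound on $p$-round randomized communication~\cite{nisan}. Given instances $x,y\in\{0,1\}^n$, I would build a graph $G$ on $4n$ vertices consisting of one gadget per coordinate $i\in[n]$, on vertices $a_i,b_i,c_i,d_i$, so that $G$ is a cograph (contains no induced $P_4$) if and only if \textsc{Disjointness} answers NO. The players split the insertions so that Alice's edges depend only on $x$ and Bob's only on $y$: Alice then runs the putative streaming algorithm for \textsc{Cograph Vertex Deletion}$(k)$ with $k=0$, exchanging the memory contents with Bob across the $p$ passes, and the algorithm's answer tells Bob whether $G$ is a cograph.

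For the $i$-th gadget I would always insert the edge $(b_i,c_i)$, insert $(a_i,b_i)$ iff $x_i=1$, and insert $(c_i,d_i)$ iff $y_i=1$ (say Alice inserts the $(b_i,c_i)$ and $(a_i,b_i)$ edges and Bob inserts the $(c_i,d_i)$ edges, matching Figure~\ref{fig:covd}). When $x_i=y_i=1$ the gadget is exactly the path $a_i-b_i-c_i-d_i$; since distinct gadgets occupy disjoint vertex sets with no edges between them, and there are no chords $a_ic_i$, $a_id_i$, or $b_id_i$, this is an induced $P_4$ of $G$. When at most one of $x_i,y_i$ equals $1$, the gadget contributes only components of size at most three (a $P_3$ plus an isolated vertex, or a $P_2$ plus two isolated vertices).

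Correctness then follows in both directions. If \textsc{Disjointness} answers YES, witnessed by some $i$ with $x_i=y_i=1$, then $G$ contains the induced $P_4$ on $\{a_i,b_i,c_i,d_i\}$, so $G$ is not a cograph and no $X$ with $|X|\le 0$ makes it one. Conversely, if \textsc{Disjointness} answers NO, every connected component of $G$ lies inside a single gadget and is therefore $P_1$, $P_2$, or $P_3$, none of which contains an induced $P_4$, so $G$ is a cograph and $X=\emptyset$ is a valid $k=0$ solution. Hence any $p$-pass streaming algorithm deciding \textsc{Cograph Vertex Deletion}$(k)$ with $k=0$ in space $s$ yields a $p$-round protocol for \textsc{Disjointness} using $O(sp)$ bits, and the $\Omega(n/p)$ communication bound gives $s=\Omega(n/p)$.

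The step requiring the most care is the structural claim in the NO case that no induced $P_4$ can appear: this rests on the fact that the $n$ gadgets are vertex-disjoint and mutually non-adjacent, so every connected component of $G$ lives inside one four-vertex gadget, after which a short case check over the (at most) four possibilities for $(x_i,y_i)$ shows each component has at most three vertices. One must also note that the always-present edge $(b_i,c_i)$ cannot help form an induced $P_4$ together with a different gadget, precisely because there are no inter-gadget edges. The remaining ingredient, the translation between the communication protocol and the streaming algorithm, is identical to the previous reductions in this section.
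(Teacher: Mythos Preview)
Your proposal is correct and follows essentially the same approach as the paper: a per-coordinate four-vertex gadget with one fixed ``middle'' edge and one conditional edge on each side, so that $x_i=y_i=1$ produces an induced $P_4$ and otherwise every component has at most three vertices. The only difference is a trivial relabeling of the vertices (the paper's fixed edge is $(a_i,b_i)$ with conditional edges $(a_i,c_i)$ and $(b_i,d_i)$, yielding the path $c_i\!-\!a_i\!-\!b_i\!-\!d_i$), so your gadget may not literally match Figure~\ref{fig:covd}, but the argument is identical.
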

\begin{proof}
We reduce from the {\sc Disjointness} problem in communication complexity. Given an instance of {\sc Disjointness}, we create a graph $G$ on $4n$
vertices as follows. For each $i\in [n]$ we create vertices $a_i, b_i, c_i, d_i$ and insert edges $(a_i, b_i)$. Insert the edge
$(a_i, c_i)$ iff $x_i=1$ and the edge $(b_i, c_i)$ iff $y_i=1$. This is illustrated in Figure~\ref{fig:covd}.

Now we will show that $G$ has an induced $P_4$ if and only if {\sc Disjointness} answers YES. In the first direction suppose that $G$ has an induced $P_4$. Since each connected component of $G$ can have at most 4 vertices, it follows that the $P_4$ is indeed given by the path $c_{i}-a_{i}-b_{i}-d_{i}$ for some $i\in [n]$. By construction of $G$, this implies that $x_i = 1 = y_i$, i.e., {\sc Disjointness} answers YES. In reverse direction, suppose that {\sc Disjointness} answers YES. Then there exists $j\in [n]$ such that the edges $(a_{i}, c_{i})$ and $(b_i, d_i)$ belong to $G$. Then $G$ has the following induced $P_4$ given by $c_{j}-a_{j}-b_{j}-d_{j}$.

Therefore, any $p$-pass (randomized) streaming algorithm that can determine whether a graph is a cograph (i.e., answers the question with $k=0$) gives a communication protocol for {\sc Disjointness}, and hence requires $\Omega(n/p)$ space.
\end{proof}

\subsubsection{\textsc{Bipartite Colorful Neighborhood}}

We now show a lower bound for the \textsc{Bipartite Colorful Neighborhood}$(k)$ problem.

\begin{center}
\noindent\framebox{\begin{minipage}{0.95\columnwidth}
\textsc{Bipartite Colorful Neighborhood}$(k)$ \hfill \emph{Parameter}: $k$\\
\emph{Input}: A bipartite graph $G=(A, B,E)$ and an integer $k$\\
\emph{Question}: Is there a 2-coloring of $B$ such that there exists a set $S\subseteq A$ of size at least $k$ such that each element of $S$ has at least one neighbor in $B$ of either color?
\end{minipage}}
\end{center}

\begin{figure}
\centering
\includegraphics[width=\textwidth]{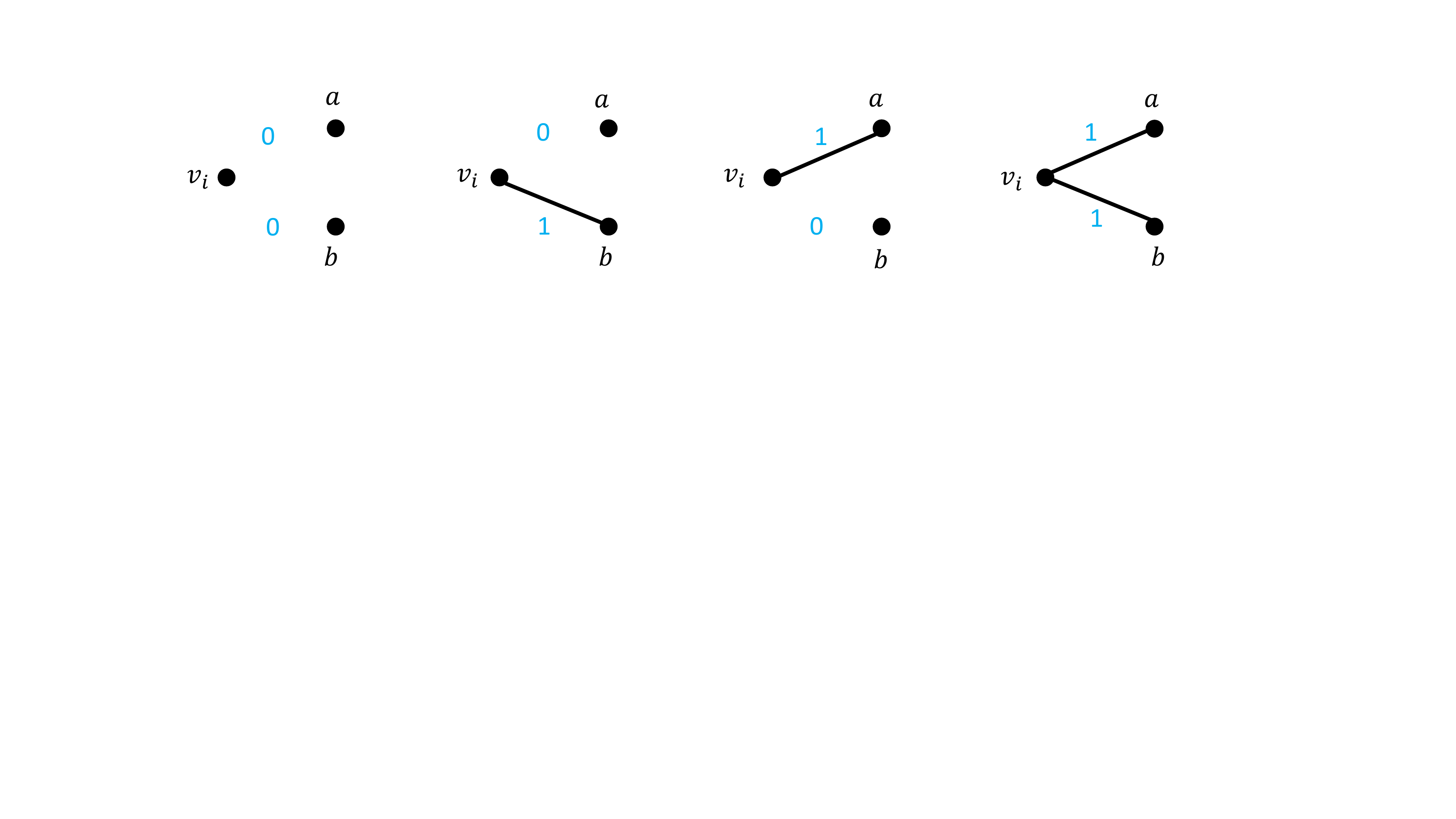}
\vspace{-58mm}
\caption{Gadget for reduction from {\textsc Disjointness} to \textsc{Bipartite Colorful Neighborhood}}
\label{fig:bip-color}
\vspace{5mm}
\end{figure}

\begin{theorem}
For the \textsc{Bipartite Colorful Neighborhood}$(k)$ problem, any $p$-pass (randomized) streaming algorithm needs $\Omega(n/p)$ space
\label{thm:fvs:lower}.
\end{theorem}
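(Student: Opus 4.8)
The plan is to reduce from the {\sc Disjointness} communication problem, exactly as in the preceding lower bounds of this section, taking the parameter $k=1$. Recall {\sc Disjointness} on strings $x,y\in\{0,1\}^n$ has randomized communication complexity $\Omega(n/p)$ even with $p$ rounds~\cite{nisan}, so it suffices to encode such an instance as a stream of edge insertions split between Alice (who holds $x$) and Bob (who holds $y$), and show that a small-space $p$-pass algorithm would solve it.

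\textbf{Construction.} I would create one $A$-side vertex $a_i$ for each $i\in[n]$ together with just two $B$-side vertices $\beta_1,\beta_2$ (this is the gadget depicted in Figure~\ref{fig:bip-color}). Alice inserts the edge $(a_i,\beta_1)$ whenever $x_i=1$; Bob inserts the edge $(a_i,\beta_2)$ whenever $y_i=1$. The resulting graph is bipartite with parts $A=\{a_1,\dots,a_n\}$ and $B=\{\beta_1,\beta_2\}$. The claim is that this is a ``yes''-instance of \textsc{Bipartite Colorful Neighborhood}$(1)$ if and only if {\sc Disjointness} answers YES. For the forward direction, if $x_j=y_j=1$ for some $j$ then both $(a_j,\beta_1)$ and $(a_j,\beta_2)$ are present, and coloring $\beta_1$ with color $1$ and $\beta_2$ with color $2$ makes $a_j$ adjacent to a vertex of each color, so $S=\{a_j\}$ is a valid witness of size $k=1$. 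Conversely, if some $2$-coloring of $B$ and some nonempty $S\subseteq A$ form a witness, pick $a_j\in S$: since $a_j$ must have a neighbor of color $1$ and a neighbor of color $2$, it has at least two distinct neighbors in $B$; as its only possible neighbors are $\beta_1$ (present iff $x_j=1$) and $\beta_2$ (present iff $y_j=1$), we conclude $x_j=y_j=1$. Thus any $p$-pass (randomized) streaming algorithm deciding \textsc{Bipartite Colorful Neighborhood}$(k)$ with $k=1$ yields an $O(p)$-round communication protocol for {\sc Disjointness} in which every message is the current $s$-bit memory state (Alice runs the algorithm on her insertions, forwards the state, Bob continues, and they exchange the state across the passes), so the total communication is $O(ps)$ bits and hence $s=\Omega(n/p)$.

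\textbf{Where the (minor) care is needed.} There is no deep obstacle here; the only point that requires attention is that the $2$-coloring of $B$ is existentially quantified, i.e.\ chosen to assist the solver. The construction neutralizes any such exploitation because (i) in the YES case the single relevant pair of $B$-vertices can be colored arbitrarily, in particular with two distinct colors, and (ii) in the NO case no coloring can help, since a ``colorful'' vertex of $A$ requires two distinct neighbors in $B$ while no $a_i$ ever has two neighbors. So all that really has to be checked is that the gadget computes the bit-wise AND of $x$ and $y$ and that $k$ stays constant. A variant that uses private $B$-vertices $b_i,b_i'$ for each $i$ (Alice inserting $(a_i,b_i)$ iff $x_i=1$, Bob inserting $(a_i,b_i')$ iff $y_i=1$, and coloring $b_j$ and $b_j'$ oppositely) works identically, at the cost of $\Theta(n)$ rather than two vertices in $B$; I mention it in case it matches the intended gadget more closely, but the two-anchor version already suffices.
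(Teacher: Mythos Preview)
Your proposal is correct and matches the paper's proof essentially verbatim: the paper also reduces from {\sc Disjointness} with $k=1$, takes $A=\{v_1,\dots,v_n\}$ and $B=\{a,b\}$, and has Alice insert $(a,v_i)$ iff $x_i=1$ and Bob insert $(b,v_i)$ iff $y_i=1$, arguing exactly your equivalence. Your ``two-anchor'' version is the intended gadget (Figure~\ref{fig:bip-color}), so the alternative with private $B$-vertices is unnecessary.
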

\begin{proof}
We reduce from the {\sc Disjointness} problem in communication complexity.
Given an instance of {\sc Disjointness}, we create a graph $G$ on $n+2$
vertices as follows. For each $i\in [n]$ we create a vertex $v_i$. In addition, we have two special vertices $a$ and $b$. For each $i\in [n]$, insert the edge
$(a, v_i)$ iff $x_i=1$ and the edge $(b, v_i)$ iff $y_i=1$. Let $A=\{v_1, v_2, \ldots, v_n\}$ and $B=\{a,b\}$. This is illustrated in Figure~\ref{fig:bip-color}.

Now we will show that $G$ answers YES for \textsc{Bipartite Colorful Neighborhood}$(k)$ with $k=1$ iff {\sc Disjointness} answers YES. In the first direction suppose that $G$ answers YES for \textsc{Bipartite Colorful Neighborhood}$(k)$ with $k=1$. Let $v_i$ be the element in $A$ which has at least one neighbor in $B$ of either color. Since $|B|=2$, this means that $v_i$ is adjacent to both $a$ and $b$, i.e., $x_i = 1 = y_i$ and hence {\sc Disjointness} answers YES. In reverse direction, suppose that {\sc Disjointness} answers YES. Hence, there exists $j\in [n]$ such that $x_j = 1= y_j$. This implies that $v_j$ is adjacent to both $a$ and $b$. Consider the 2-coloring of $B$ by giving different colors to $a$ and $b$. Then $S=\{v_j\}$ satisfies the condition of having a neighbor of each color in $B$, and hence $G$ answers YES for \textsc{Bipartite Colorful Neighborhood}$(k)$ with $k=1$.

Therefore, any $p$-pass (randomized) streaming algorithm that can solve \textsc{Bipartite Colorful Neighborhood}$(k)$ with $k=1$ gives a communication protocol for {\sc Disjointness}, and hence requires $\Omega(n/p)$ space.

\end{proof}


\end{document}